\newif\ifanon\anonfalse
\newif\ifnotes\notesfalse

\documentclass[11pt,letter]{article}

\usepackage{breakcites}
\usepackage[utf8]{inputenc}
\usepackage[T1]{fontenc}

\usepackage{beton}
\usepackage{fullpage}

\usepackage{tikz,pgfplots}
\usepackage[margin=1in]{geometry}
\usetikzlibrary{calc}
\usepackage{caption}
\usepackage{amsmath}
\usepackage{amsfonts}
\usepackage{amssymb}
\usepackage{mathtools, xparse}
\usepackage[dvipsnames]{xcolor}
\usepackage{graphicx}
\usepackage{rotating}
\usepackage{braket}
\usepackage{url}
\usepackage{bm}
\usepackage{float}
\usepackage{soul}
\usepackage{multirow}
\definecolor{DarkBlue}{RGB}{0,0,150}
\definecolor{NotSoDarkBlue}{RGB}{15,15,210}
\definecolor{DarkRed}{RGB}{150,0,0}
\definecolor{DarkGreen}{RGB}{0,100,0}
\usepackage[pdfstartview=FitH,pdfpagemode=UseNone,colorlinks,linkcolor=DarkRed,filecolor=blue,citecolor=DarkRed,urlcolor=DarkRed,pagebackref,breaklinks]{hyperref}
\usepackage[ruled, algosection]{algorithm2e}
\usepackage{booktabs}
\usepackage[table]{xcolor}
\usepackage{adjustbox}

\usepackage{microtype}      
\usepackage{libertine}
\usepackage{libertinust1math}

\usepackage{orcidlink}

\newcommand{\commC}{\mathsf{CC}}

\newcommand{\ZZ}{\mathbb{Z}}

\newcommand{\FF}{\mathbb{F}}

\newcommand{\N}{\mathbb{N}}
\newcommand{\poly}{\mathsf{poly}}

\allowdisplaybreaks

\DeclarePairedDelimiterX{\bigket}[1]{\bigg\lvert}{\bigg\rangle}{\,#1}

\usepackage{bbm}

\usepackage{framed}
\usepackage{multicol}
\usepackage[capitalise]{cleveref}
\usepackage{enumitem}

\usepackage{amsthm}
\theoremstyle{plain}
\newtheorem{theorem}{Theorem}[section]{\bfseries}{\itshape}
\newtheorem{informal-theorem}[theorem]{Informal Theorem}{\bfseries}{\itshape}
{\bfseries}{}
\newtheorem{lemma}[theorem]{Lemma}{\bfseries}{\itshape}
{\bfseries}{\itshape}
\theoremstyle{remark}
\newtheorem{remark}[theorem]{Remark}{\itshape}{}
\theoremstyle{definition}
{\bfseries}{}
{\bfseries}{}
\newtheorem{corollary}[theorem]{Corollary}{\bfseries}{\itshape}
\newtheorem{claim}[theorem]{Claim}{\bfseries}{\itshape}

\Crefname{prop}{Proposition}{Propositions}
\Crefname{fact}{Fact}{Facts}
\Crefname{game}{Game}{Games}
\Crefname{defn}{Definition}{Definitions}
\Crefname{remark}{Remark}{Remarks}
\Crefname{const}{Construction}{Constructions}

\usepackage{amsthm}
\newtheorem{definition}[theorem]{Definition}

\numberwithin{theorem}{section}
\numberwithin{conjecture}{section}
\numberwithin{problem}{section}

\AddToHook{env/lemma/begin}{\crefalias{theorem}{lemma}}
\AddToHook{env/corollary/begin}{\crefalias{theorem}{corollary}}
\AddToHook{env/definition/begin}{\crefalias{theorem}{definition}}
\AddToHook{env/proposition/begin}{\crefalias{theorem}{proposition}}

\AddToHook{env/remark/begin}{\crefalias{theorem}{remark}}

\newcommand{\ord}[1]{\ensuremath{#1^{\mathrm{th}}}}

\newcommand{\ans}{\mathsf{ans}}
\newcommand{\Query}{\mathsf{Query}}

\newcommand{\Reconstruct}{\mathsf{Reconstruct}}

\newcommand{\supp}{\mathsf{supp}}

\newcommand{\query}{\mathsf{qu}}
\newcommand{\qu}{\query}
\newcommand{\state}{\mathsf{st}}
\newcommand{\id}{j}

\newcommand{\Z}{\mathbb{Z}}

\newcommand{\SAns}{\mathsf{Answer}}

\newcommand{\Answer}{\SAns}

\newcommand{\DB}{\mathsf{DB}}
\newcommand{\db}{\mathsf{DB}}

\newcommand{\rgets}{\mathrel{\mathpalette\rgetscmd\relax}}
\newcommand{\rgetscmd}{\ooalign{$\leftarrow$\cr
    \hidewidth\raisebox{1.2\height}{\scalebox{0.5}{\ \rm R}}\hidewidth\cr}}%

\renewcommand{\vec}[1]{\mathbf{#1}}

\newcommand{\rv}{\mathbf{r}} 
 \newcommand{\uv}{\mathbf{u}}
\newcommand{\vecv}{\mathbf{v}} \newcommand{\wv}{\mathbf{w}}
\newcommand{\xv}{\mathbf{x}} \newcommand{\yv}{\mathbf{y}}
\newcommand{\zv}{\mathbf{z}} 
 
\renewcommand{\paragraph}[1]{\medskip\noindent\textbf{#1}}

\newcommand{\calO}{\ensuremath{\mathcal{O}}}

\newcommand{\calR}{\ensuremath{\mathcal{R}}}

\newcommand{\CRT}{\mathsf{CRT}}

\newcommand{\zo}{\ensuremath{\{0,1\}}} 

\newcommand{\TC}{\mathsf{TC}}
\newcommand{\NL}{\mathsf{NL}}
\newcommand{\NC}{\mathsf{NC}}
\newcommand{\CL}{\mathsf{CL}}
\newcommand{\Poly}{\mathsf{P}}

\newcommand{\GetSt}{\mathsf{GetState}}
\newcommand{\answer}{\mathsf{ans}}

\newcommand{\ctrl}{\mathsf{ctrl}}
\newcommand{\detqu}{\mathsf{DetQuery}}
\newcommand{\ndb}{n_\mathsf{DB}}

\newcommand{\tep}{$\mathsf{TreeEval}$\xspace}
\newcommand{\tephl}{$\mathsf{TreeEval}_{h,\ell}$\xspace}
\newcommand{\tephlmathmode}{\mathsf{TreeEval_{h, \ell}}}

\newcommand{\CT}[3]{\mathsf{CatTimeSpace}\left[#1,#2,#3\right]}

\newcommand{\eps}{\varepsilon}
\newcommand{\ra}{\rightarrow}
\newcommand{\la}{\leftarrow}

\newcommand{\tO}{\widetilde{O}}

\newcommand{\encode}{\mathsf{encode}}

\newcommand{\classL}{$\mathsf{L}$\xspace}
\newcommand{\Time}{\mathsf{Time}}
\newcommand{\Space}{\mathsf{Space}}

\newcommand{\ts}{\texorpdfstring}

\title{Catalytic Tree Evaluation From Matching Vectors}

\ifanon
    \author{}
    \date{}
\else
    \author{Alexandra Henzinger\thanks{Email: \texttt{ahenz@csail.mit.edu}. Supported by the NSF Graduate Research Fellowship under Grant No. 2141064, a MIT EECS Great Educators fellowship, gifts from Apple, Google, and Meta, and NSF CNS-2054869.}
    \and
    Edward Pyne\thanks{Email: \texttt{epyne@mit.edu}. Supported by the NSF Graduate Research Fellowship.}
    \and
    Seyoon Ragavan\thanks{Email: \texttt{sragavan@mit.edu}. Supported in part by Jane Street.}}
    \date{}
\fi 

\date{\today}

\begin{document}
\maketitle

\begin{abstract}
    We give new algorithms for tree evaluation (S. Cook et al. TOCT 2012) in the catalytic-computing model (Buhrman et al. STOC 2014). Two existing approaches aim to solve tree evaluation (\tep) in low space: on the one hand, J. Cook and Mertz (STOC 2024) give an algorithm for \tep running in super-logarithmic space $O(\log n\log\log n)$ and super-polynomial time $n^{O(\log\log n)}$. On the other hand, a simple reduction from \tep to circuit evaluation, combined with the result of Buhrman et al. (STOC 2014), gives a catalytic algorithm for \tep running in logarithmic $O(\log n)$ free space and polynomial time, but with polynomial catalytic space.

    We show that the latter result can be improved. We give a catalytic algorithm for \tep with logarithmic $O(\log n)$ free space, polynomial runtime, and subpolynomial $2^{\log^\eps n}$ catalytic space (for any $\epsilon > 0$). Our result opens a new line of attack on putting \tep in logspace, and  
    immediately implies an improved simulation of time by \textit{catalytic} space, by the reduction of Williams (STOC 2025).
    
    Our catalytic \tep algorithm is inspired by a connection to matching-vector families and private information retrieval, and improved constructions of (uniform) matching-vector families would imply improvements to our algorithm. 
\end{abstract}

\thispagestyle{empty}
\newpage
\thispagestyle{empty}
\tableofcontents
\newpage 
\pagenumbering{arabic}

\section{Introduction}
\label{sec:intro}

Space, unlike time, is a reusable resource. 
{\em Catalytic computing} studies the task of computing given a large amount of full memory, which may be modified but must appear unchanged at the end of the computation. 
A series of breakthrough results, initiated by Buhrman, Cleve, Kouck\'y, Loff, and Speelman~\cite{BCKLS14}, show that access to such full, {\em catalytic} space enables using less {\em free} space to solve problems than is otherwise known. For example, Buhrman et al.~\cite{BCKLS14} evaluate any uniform $\TC^1$ circuit with $O(\log n)$ free space and $\poly(n)$ catalytic space, while this task---which contains a complete problem for non-deterministic logspace ($\NL$)---is not known to be solvable with $o(\log^{2} n)$ free space alone. 

The power of catalytic space remains wide open. For instance, letting catalytic logspace ($\CL$) be the set of problems solvable in $O(\log n)$ free space and $\poly(n)$ catalytic space, it is consistent with current knowledge both that all polynomial-time computations can be solved in $\CL$ (i.e., $\mathsf{P} \subseteq \CL$) or that $\CL$ is no stronger than the class of computations done in $O(\log^2 n)$ parallel time  (i.e., $\CL\subseteq \NC^2$).
One critical question to understanding this power is: how small can catalytic space be for it to be useful? 

\medskip

In this paper, we draw a new connection between catalytic computing and information-theoretic cryptography. Building on this connection,  we obtain a new catalytic algorithm for the tree evaluation problem~\cite{CMWBS12}, abbreviated \tep. 
\tep has become a central protagonist in  complexity theory for its roles in (1)~aiming to separate polynomial time $\mathsf{P}=\Time(\poly(n))$ and logarithmic space $\mathsf{L}=\Space(O(\log n))$~\cite{CMWBS12,CM23} and (2)~giving new tradeoffs between space and time~\cite{Wil25}, namely that $\Time(T(n)) \subseteq \Space(\sqrt{T(n) \log T(n)})$.
In~\cref{sec:tep}, we show that  {\em logarithmic} free space and {\em subpolynomial} catalytic space suffice to solve \tep:

\begin{theorem}\label{thm:intro}
    For every $\epsilon > 0$, \tep can be solved in $O(\log n)$ free space, $2^{O(\log^{\epsilon} n)}$ catalytic space, and $\poly(n)$ runtime. (See~\cref{cor:eps} for a parameterized statement.)
\end{theorem}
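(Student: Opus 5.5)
We will follow the Buhrman et al.\ ``compress-by-reversible-register-updates'' paradigm, but make each register a \emph{short encoding} of a node's value rather than a full truth table. Recall that in \tep we have a complete binary tree of height $h$; each leaf holds a value in $\{0,1\}^\ell$, and each internal node $v$ carries a function $f_v:\{0,1\}^{2\ell}\to\{0,1\}^\ell$ applied to its children's values; here $n\approx 2^h\cdot 2^{2\ell}\ell$. The natural reversible approach keeps one catalytic register per level. It adds (over a group) the value of node $v$ into the register at $v$'s level, using the children's registers, which themselves are computed, used, and then uncomputed in a ``clean/dirty'' fashion. The standard trick to make this work with a dirty catalytic tape is to represent $v$'s value through a polynomial/linear encoding. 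Computing $f_v(x_L,x_R)$ when only $\tau_L + x_L$ and $\tau_R+x_R$ are available, with unknown catalytic offsets $\tau$, is then done by querying $f_v$ at several shifted points and recombining. This is exactly the structure of an information-theoretic PIR/locally decodable code: from a few ``masked'' queries one recovers the desired entry.

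\textbf{Key steps.} (1) Fix a matching-vector family over $\mathbb{Z}_m$ for composite $m$ (product of $r$ primes, Grolmusz-type), of size $N\ge 2^{2\ell}$ and dimension $k=2^{O(\ell^{\epsilon'})}$-ish, giving an MV-based encoding of the truth table of $f_v$ as a function on $\mathbb{Z}_m^k$. Evaluating it at $\mathbf{u}_x+\lambda\mathbf{z}$ for $\lambda$ ranging over a small set lets one decode $f_v(x)$ via a fixed linear combination, with the ``matching'' property killing all other entries. This is Efremenko/Dvir--Gopi style. (2) Replace the PIR server's randomness by catalytic content: each level register stores a vector $\mathbf{z}\in\mathbb{Z}_m^k$ (this is the catalytic space, $k\log m$ bits per level). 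Perform reversible updates of the form ``add $\mathbf{u}_{\mathrm{val}(v)}\cdot c$'' into the parent's register using the decoder. (3) Prove by induction on height a lemma: there is a procedure that, for any initial catalytic contents, adds $\mathbf{u}_{\mathrm{val}(v)}$ (suitably scaled) into a designated register and restores all other registers. The recursion calls the child procedures a constant number of times, $2^{O(r)}$ for the decoding, together with their inverses. (4) Account resources: free space is $O(h+\ell)=O(\log n)$ for the recursion stack plus counters; runtime is $(\text{calls per level})^{h}\cdot \poly$. The runtime is polynomial provided the number of queries per level is $2^{O(r)}$ and $r$ is constant, since $2^{O(rh)}=n^{O(r)}$. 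Catalytic space is $h\cdot k\log m$, where $k$ is subexponential in $\ell$ with exponent depending on $r$; choosing $r=O(1/\epsilon)$ gives $k=2^{O(\ell^{1/r}\cdots)}\le 2^{O(\log^\epsilon n)}$.

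\textbf{Main obstacle.} The hard step is (3): the decoding must be \emph{exact} and \emph{reversible} using adversarial (not random) masks. The PIR correctness must therefore hold for every $\mathbf{z}$, not merely on average, and the recombination coefficients must be invertible in $\mathbb{Z}_m$. I would first try the $S$-decoding polynomial approach of Dvir--Gopi: design the decoder so that the sum over $\lambda$ cancels every term with inner product in $S$, identically in $\mathbf{z}$. For uniformity, I would also use an explicitly log-space-constructible Grolmusz family.
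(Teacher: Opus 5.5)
Your high-level architecture matches the paper's. You use masked catalytic registers carrying matching-vector encodings of node values, a Grolmusz family with $t=O(1/\epsilon)$ odd primes so that each node makes $2^{O(t)}$ child calls, and the same resource accounting. But the step you defer as the ``main obstacle'' is the one-level procedure (the paper's \cref{thm:TEPonelevel}), which is the entire technical content, and your sketch of it does not work as stated.

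The first problem is pairing. You encode $f_v$ with a family of size $2^{2\ell}$ indexed by the concatenated input $a\|b$. However, the children only add multiples of $\mathbf{u}_a$ and $\mathbf{u}_b$ for $\ell$-bit values, and for a generic family nothing turns these into a query $\mathbf{z}+\lambda\mathbf{u}_{a\|b}$. (Note also that the reachable points are $\mathbf{z}+\lambda\mathbf{u}_x$, not $\mathbf{u}_x+\lambda\mathbf{z}$.) The paper keeps two registers $\mathbf{x},\mathbf{y}$ and shifts them independently by $\CRT(b_1,\ldots,b_t)\,\mathbf{u}_a$ and $\CRT(c_1,\ldots,c_t)\,\mathbf{u}_b$. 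It then streams over $(r,s)$ using the product $\langle\mathbf{x},\mathbf{v}_r\rangle\cdot\langle\mathbf{y},\mathbf{v}_s\rangle$, which is an implicit tensor family. Because the masks are shifted separately, mask-dependent cross terms survive. The per-prime signed sum over the four shifts is then governed by $f_i(X)=\sum_{b,c\in\{0,1\}}(-1)^{b+c}X^{(g_1+b)(g_2+c)\bmod p_i}$, and one must prove this polynomial is nonzero (\cref{lemma:polynonzero}, via $f_i'(1)\equiv 1$).

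The second problem is that matching-vector decoding is not a ``fixed linear combination, identically in $\mathbf{z}$.'' The decoding polynomial does kill all other entries for every mask. But the surviving term is $\DB_{i^*}$ times a factor depending on $\langle\mathbf{z},\mathbf{v}_{i^*}\rangle$, for example $\prod_j\omega_j^{\langle\mathbf{z},\mathbf{v}_{i^*}\rangle}(1-\omega_j)$ in the basic $2^t$-server scheme. That term also lives in a ring with an element of multiplicative order $m$ (such as $\FF_q$ with $m\mid q-1$), which $\mathbb{Z}_m$ does not have. So your decoder does not yield the exact update $\mathbf{z}\gets\mathbf{z}+c\,\mathbf{u}_{f(a,b)}$ over $\mathbb{Z}_m$ that the parent must consume as its next query.

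The paper's fix has three steps. First it computes the mask-dependent state $g_1=\langle\mathbf{x},\mathbf{v}_a\rangle$ and $g_2=\langle\mathbf{y},\mathbf{v}_b\rangle$ by the catalytic inner-product trick (\cref{lemma:innerproduct}). Then it chooses a nonzero monomial $\alpha_iX^{\beta_i}$ of each $f_i$. Finally it adds $\pm\gamma^*(\prod_i\alpha_i)^{-1}\mathbf{w}_{f(r,s)}$ exactly when the product of inner products is $\equiv\CRT(\beta_1,\ldots,\beta_t)\pmod m$; here $\alpha_i\in\{\pm1,\pm2\}$ is a unit since the primes are odd. Obtaining $g_1,g_2$ this way needs oracle calls that add the dual vectors $\mathbf{v}_a,\mathbf{v}_b$. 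That is why the paper's induction lets each node add $\gamma\,\mathbf{w}_{\mathrm{val}(v)}$ for $\mathbf{w}\in\{\mathbf{u},\mathbf{v}\}$ (and the raw value at the root), a stronger hypothesis than your ``adds $\mathbf{u}_{\mathrm{val}(v)}$.''
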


 By contrast, the threshold-circuit-evaluation
algorithm of Buhrman et al.~\cite{BCKLS14} implies a catalytic algorithm for \tep with superpolynomially more catalytic space: it uses $O(\log n)$ free space,
$\poly(n)$ catalytic space, and polynomial runtime. At the same time, a brilliant algorithm of Cook and Mertz~\cite{CM23} solves \tep with no catalytic space, but  with more free space and superpolynomially larger runtime than \cref{thm:intro}: it requires $O(\log n \cdot \log\log n)$ space and $n^{O(\log \log n)}$ time. 
Our algorithm admits smooth tradeoffs that shrink the catalytic space at the expense of increasing the free space and the runtime.
For example, it implies the following new result:
\begin{theorem}\label{thm:intro-balance}
    \tep can be solved in $O(\log n \sqrt{\log \log n} \log \log \log n)$ free space, $\exp(\exp(O(\sqrt{\log \log n})))$ catalytic space, and $n^{O(\sqrt{\log \log n})}$ time. (See~\cref{cor:sqrt} for a parameterized statement.)
\end{theorem}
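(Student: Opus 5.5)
The plan is to derive \cref{thm:intro-balance} from the general parameterized trade-off, \cref{cor:sqrt}, which will be proved in \cref{sec:tep}. The only extra work is choosing the parameters and checking the arithmetic. Schematically, the general algorithm has two free parameters. The first is a modulus $m=p_1\cdots p_r$ with $r$ distinct prime factors, which governs the matching-vector family used to encode the query. The second is a block height $d$: the tree of height $h=O(\log n)$ is cut into $h/d$ layers of subtrees of height $d$.

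Each layer is processed by a catalytic subroutine with the following costs.
\begin{itemize}
\item The catalytic tape is a table of $\mathbb{Z}_m$-valued shares indexed by a matching-vector family of size $N=2^{d}$ (one entry per leaf-to-root path of the block).
\item The matching-vector bounds give catalytic space roughly $\exp\bigl(\exp(O((\log N)^{1/r}(\log\log N)^{1-1/r}))\bigr)$ times $\mathrm{poly}(\ell)$.
\item Free space is $O(r\log m+\ell)$ per level of recursion.
\item Time is polynomial in $N$ and in the per-query cost.
\end{itemize}
Composing the $h/d$ layers recursively, as in the Cook--Mertz style restore-and-reuse of catalytic memory, multiplies the free space by the recursion depth and the time by the fan-out per level. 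Catalytic space is reused across levels, so it is not multiplied.

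I will therefore assume \cref{cor:sqrt} gives bounds of the form
\[
\text{free space } O\Bigl(\tfrac{\log n}{d}\,(d + r\log r)\Bigr),\qquad
\text{catalytic space } \exp\bigl(\exp(O(d^{1/r}\log d))\bigr),\qquad
\text{time } n^{O(r)} .
\]
With that form in hand, I set $r=\Theta(\sqrt{\log\log n})$ and pick $d$ so that the inner exponent $d^{1/r}$ becomes $2^{O(\sqrt{\log\log n})}$. That requires $\log d\approx r\sqrt{\log\log n}=\Theta(\log\log n)$, i.e.\ $d=(\log n)^{\Theta(1)}$, capped at the height $O(\log n)$ so that a single block covers the tree up to constant factors.

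Substituting these choices, the catalytic bound becomes $\exp(\exp(O(\sqrt{\log\log n})))$, since $d^{1/r}=2^{O(\log\log n/\sqrt{\log\log n})}$. The runtime becomes $n^{O(r)}=n^{O(\sqrt{\log\log n})}$. The free space becomes $O(\log n\cdot r\log r)=O(\log n\sqrt{\log\log n}\,\log\log\log n)$, where the factor $\log r=\Theta(\log\log\log n)$ comes from storing residues modulo the $r$ primes, each of size $O(r\log r)$ bits.

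The main obstacle is getting the exact form of the parameterized bounds right: how $r$ and $d$ enter the free space, and whether the $\log\log N$ correction in the matching-vector size is dominated. My first attempt is to plug the explicit Grolmusz-type construction into the parameterized statement and verify that every lower-order term is absorbed. If the stated free space instead carries an extra $\log\log n$ factor from addressing the catalytic table, I would rebalance by shrinking $d$ until $\log(\text{catalytic space})$ is only $2^{O(\sqrt{\log\log n})}$, which keeps the free-space term within the claimed bound.
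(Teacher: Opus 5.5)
Your top-level plan, reading \cref{thm:intro-balance} off from \cref{cor:sqrt} by substitution, is the right one. But the form you assume for \cref{cor:sqrt} is not what the paper proves, and your final bounds do not follow from it.

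The algorithm has no block-height parameter:
\begin{itemize}
\item \cref{thm:tep} recurses one tree level at a time, so the call stack has depth $h$.
\item The matching-vector family has size $2^\ell$ because it is indexed by node labels $v_u\in\{0,1\}^\ell$.
\item Each stack frame keeps $O(\log m)$ bits ($g_1,g_2$, the $\alpha_i,\beta_i$, and the bits $b_i,c_i$) alive across its $2^{O(t)}$ oracle calls.
\end{itemize}
This gives free space $O(\ell+h\log m)$, time $\poly(2^{\ell+ht})$, and catalytic space $O(d\log(dm))$, where $d$ is the matching-vector dimension. Now plug in \cref{item:manyprimes} of \cref{cor:mvspecific}: take the $t\approx\sqrt{\log\ell}$ smallest odd primes, so $\log m=O(t\log t)$ and $d=\exp(\exp(O(\sqrt{\log \ell})))$. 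That is exactly \cref{cor:sqrt}. \cref{thm:intro-balance} then needs no balancing at all. Since $n=2^h\cdot\ell\cdot 2^{2\ell}$, both $h\le\log n$ and $\ell\le\log n$. Monotonicity then turns $O(\ell+h\sqrt{\log\ell}\log\log\ell)$, $\exp(\exp(O(\sqrt{\log\ell})))$ and $\poly(2^{\ell+h\sqrt{\log\ell}})$ into the three claimed bounds.

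Even under your own assumed formulas, the arithmetic fails in two places.
\begin{itemize}
\item \emph{Free space.} With $d=\Theta(\log n)$, or indeed any $d\ge r\log r$, the expression $O(\tfrac{\log n}{d}(d+r\log r))$ is $O(\log n)$, not $O(\log n\cdot r\log r)$. That would be a strictly stronger theorem than the one claimed, which signals that the assumed form is wrong. The true $h\log m$ term arises because the $O(\log m)$-bit state is held simultaneously in all $h$ frames of the stack, not once per block.
\item \emph{Catalytic space.} Your formula $\exp(\exp(O(d^{1/r}\log d)))$ with $d^{1/r}=2^{\Theta(\sqrt{\log\log n})}$ is a triple exponential. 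The correct fact is that the matching-vector dimension is only singly exponential: by \cref{thm:mvgeneral} it is $\exp(O((mw)^{1/t}\log w))$. With $w=\Theta(\ell/\sqrt{\log\ell})$ and $t\approx\sqrt{\log\ell}$, one has $(mw)^{1/t}=2^{O(\sqrt{\log\ell})}$.
\end{itemize}
So your final numbers match the target only because they were chosen to match it. Your single block of height $\Theta(\log n)$ with $N=2^{\Theta(\log n)}$ is really standing in for the label length $\ell\le\log n$. Drop the block decomposition and use the actual bounds of \cref{thm:tep} and \cref{cor:mvspecific}; the proof then becomes a two-line substitution.
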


\paragraph{Implications.}
 Our new catalytic algorithm for tree evaluation has two major direct implications:
\begin{enumerate}[leftmargin=*]
    \item {\em A new line of attack on Tree Evaluation.} Our technique can be thought of as a generalization of the techniques in the breakthrough result of Cook and Mertz, allowing us to apply technical tools developed in a cryptographic context. We view this as providing a direct line of attack on \tep (as we discuss later, improvements to these technical tools now directly imply better algorithms) and an indirect one (by building connections between tree evaluation, catalytic computation, and other areas).

    \item {\em New time-space tradeoffs in the catalytic-space model.} Following the seminal reduction of Williams~\cite{Wil25}, our result implies the following corollary:

    \begin{corollary}\label{cor:TSintro}
    For every $\epsilon > 0$, a time $T = T(n)$ multitape Turing Machine can be decided in $O(\sqrt{T})$ free space, $2^{O(T^{\epsilon})}$ catalytic space, and $2^{O(\sqrt{T})}$ time.
    \end{corollary}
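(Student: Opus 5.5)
We obtain \cref{cor:TSintro} by combining \cref{thm:intro}, in its parameterized form (\cref{cor:eps}), with Williams' reduction~\cite{Wil25}. That reduction turns a time-$T$ multitape Turing machine into a \tep instance. The tape is split into blocks of length $b$, and the computation is split into time segments of length $b$. The instance is an implicitly described tree of height $h = O(T/b)$ with fan-in $O(1)$ (the number of tapes is constant), and each node carries a value of $\ell = O(b)$ bits. Every node function can be evaluated in $O(b)$ space and $2^{O(b)}$ time by simulating the machine for $b$ steps. Choosing $b = \sqrt{T}$ balances the parameters, giving $h, \ell = O(\sqrt{T})$ up to the small overhead in Williams' argument. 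The implicit tree has $2^{O(h)}$ leaves and nodes of $O(\ell)$ bits, so it plays the role of an instance of size $N = 2^{O(h + \ell)} = 2^{O(\sqrt{T})}$. One caveat: in the $\sqrt{T\log T}$ bound of Williams, the $\log T$ comes from Cook--Mertz. We must check that our algorithm avoids that loss, so that the free space is genuinely $O(h+\ell)$ rather than $O((h+\ell)\log(h+\ell))$.

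The key step is to invoke the parameterized statement \cref{cor:eps} directly in terms of $h$ and $\ell$, rather than treating the input as an explicit string. We expect it to give, for every $\epsilon>0$:
\begin{itemize}
\item free space $O(h + \ell)$, in the regime where $\ell$ is comparable to $h$;
\item catalytic space $2^{O((h+\ell)^{\epsilon})}\cdot \poly(\ell)$;
\item time $2^{O(h+\ell)}$ times the cost of the node oracles.
\end{itemize}
Since $\log N = \Theta(h+\ell)$, this is just \cref{thm:intro} with $\log n$ replaced by $h+\ell$. The free-space bound needs the algorithm to run in logarithmic space relative to an implicit instance. That holds if every input access is a query to a leaf value or a node function, which we answer by recomputing the machine simulation in $O(b)$ space. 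Substituting $h,\ell = O(\sqrt{T})$ then gives free space $O(\sqrt{T})$, catalytic space $2^{O(T^{\epsilon/2})} \le 2^{O(T^{\epsilon})}$, and time $2^{O(\sqrt{T})}\cdot 2^{O(\sqrt T)} = 2^{O(\sqrt{T})}$. The polynomial-time guarantee of \cref{thm:intro} is what keeps the time $2^{O(\sqrt{T})}$, whereas Cook--Mertz alone would give $2^{O(\sqrt{T}\log T)}$.

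The main obstacle is this interface between the implicit instance and the algorithm. We must verify three things. First, the catalytic algorithm of \cref{cor:eps} uses the node functions only as black boxes. Second, it needs no more than $O(\ell)$ bits of extra workspace per evaluation. Third, the oracle evaluations may freely use and restore free space. Williams' node evaluations also require reconstructing a node's time segment and block indices from its path in the tree, which takes $O(h)$ bits and fits within the budget. I would first check that the recursive structure of our algorithm stores only $O(1)$ bits of free state per level plus $O(\ell)$ bits per active oracle call. If instead \cref{cor:eps} carries a $\log$ factor in $\ell$, I would rebalance with $b=\sqrt{T/\log T}$ and absorb the loss into the constants, or into a slightly smaller $\epsilon$.
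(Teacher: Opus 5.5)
Your proposal is correct and follows the paper's route. The paper obtains this corollary directly by feeding Williams' reduction (\cref{thm:Wil}, with $h,\ell=O(\sqrt{T})$) into \cref{cor:eps}, whose bounds of $O(h+\ell)$ free space, $\exp(O(\ell^{\epsilon}))$ catalytic space and $\poly(2^{h+\ell})$ time carry no $\log$ factor, so your fallback of rebalancing $b$ is unnecessary. The one step you leave implicit is that Williams' tree has constant fan-in $r>2$ while \cref{cor:eps} is stated for binary trees; the paper handles this with \cref{lem:fanintwo}, which costs only constant factors in $h$ and $\ell$.
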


    Relative to Williams' result, this shrinks the free space by a factor of $O(\sqrt{\log T})$ and shrinks the runtime by superpolynomial factors (from $2^{O(\sqrt{T\log T})}$ to $2^{O(\sqrt{T})}$ time), at the expense of introducing a large catalytic tape. Separately, combining the reduction of Williams with the catalytic circuit-evaluation algorithm of Buhrman et al.~\cite{BCKLS14} gives that $\Time(T)$ can be decided with $O(\sqrt{T})$ free space and $2^{O(\sqrt{T})}$ catalytic space. Compared to this result, we shrink the catalytic tape by a superpolynomial factor. 
    We state this result along with some additional tradeoffs (including for circuit evaluation) in~\cref{sec:timespace}.
\end{enumerate}

The core machinery driving our new catalytic \tep algorithm is a family of {\em matching vectors}~\cite{DBLP:journals/combinatorica/Grolmusz00}; informally, this is a collection of vectors whose inner products fall in a restricted set. Matching vectors give rise to the best known locally-decodable codes in the low query-complexity regime~\cite{DGY11} and to the most communication-efficient information-theoretic private-information-retrieval schemes in the few-server regime~\cite{DG16,GKS25}. Our result is the first use of such techniques in catalytic computing. We view this connection as the main contribution of the paper, and we are optimistic for further applications and connections between cryptography, coding theory, and catalytic computing.

For one direct example, there is a large gap between known upper and lower bounds for matching vector families, and improvements to constructions of these families would directly improve our algorithm---even up to the point of speeding up Cook-Mertz to polynomial time, with no loss in space. In particular:

\begin{theorem}[Informal]
	Suppose that (sufficiently uniform) matching-vector families with parameters not ruled out by known lower bounds exist. Then \tep can be solved with $O(\log n \log\log n)$ free space and polynomial time. (See~\Cref{rmk:betterMV}.)
\end{theorem}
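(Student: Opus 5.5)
The plan is to instantiate the paper's catalytic \tep framework with a hypothetical matching-vector family whose parameters meet the known lower bounds. The goal is free space $O(\log n\log\log n)$ and polynomial time. Recall the structure of Cook--Mertz. Each node of the height-$h$ tree with $\ell$-bit values is handled by encoding its children's values into a low-degree polynomial. That polynomial is evaluated at a few points, and its value is recovered by interpolation, adding to catalytic registers and later subtracting. This costs $O(\ell)$ space per level, plus $O(\log \ell)$-size counters per level (the $\log\log n$ factor) and a multiplicative time blowup per level. In the matching-vector version, the child values are encoded as in a matching-vector PIR or LDC. The $2^\ell$ possible child-pair indices are mapped to vectors $u_i\in\ZZ_m^k$ with $\langle u_i,v_j\rangle=0$ iff $i=j$. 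Recovering a value then takes about $r$ "queries", where $r$ is the number of servers or queries, i.e.\ the number of recursive evaluations at the child. The time at each level thus multiplies by $r$, giving total time $r^{h}\cdot\poly$. Each level's catalytic register has size about $k\log m$, and each level needs $O(\log r+\log m)$ free bits of bookkeeping.

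First, I would restate precisely what the paper's main algorithm (the one giving \cref{cor:eps}) requires of the family. It needs a size-$N=2^{2\ell}$ family in $\ZZ_m^k$ with $r$-query decoding, and uniformity: each $u_i$ must be computable in space $O(\log k+\ell)$, and the same for the decoding coefficients. Second, I would plug in the parameters permitted by known lower bounds. For constant modulus $m$ (say composite, with $r=O(1)$ queries), the known lower bounds only force $k\ge \Omega(\log N)$ up to modest factors, with no superpolynomial dimension requirement. So suppose a uniform family with $k=O(\ell)$, $m=O(1)$, and $r=O(1)$ exists. Then each level uses $O(\ell)$ register bits and $O(1)$ counter bits, and time multiplies by $O(1)$ per level, giving $2^{O(h)}=\poly(n)$. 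Third, I would remove the catalytic tape. Since the registers total $O(h\ell)$ bits, they can be set to zero in free space; a zero tape is trivially restored. This yields free space $O(h\ell)=O(\log n)$ times the per-level overhead. I would honestly allow an $O(\log\log n)$ overhead from indices and field-arithmetic bookkeeping to match the claimed bound. The result is $O(\log n\log\log n)$ free space and polynomial time.

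The main obstacle is the second step. The known lower bounds and the PIR/LDC tradeoffs are stated in terms of $N$, $k$, $m$, and the number of queries $r$. I have to check that a family not excluded by those bounds has $r$ constant, or at most with $r^h=\poly(n)$, when $h=\log n/\ell$. In particular, this must hold simultaneously with per-level register size $O(\ell\log\log n)$. I would first try $m$ a product of two primes, using the fact that lower bounds for $\ZZ_6$ allow $k=O(\log N)$. I would then choose $r$ as given by the S-decoding polynomial degree for $m=6$. Finally, I would verify that the paper's cancellation argument (summing over the $r$ shifts so that the catalytic garbage cancels) works verbatim when the modulus is not prime. This relies on the polynomial-decoding identity from the matching-vector construction rather than on field interpolation.
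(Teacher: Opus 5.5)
There is a genuine gap, and it lies in your space accounting rather than in the step you flag as the main obstacle. You model the construction as keeping a separate register of about $k\log m$ bits for each level, so that the registers ``total $O(h\ell)$ bits''. You then assert $O(h\ell)=O(\log n)$. This is false: $\log n=h+2\ell+\log\ell$, so $h\ell$ can be $\Theta(\log^2 n)$, and a fresh $\Theta(\ell)$-bit register per level gains nothing over the trivial recursive algorithm.

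The point of the paper's recursion, as of Cook--Mertz, is that there are only three global registers $\xv,\yv,\zv\in\Z_m^d$, shared by every node. To answer an oracle call, the parent swaps $\xv$ (or $\yv$) with $\zv$, and the child adds its result into that same space. So only $O(\log m)$ bits of local state are stacked per level. Consequently \cref{thm:tep} runs in space $O(d\log m+\ell+h\log m)$ and time $\poly(2^{\ell+ht})$, once the registers are zero-initialized rather than catalytic. You correctly note that a zero tape is trivially restorable; this also avoids the $\log(dm)$ overhead of \cref{rmk:bitrep}.

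The second problem is the parameter choice, and with it the source of the $\log\log n$. For constant $m$, the lower bound in the footnote to \cref{rmk:betterMV} (conditional on PFR, which is now proved) forces $d=\Omega(\ell\log\ell)$ for a family of size $2^\ell$. So $k=O(\ell)$ is precisely what known lower bounds exclude. The paper assumes $d=O(\ell\log\ell)$, with $t$ and the odd primes $p_i$ constant. The bound above then becomes $O(\ell\log\ell+h)=O(\log n\log\log n)$ space and $\poly(2^{\ell+h})=\poly(n)$ time. The $\log\log n$ factor is the dimension lower bound, not ``indices and bookkeeping'' overhead that you allow in order to match the target. Had $d=O(\ell)$ been available, correct accounting would give $O(\log n)$ space, which should have signalled the inconsistency.

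The concerns you list as the main obstacle are already handled by \cref{thm:TEPonelevel}. Each node makes $2^{O(t)}$ recursive calls, which is a constant. The cancellation is carried out over composite $m=p_1\cdots p_t$ via CRT. Only a family of size $2^\ell$ is needed, not $2^{2\ell}$: the pair $(a,b)$ is handled through the product $\langle\xv,\vecv_r\rangle\langle\yv,\vecv_s\rangle$ rather than by a family indexed by pairs. One caveat is that the primes must be odd, because the coefficients $\alpha_i\in\{\pm1,\pm2\}$ must be invertible mod $m$. So $m=6$ does not plug in directly, whereas e.g.\ $m=15$ does.
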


\subsection{Overview}
\label{sec:intro:overview}

\paragraph{The tree evaluation problem.} We follow the presentation of Goldreich~\cite{Gol25}. 
The tree evaluation problem,  \tephl, is the following task~\cite{CMWBS12}: given a binary tree of height $h$,\footnote{The tree evaluation problem was initially defined with tree fanin $r \geq 2$. For clarity, we focus on the $r=2$ case; there is a simple reduction from any constant $r$ to $r=2$ (see \cref{lem:fanintwo}) that suffices for the application of~\cite{Wil25}.} where
\begin{itemize}
    \item each leaf node is indexed by $u\in \zo^h$ and labeled by a value $v_u \in \zo^\ell$ and
    \item each internal node is indexed by $u\in \zo^{<h}$ and labeled by a function $f_u : \zo^\ell\times \zo^\ell \to \zo^\ell$,
\end{itemize}
evaluate the tree in a bottom-up manner by setting each internal node's value to be the evaluation of its function on its children's values, and output the value of the root node. In other words, output the value $v_\emptyset$ such that $v_u = f_u(v_{u 0}, v_{u1})$ for all $u \in \zo^{<h}$.
We note that the input length to the \tephl problem is thus $n=2^h \cdot \ell \cdot 2^{2\ell}$.

Our tree evaluation algorithm draws inspiration from information-theoretic cryptography, which we unpack for the remainder of this overview.
Before that, we note that our final construction is presented in~\cref{sec:tep} and that the presentation therein is completely self-contained.

\paragraph{Inspiration from cryptography.} Our improved algorithm for \tep takes inspiration from  cryptography.
This new connection is based on the following view: both catalytic computing and cryptographic protocols work by operating on {\em masked} values. In particular, catalytic computation requires intermediate results to be written onto the catalytic tape, meaning these results are stored and accessed while masked by the arbitrary contents of the catalytic tape $\tau$. (Though it is tempting to overwrite the catalytic tape's content $\tau$, this would not be a reversible transformation, likely breaking the contract of being able to restore the tape to $\tau$ at the end of the computation.) Similarly, cryptography shows how to privately outsource a computation to untrusted parties by masking the computation's inputs with randomness.

We make this connection more precise in the special case of \tephl and information-theoretic private information retrieval (PIR).
A PIR protocol~\cite{CGKS95,CGKS98} is defined with respect to a database $\DB \in \calR^{\ndb}$, which holds $n_\DB$ records that are each elements in some ring $\calR$, and a number of servers $s \geq 2$. The protocol allows a user to read a record from the database $\DB$, which is stored on the $s$ servers, without revealing to any of the servers what record was read. Informally, a PIR protocol consists of three algorithms:
\begin{enumerate}
    \item $\Query(i \in [n_\DB]) \to  \query_1, \dots, \query_s$, which the user runs on the index $i$ that it wants to read, to produce $s$ PIR  queries, each of which is sent to one of the $s$ PIR servers.

    \item $\Answer(\DB, \query) \to \answer$, which each server runs on the database $\DB$ and on the PIR query $\query$ it received, to produce a PIR answer $\ans$.

    \item $\Reconstruct(i, \answer_1, \dots, \answer_s) \to \calR$, which the user runs on its index $i$ and on the PIR answers from each server, to recover the \ord{i} record of the database $\DB$.
\end{enumerate}
The {\em privacy} requirement in PIR guarantees that the marginal distribution of each $\query_j$, for $j \in [s]$, is independent of the index $i$ queried.
In many PIR schemes~\cite{CGKS95,DBLP:conf/crypto/BeimelIM00,WY05}, this is achieved by 
additively masking the index $i$ (encoded as an element of some vector space) with uniformly-sampled randomness.

\paragraph{Connecting tree evaluation to private information retrieval.}
Consider a particular node $u$ in the \tep tree. The algorithm's task at this node is to evaluate the function $f_u(v_{u0}, v_{u1})$ or, equivalently, to retrieve the entry corresponding to index $v_{u0} || v_{u1}$ from the truth table of $f_u$ (which consists of $2^{2 \ell}$ records in $\{0,1\}^\ell$).
The challenge is that the algorithm must do this {\em without} seeing either of $v_{u0}, v_{u1}$ in the clear: to save space, the algorithm stores these values on the catalytic tape and accesses them masked by the tape's arbitrary initial contents.

This work starts by taking the view that the Cook-Mertz algorithm for \tep~\cite{CM23} can be seen as solving this task by making white-box use of an $s$-server PIR protocol. Very roughly, this use of PIR  allows for retrieving any entry in the truth table of the function $f_u$ by making $s$ calls to the $\Answer$ algorithm---in a way that none of these $s$ calls gets to see the index being retrieved. The \tep procedure runs each of these $s$ calls to $\Answer$ (simulating each of the $s$ PIR servers) in sequence and, in between, makes recursive calls on the child nodes to prepare the corresponding PIR queries (output by the $\Query$ algorithm) on the catalytic tape. Finally, using the $\Reconstruct$ algorithm, the \tep procedure can recover the value $f_u(v_{u0}, v_{u1})$ on the catalytic tape---as always, masked by the catalytic tape's initial contents.

This mapping from private information retrieval to catalytic computation of a function $f_u$  is a conceptual link, intended as an intuitive (rather than a formal) correspondence. Making it precise requires imposing a number of structural requirements on the PIR scheme, which we describe in \cref{sec:pir}.
An immediate distinction (see~\cref{remark:randomness}) is that PIR usually masks queries uniformly at random, while in \tep we must work with a possibly adversarially chosen catalytic tape as the mask. These can be unified by simply restricting attention to PIR schemes that have perfect correctness.
Furthermore, we require a PIR scheme that can be massaged to
\begin{enumerate}
    \item run the $\Answer$ and $\Reconstruct$ methods interleaved with each other to save space,
    \item have the PIR queries $\qu_1, \dots, \qu_s$ each take the form $\tau + \encode_j(i)$ for some random value $\tau$ (which will be our catalytic tape) and some deterministic function $\encode_j$ applied to the index $i$ being queried, for $j \in [s]$, and 
    \item allow the servers to ``concatenate''  PIR queries for values $v_{u0}$ and $v_{u1}$ into a PIR query for $v_{u0} || v_{u1}$.
\end{enumerate}
Then, this mapping from PIR to \tep requires running PIR over a database that is not exactly the truth table of $f_u$, but instead that maps each value in the truth table of $f_u$ to a PIR query for that index (which lets us prepare the PIR query for the next layer of the tree). Nonetheless, we view this connection as a useful abstraction for describing our scheme, and as an open route to obtaining more advances in catalytic computing by building on cryptographic techniques.
Towards this end, in~\cref{sec:pir}, we formally define a new primitive, which we call \emph{catalytic information retrieval}, that allows us to take a unified view of the algorithm of~\cite{CM23} and our work.

\paragraph{Which PIR scheme to use?} The PIR protocol embedded in Cook-Mertz~\cite{CM23} relies on classic Reed-Muller codes~\cite{CGKS95,WY05}. However, we show that this is not the only option: in this work, we make white-box use of the most communication-efficient PIR protocols known to date, based on matching-vector codes~\cite{DG16,GKS25}. 
The advantage of these PIR schemes is that they can achieve subpolynomial communication with only a {\em constant} number of servers, whereas Reed-Muller PIR requires a superconstant number of servers to do so. (In particular,
Cook-Mertz uses Reed-Muller PIR with $s = \poly(\log n_\DB) = \poly(\ell)$ servers.)
The benefit of having fewer servers is that our \tep algorithm requires only a constant number of recursive calls at each level of the tree, 
and so only a constant amount of free space at each node to track this call stack. As a result, we obtain improvements in both the required amount of free space and runtime.
However, a limitation is that matching-vector PIR with a constant number of servers has larger communication than Reed-Muller PIR with $\poly(\log n_\DB)$ servers; correspondingly, our new \tep algorithm requires a larger catalytic tape, which Cook-Mertz does not.

To be more precise, when running \tep over a tree with height $h$ and $\ell$-bit labels using a ``suitable'' PIR protocol (as described above), that has $s$ servers and communication complexity $\commC$, we recover a \tep algorithm that requires:
\begin{itemize}
    \item $O(s)$ recursive calls at each node,
    
    \item $O(h \log s)$ free space to track which recursive call is  being executed at each node in the call stack,

    \item $O(\commC)$ catalytic space to store PIR queries and answers as they are being computed, and 
    \item additional $O(\log \commC + \ell)$ free space, to stream through the database while  answering PIR queries (note that this cost is only incurred at one node at a time).
\end{itemize}
In sum, ignoring low-order terms, the free space of the \tep algorithm would be $O(h \log s + \log \commC + \ell)$, the catalytic space would be $O(\commC)$, and the runtime would be $\poly(s^h \cdot 2^{\ell})$. 
This view recovers the result of Cook and Mertz~\cite{CM23,Gol25} by simply regarding the catalytic space as true space, and using Reed-Muller PIR with $s = \poly(\ell)$ servers and $\commC = O(\log \ell)$ communication, giving a \tep algorithm that requires $O(h \log \ell + \ell)$ space and $\poly(\ell^h \cdot 2^{\ell+h})$ time.

Instead, as mentioned above, our new \tep algorithm  uses low-communication PIR protocols based on matching-vector families. For any constant parameter $t \geq 1$, these schemes use $2^t$ servers, while achieving communication complexity $\commC = \exp(\widetilde{O}((\log n_\DB)^{1/t}))$~\cite{E12,DG16,GKS25}.\footnote{In fact,~\cite{DG16,GKS25} show even better tradeoffs between the communication and the number of servers. \cite{DG16} achieve the stated communication with just $2^{t-1}$ servers and~\cite{GKS25} show that this can be improved for certain numbers of servers using $S$-decoding polynomials~\cite{Chee2011}. Though the results by~\cite{E12} suffice for our purposes, these techniques would yield fine-grained improvements to our theorems.}
As a result, our true space goes down to $O(h+\ell)$ and our time to $\poly(2^{h+\ell})$.
(The $\log \commC$ term in the true space is lower-order.)
The tradeoff is that the catalytic space (equivalent to the PIR scheme's communication) is no longer logarithmic; instead, it grows to be subpolynomial in the input length:  $\exp(\widetilde{O}(\ell^{1/t}))$.
Making this algorithm work requires heavy white-box use of these PIR protocols.
We defer these technical details to~\cref{sec:tep}, and provide some additional discussion on the relation to PIR in~\cref{sec:pir}.
We describe applications to new tradeoffs between time, space, and catalytic space, obtained via the reduction of~\cite{Wil25}, in \cref{sec:timespace}.

\subsection{Related Work}
\label{sec:intro:rel}
\label{sec:bg:rel}

\paragraph{Algorithms for Tree Evaluation.}
There has been extensive work on algorithms~\cite{Bar89,BC92,CMWBS12,CM20,CM22,CM23} and lower bounds~\cite{CMWBS12,Liu13,EMP18,IN19} for tree evaluation, as well as on evaluating classes of circuits in the catalytic model~\cite{BCKLS14,AM25,CP25}. Though prior algorithms for tree evaluation use the catalytic model as an intermediate tool to save space, we give the first {\em catalytic} algorithm for tree evaluation (apart from the algorithm implicit in the work of~\cite{BCKLS14}), showing that subpolynomial catalytic space suffices to use only logarithmic free space. 

\paragraph{Time-Space Tradeoffs.}
The breakthrough result of Williams~\cite{Wil25} improved on the 50 year old result of Hopcroft, Paul, and Valiant~\cite{HPV77} that $\Time(t)\subseteq \Space(O(t/\log t))$. To the best of our knowledge, the relationship between time (and space) and {\em catalytic} space remains wide open, and we are the first to give results in this area. It would even be consistent with current knowledge that $\Poly\subseteq\CL$ -- meaning that we could evaluate any size-$n$ circuit using $O(\log n)$ free space and $\poly(n)$ catalytic space. 

\paragraph{Matching Vectors.} Yekhanin~\cite{Y08} first used matching vectors to obtain new families of locally decodable codes, starting a cascade of subsequent works~\cite{KY09,Efr09,IS10,DGY11,Yek12}. Since then, matching-vector codes have  found applications in low-communication PIR~\cite{DG16,GKS25} and other cryptographic protocols like conditional disclosure of secrets and linear secret sharing~\cite{LBA25}. To date, the best known families of matching vectors are due to Grolmusz~\cite{DBLP:journals/combinatorica/Grolmusz00}, building on work by Barrington, Beigel, and Rudich~\cite{Barrington1994}.

\subsection{Open Questions}
\label{sec:open}

Our work leaves open a number of questions. The most immediate is whether \tep is indeed in classical logspace, or in simultaneous polynomial time and $\log^cn$ space for $c<2$. Building on this work, one approach to showing \tep $\in$ \classL would be to design a procedure for materializing matching vectors ``on the fly'' from a much more succinct representation stored in catalytic space. This could bring down the catalytic-space usage of our algorithm---perhaps all the way to $O(\log n)$ total space.

A second open question is whether known lower bounds on families of PIR schemes or of locally-decodable codes translate to lower bounds on \tep. For example, a number of works~\cite{DBLP:conf/crypto/BeimelIM00} show that, in PIR, the  servers must inherently run in $\Omega(n)$ time to answer PIR queries. It remains open to prove an analogous bound on the runtime of \tep---or, alternatively, to use techniques from the PIR literature for circumventing this bound to speed up \tep.

Finally, we leave open whether it is possible to  port other techniques from cryptography to obtain further improvements in  catalytic computing.

\section{Preliminaries}

\paragraph{Notation.} For an integer $N$, we write $[N]$ to be the set $\{1, 2, \dots, N\}$.
Throughout our presentation, we will use $\mathsf{reg} \gets \mathsf{state}$ to indicate that the register $\mathsf{reg}$ is updated to hold the contents of $\mathsf{state}$.
We will let $\exp(\cdot)$ and $\log(\cdot)$ denote the base 2 exponential and logarithm, though the choice of base will not affect our results.

\subsection{The Catalytic Space Model}
\label{sec:bg:cat}

We first define catalytic machines:
\begin{definition}
    A \textit{catalytic machine} $M$ is defined as a Turing machine in the usual sense---i.e.,~a read-only input tape, a write-only output tape, and a (space-bounded) read-write work tape---with an additional read-write tape known as the \textit{catalytic tape}. Unlike the ordinary work tape, the catalytic tape is initialized to hold an arbitrary string $\tau$, and $M$ has the restriction that for any initial setting of the catalytic tape, at the end of its computation the catalytic tape must be returned to the original state $\tau$. 
\end{definition}

\noindent We parameterize such a catalytic computation by three resources: time, space, and catalytic space.
\begin{definition}
    Let $\CT{C(n)}{S(n)}{T(n)}$ be the class of languages recognized by catalytic machines that, on inputs of size $n \in \N$, use $O(S(n))$ workspace and $O(C(n))$ catalytic space and run in time $O(T(n))$ in the worst case. 
\end{definition}

\begin{remark}[Representations of rings on the catalytic tape]\label{rmk:bitrep}
    Our algorithms (and those of Cook and Mertz and many other works) interpret the catalytic tape as holding a vector in $\Z_m^{O(d)}$, though the definition of catalytic space gives a catalytic tape that consists of bits. This translation requires some care: for example, if we naively represent $\Z_3$ with $2$ bits, the catalytic tape could consist of values which do not correspond to an element of the ring. However, we can deal with this with a standard trick~\cite{CP25} that increases the space to represent each element in $\Z_m$ to $O(\log(dm))$ bits and increases the runtime by an additive $\poly(md)$. If our initial registers are $\tau_1,\ldots,\tau_{O(d)}\in \Z_m$, we search for an offset $\Delta\in \zo^{O(\log dm)}$ such that $\tau_i+\Delta$ represents a valid entry in $\Z_m$ for every $i$. We store $\Delta$ during the computation, then subtract it from each register before halting.
\end{remark}

\subsection{Existing Algorithms for  Tree Evaluation}
\label{sec:bg:tep}

We recall prior algorithms for tree evaluation. The lowest-space procedure for tree evaluation is a brilliant algorithm due to Cook and Mertz:

\begin{theorem}[\cite{CM23}]
    $ \tephlmathmode \in \Space[O(\log(n)\cdot \log\log(n))]$. 
\end{theorem}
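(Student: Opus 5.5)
We will prove the statement by viewing it through the catalytic lens of \cref{sec:intro:overview}: run a recursive catalytic procedure in which the catalytic space is simply ordinary free space initialized to zero, with Reed--Muller-based ``catalytic information retrieval'' at each node. We define a procedure $\mathsf{Eval}(u)$ that, given registers $\tau_u$ holding arbitrary values in $\FF^{\ell}$ (for a field $\FF$ of characteristic $2$ and size $q = \poly(\ell)$ with $q > 2\ell$), transforms them to $\tau_u + v_u$ and touches nothing else irreversibly. Here each bit of $v_u$ is embedded in $\FF$. For a leaf, $\mathsf{Eval}$ just adds $v_u$ from the input.

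For an internal node $u$, let $\hat f_{u,k} : \FF^{2\ell}\to\FF$ denote the multilinear extension of the $\ord{k}$ output bit of $f_u$, which has total degree $2\ell$. Define $g(\alpha) = \hat f_{u,k}(\tau_{u0}+\alpha v_{u0},\, \tau_{u1}+\alpha v_{u1})$, a univariate polynomial in $\alpha$ of degree at most $2\ell < q-1$. Then the standard identity $\sum_{\alpha\in\FF^*} g(\alpha) = -g(0)\cdot[\text{correction}]$ yields a coefficient $c_\alpha$ with $\sum_{\alpha\in\FF^*} c_\alpha\, g(\alpha) = g(1)$. Concretely, use Lagrange interpolation at $2\ell+1$ nonzero points $\alpha_1,\dots,\alpha_s$ with $s = 2\ell+1 = \poly(\ell)$ servers.

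The procedure for $u$ then runs as follows. For each $j\in[s]$, recursively call $\mathsf{Eval}$ on both children with a scaling trick so that the child registers hold $\tau_{u0}+\alpha_j v_{u0}$ and $\tau_{u1}+\alpha_j v_{u1}$. This is achieved by scaling the registers by $\alpha_j^{-1}$ before the call and rescaling afterwards, which is reversible. Next, add $c_j\, \hat f_{u,k}(\cdot)$, evaluated by streaming over the truth table of $f_u$ in $O(\ell + \log q)$ space, into $\tau_{u,k}$ for all $k$. Finally, undo the recursive calls by running them $q-1$ more times, or equivalently by using the inverse $\mathsf{Eval}$. After all $j$, the register holds $\tau_u + v_u$, since $\hat f$ agrees with $f$ on Boolean inputs.

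The main obstacle is making this sum contain no stray $\tau$-only term. The fix is to choose coefficients so that the interpolation outputs $g(1)$ exactly while $g(0)$ does not appear; Lagrange interpolation through nonzero nodes already gives this, so the concern resolves. The remaining work is accounting. Each level stores an index $j\in[s]$ and the reversal state, which costs $O(\log \ell)$ bits per level over $h$ levels. The registers use $O(h\ell\log q)$ bits, but only $O(\ell \log\ell)$ per level are needed if registers are reused along the path, giving $O((h+\ell)\log\ell)$ total. With $h,\ell\le\log n$, this is $O(\log n\log\log n)$ space as claimed.
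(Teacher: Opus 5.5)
There is a genuine gap in the core formula: the variable in your interpolated polynomial is on the wrong term, so the procedure does not compute $f_u(v_{u0},v_{u1})$.

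\textbf{The target value is wrong.} When $\tau$ is arbitrary, $g(1)=\hat f_{u,k}(\tau_{u0}+v_{u0},\tau_{u1}+v_{u1})$. This is not $f_{u,k}(v_{u0},v_{u1})$, because $\tau+v$ is not Boolean, so your closing appeal to ``$\hat f$ agrees with $f$ on Boolean inputs'' does not apply. Note also that $g(1)$ can be obtained with one direct call; if it were the target, no interpolation would be needed. Choosing different coefficients $c_j$ does not help, since no fixed linear combination of the values $g(\alpha)$ equals $\hat f(v)$ for all $\tau$. For example, with $\hat f(x)=x_1+x_1x_2$ you get $g(\alpha)=(\tau_1+\tau_1\tau_2)+\alpha(v_1+\tau_1v_2+\tau_2v_1)+\alpha^2v_1v_2$. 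No fixed combination of these coefficients yields $v_1+v_1v_2$.

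Your ``main obstacle'' paragraph therefore misdiagnoses the issue. The problem is not a stray $g(0)=\hat f(\tau)$ term; every evaluation of $g$ is contaminated by $\tau$.

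\textbf{The fix (Cook--Mertz, and the paper's Section~\ref{sec:pir} sketch).} Put the variable on the garbage instead. Let $P(z)=\hat f_{u,k}(z\tau_{u0}+v_{u0},\,z\tau_{u1}+v_{u1})$, which has degree at most $2\ell$ in $z$ and satisfies $P(0)=f_{u,k}(v_{u0},v_{u1})$. You cannot reversibly multiply a register by $0$, so query only nonzero $\alpha$:
\begin{enumerate}
\item multiply the child registers by $\alpha$ (not $\alpha^{-1}$);
\item let the children add $v_{u0},v_{u1}$;
\item add $c_\alpha P(\alpha)$ into the output register;
\item call the children again to remove $v_{u0},v_{u1}$;
\item multiply the child registers by $\alpha^{-1}$.
\end{enumerate}
Then $P(0)=-\sum_{\alpha\in\FF^*}P(\alpha)$, since $\deg P<q-1$. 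Equivalently, use Lagrange coefficients for the point $0$, or the roots-of-unity average with $\detqu(a,j,\mu)=\omega^{-j}a$ as in the paper.

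\textbf{The space accounting has a matching gap.} $O(\ell\log\ell)$ bits of registers per level over $h$ levels is $O(h\ell\log\ell)$. For $h\approx\ell\approx\log n$ this is $\Theta(\log^2 n\log\log n)$, not $O((h+\ell)\log\ell)$. To reach the claimed bound you need a constant number of $\FF^\ell$ registers (three suffice) shared by the whole recursion:
\begin{itemize}
\item At node $u$, the output register and the two child registers are distinct.
\item Each child's recursive call uses the parent's output register and its sibling's register as its own child registers.
\item This works only because every call's net effect is to add to its designated output register and restore everything else. The paper implements this with the swaps $\xv\leftrightarrow\zv$ and $\yv\leftrightarrow\zv$.
\end{itemize}

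This sharing is exactly what forces registers to hold arbitrary nonzero $\tau$ when a call begins. With fresh zero-initialized registers per node, your $g(1)$ would be correct, but you would be back to naive space. So the two fixes must be made together.

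With $P(0)$ as the target and three shared registers, the space is:
\begin{itemize}
\item $O(\ell\log\ell)$ bits of registers;
\item $O(\log q)=O(\log\ell)$ bits of loop state per level;
\item $O(\ell+\log q)$ temporary bits for streaming $\hat f$ over the truth table.
\end{itemize}
The total is $O((h+\ell)\log\ell)=O(\log n\log\log n)$, because $n=2^h\ell\,2^{2\ell}$.
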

\noindent A further result of~\cite{Stoeckl,Gol25} reduces the space by a $\log\log\log(n)$ factor.

Furthermore, the observation that \tephl can be computed by an unbounded fan-in circuit of depth $O(h)$ and size $\poly(n)$, combined with the catalytic circuit-evaluation algorithm of \cite{BCKLS14}, gives that:
\begin{theorem}[\cite{BCKLS14}]
    $\tephlmathmode \in \CT{n^c}{O(\log n)}{n^c}.$
\end{theorem}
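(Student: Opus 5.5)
The plan is to reduce \tephl to evaluating a polynomial-size, depth-$O(h)$ threshold circuit and then apply the catalytic $\TC^1$ evaluation algorithm of \cite{BCKLS14}, which runs in $O(\log N)$ free space, $\poly(N)$ catalytic space and $\poly(N)$ time on circuits of size $N$ and depth $O(\log N)$. Recall the input length is $n = 2^h\cdot \ell\cdot 2^{2\ell}$, so both $h \le \log n$ and $2\ell \le \log n$.

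\textbf{Step 1: one node as a constant-depth circuit.} For an internal node $u$, each output bit of $f_u(v_{u0},v_{u1})$ is given by a DNF over the $2\ell$ input bits:
\[
\bigvee_{a,b \in \zo^\ell} \Bigl( [v_{u0}=a] \wedge [v_{u1}=b] \wedge f_u(a,b)_k \Bigr).
\]
Each equality test $[v_{u0}=a]$ is an unbounded fan-in AND of $\ell$ literals. The bit $f_u(a,b)_k$ is read directly from the input. So each node becomes a depth-$3$ (or $4$) unbounded fan-in circuit of size $O(\ell\cdot 2^{2\ell}) \le \poly(n)$. It is uniform, since the wiring for index $(a,b)$ just points to the input position of the $k$-th bit of $f_u(a,b)$.

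\textbf{Step 2: compose along the tree.} Stack the node circuits bottom-up. Leaves read $v_u$ from the input, and each internal node's gadget is fed the output wires of its two children's gadgets. The result has total size $2^h\cdot O(\ell 2^{2\ell}) = \poly(n)$ and depth $O(h) = O(\log n)$. AND/OR gates are special cases of threshold gates, so this is a logspace-uniform $\TC^1$ (indeed $\mathsf{AC}^1$) circuit family. Uniformity holds because a gate name $(u,k,a,b,\text{type})$ has $O(\log n)$ bits and its neighbours are computable in logspace.

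\textbf{Step 3: apply \cite{BCKLS14}.} Their theorem gives free space $O(\log N) = O(\log n)$, catalytic space $\poly(N) = n^{c}$ and time $\poly(N) = n^c$ for some constant $c$, which is the claimed membership $\tephlmathmode \in \CT{n^c}{O(\log n)}{n^c}$.

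The only step needing care is checking that the exact form of the \cite{BCKLS14} result applies. That result is usually stated for $\TC^1$ with depth $O(\log N)$ and for decision versions. I would handle the $\ell$-bit root output by running the algorithm once per output bit, which adds an $O(\log \ell)$ free-space counter and a factor $\ell$ in time. I would also confirm that uniformity is preserved, which is the mild technical point. Everything else is routine bookkeeping.
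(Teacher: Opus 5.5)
Your proposal is correct and follows the same route as the paper. The paper's justification is exactly the observation that $\mathsf{TreeEval}_{h,\ell}$ is computed by a uniform, $\poly(n)$-size, depth-$O(h)$ unbounded fan-in circuit, combined with the catalytic circuit-evaluation algorithm of \cite{BCKLS14}. Your per-node DNF construction and your per-output-bit handling spell out details the paper leaves implicit.
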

\begin{remark}
    This result follows from reducing \tephl to a circuit and then evaluating such a circuit in $\CL$. However, the smallest circuit class known to contain \tephl (log-depth unbounded-fanin circuits) also contains a complete problem for non-deterministic logspace $\NL$. Thus, improving the catalytic space of the latter step even to $n^{1-\eps}$ is ruled out by conjectured time-space tradeoffs for $\NL$~\cite{CP25}.
\end{remark}

Finally, we can reduce from TreeEval with constant (but greater than $2$) fanin to tree eval with fanin $2$, at a mild cost to the height.
\begin{lemma}\label{lem:fanintwo}
    There is a space $O(h \log r +\ell r)$-reduction from $\mathsf{TreeEval}_{h,\ell,r}$ to $\mathsf{TreeEval}_{h\lceil \log r\rceil,\ell \cdot \lceil r/2\rceil,2}$.
\end{lemma}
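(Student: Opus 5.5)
We replace each $r$-ary node by a complete binary tree of height $\lceil \log r\rceil$ and carry along, in each new label, the tuple of original child values seen so far. Concretely, for an original internal node $u$ with function $f_u:(\zo^\ell)^r\to\zo^\ell$, we pad the fan-in to $r' = 2^{\lceil \log r\rceil}$ with dummy children, whose values are ignored. We then build a binary gadget of height $\lceil\log r\rceil$ whose leaves are the (images of the) $r'$ children.

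An internal gadget node at depth $\lceil\log r\rceil - k$ from its gadget's leaves covers a block of $2^k$ consecutive children. It outputs the concatenation of the real child values in that block, padded with zeros, as a string in $\zo^{\ell'}$ with $\ell' = \ell\lceil r/2\rceil$. The gadget root is the one exception: instead of concatenating, it applies $f_u$ to the concatenation of its two halves.

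There is a width issue here. The gadget nodes whose block has at most $\lceil r/2\rceil$ children fit within $\ell'$ bits. The only node that would need $r\ell$ bits is the root, and it never stores the full concatenation: its two children each cover at most $r'/2$ slots. However, they cover at most $\lceil r/2\rceil$ real children only if the padding is arranged so that each half holds at most $\lceil r/2\rceil$ real children. We therefore arrange the padding that way, splitting the real children as evenly as possible between the two halves. Each original leaf value $v_u\in\zo^\ell$ is zero-padded to $\ell'$ bits, and the root's output is $f_u$ zero-padded.

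Since every original level becomes $\lceil\log r\rceil$ binary levels, the new tree has height $h\lceil\log r\rceil$. Dummy subtrees are filled with constant leaves so the tree stays complete. Correctness follows by induction up the original tree: the root of each gadget computes exactly $f_u$ applied to the true child values, padded.

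The last step is to check that the reduction is computable in the stated space. An output position is addressed by a node index in $\zo^{\le h\lceil\log r\rceil}$ together with a truth-table entry. We parse the node index into $h$ blocks of $\lceil\log r\rceil$ bits, which recovers the original node $u\in[r]^{<h}$ (or flags a dummy) and the position within its gadget. This needs $O(h\log r)$ space.

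To output a truth-table entry $g(x,y)$ we hold $x,y\in\zo^{\ell'}$ in memory, which costs $O(\ell r)$ space. For a non-root gadget node we output the appropriate concatenation. For a gadget root we look up $f_u$ on the extracted $r$ real values by indexing into the input's truth table for $f_u$, using an $O(r\ell)$-bit index plus a counter over $\ell$ output bits. Writing the output bits in order only needs counters of this size, for a total of $O(h\log r + \ell r)$.

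I expect the main obstacle to be bookkeeping rather than any idea. The real care goes into the even split of real children and into the exact width claim, namely that every stored string fits in $\ell\lceil r/2\rceil$ bits.
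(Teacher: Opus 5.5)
Your proposal is correct and follows the same construction as the paper. Each $r$-ary node becomes a binary gadget of height $\lceil \log r\rceil$ whose non-root nodes concatenate zero-padded child values within $\ell\lceil r/2\rceil$ bits and whose root applies $f_u$. Your balanced split of real children between the two halves, and your dummy-subtree padding to keep the tree complete, make explicit two points the paper's sketch leaves implicit. For example, with $r=5$ a left-justified placement would put $4>\lceil 5/2\rceil$ real children under one child of the gadget root, so the split really is needed.
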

\begin{proof}
    We set $\ell'=\ell \cdot \lceil r/2\rceil$. For every node $u$ in the original tree with children $u_1,\ldots,u_r$, we place a tree gadget with $r$ leaves with height $\lceil \log r\rceil$. We think of the values at leaves $u_1,\ldots,u_r$ as $\ell$-bit strings padded by $0^{\ell \lceil r/2\rceil-\ell}$. All non-root nodes in the gadget tree simply collect the values passed from their children, using that the output is of sufficient length to propagate both. The root node in the gadget, which has $v_{u_1},\ldots,v_{u_r}$ as input, has output $f_u(p_{u_1},\ldots,p_{u_r})$ padded by $0^{\ell\lceil r/2\rceil-\ell}$. This transformation is clearly logspace uniform in the size of the resulting tree, and preserves the value at the root. 
\end{proof}

\subsection{Matching Vector Families}
\label{sec:bg:mv}

We next define matching-vector families and verify some useful facts about them.
Let $p_1, p_2, \ldots, p_t$ be $t$ distinct odd primes.
Let $m = p_1\ldots p_t$, and let $\Z_m = \Z/m\Z$ denote the ring of integers modulo $m$. 

\begin{definition}
    For $v_1 \in \Z_{p_1}, \ldots, v_t \in \Z_{p_t}$, we let $\CRT(v_1, \dots, v_t)$ denote the unique value $v \in \Z_m$ such that $v \equiv v_i \pmod {p_i}$ for $i \in [t]$.
\end{definition}

\begin{definition}
    Let $d$ be a positive integer.
    We say that two collections of vectors $(U, V)$, where $U = (\vec u_1, \dots, \uv_N) \in (\Z_m^d)^N$ and $V = (\vec v_1, \dots, \vec v_N) \in (\Z_m^d)^N$, form a matching-vector family over $\Z_m^d$  of size $N$,  if:
    \begin{enumerate}
        \item\label{item:restrictedset} for every $i, j \in [N]$, it holds that $\langle \vec u_i, \vec v_j \rangle \in \{0, 1\} \bmod p_k$ for $k \in [t]$, and 
        \item\label{item:selfinnerproduct} for every $i, j \in [N]$, we have $i = j$ if and only if $\langle \vec u_i, \vec v_j \rangle = 1$.
    \end{enumerate}
\end{definition}

\begin{definition}
    We say such a (sequence of) families $\{U,V\}_{N\in \N}$ is logspace uniform if there is an algorithm that, on input $(1^N,i,j,p_1,\ldots,p_t)$ where $i,j\in [N]$, prints $\vec u_i,\vec v_j$ to the output tape and runs in space $O(\log N+\log d+\log m)$.\footnote{Note that we allow space logarithmic in the modulus (versus doubly logarithmic). This does not affect the ultimate complexity of any of our constructions.}
\end{definition}

\begin{remark}
    For convenience, our definition of matching-vector families differs in a minor way from the customary definition~\cite{E12,DG16}: rather than having~\cref{item:selfinnerproduct} say that the inner product of $\vec u_i$ with $\vec v_j$ is 0 iff $i = j$, we say that this inner product is $1$.
    These definitions are equivalent, up to increasing the dimension $d$ by one: if we define $\vec u'_i = (\vec u_i || 1)$ and $\vec v'_j = (-\vec v_j || 1)$, then it holds that 
    $\langle \vec u'_i, \vec v'_j \rangle = 1 - \langle \vec u_i, \vec v_j \rangle$.
    As a result,~\cref{item:restrictedset} is preserved by this transformation, and moreover we have $\langle \vec u_i, \vec v_j \rangle = 0 \Leftrightarrow \langle \vec u'_i, \vec v'_j \rangle = 1$.
\end{remark}

A crucial ingredient to our constructions is the following theorem:
\begin{theorem}[\cite{Barrington1994,DBLP:journals/combinatorica/Grolmusz00,DGY11}]\label{thm:mvgeneral}
    Let $m$ be the product of $t$ distinct primes $p_1, \ldots, p_t$.
    Let $w$ be a positive integer.
    Then there exists a logspace-uniform matching-vector family over $\Z_m^d$ of size $N$, where:
    \begin{align*}
        N &:= \binom{\lceil w^{1+1/t} \rceil}{w},\text{ and}\\
        d &:= 1 + \sum_{j = 0}^{\lfloor (mw)^{1/t} \rfloor} \binom{\lceil w^{1+1/t} \rceil}{j}.
    \end{align*}
\end{theorem}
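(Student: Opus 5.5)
Our plan is to follow Grolmusz's construction (as presented in \cite{DGY11}). The first ingredient is a symmetric polynomial that represents the OR function modulo $m$ with low degree.

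\textbf{Step 1 (the Barrington--Beigel--Rudich polynomial).} Set $h := \lceil w^{1+1/t}\rceil$. For each $k\in[t]$, let $e_k$ be the smallest power of $p_k$ with $e_k > w^{1/t}$, so that $e_k \le p_k w^{1/t}$. By Lucas' theorem, for $z\in\{0,1\}^h$ with Hamming weight $|z|$, the quantity $\binom{|z|}{e}\bmod p_k$ is determined by the symmetric polynomial $\sum_{|S|=e}\prod_{i\in S} z_i$, which has degree $e$. Taking $p_k$-adic digits shows that the indicator of ``$p_k^{a}$ divides $|z|$'' is a polynomial of degree below $p_k^a$. We then CRT-combine these $t$ polynomials into a single polynomial $P$ over $\Z_m$ with the following properties:
\begin{itemize}
\item $P$ is a combination of elementary symmetric polynomials of degree at most $\max_k e_k \le (mw)^{1/t}$;
\item for every $z$, $P(z)\bmod p_k\in\{0,1\}$;
\item $P(z)\equiv 1 \pmod m$ exactly when $e_k \mid |z|$ for all $k$.
\end{itemize}
Concretely, we take $P = \CRT(Q_1,\dots,Q_t)$, where $Q_k(z)=1-\bigl(\text{polynomial for } |z|\not\equiv 0 \bmod e_k\bigr)$ is built via Fermat, $x^{p_k-1}$, applied to suitable digit polynomials. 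The degree bookkeeping is the main obstacle: naive Fermat powering multiplies the degree by $p_k-1$. To avoid this, we instead use the BBR construction directly: the indicator of $|z|\equiv 0 \pmod{p_k^{a}}$ is computed from the digits $\binom{|z|}{p_k^j}$ for $j<a$ as $\prod_j\bigl(1-\binom{|z|}{p_k^j}^{p_k-1}\bigr)$. This product is symmetric and multilinear on the Boolean cube, and as a function of $|z|$ it has degree below $p_k^{a}$. Hence it lies in the span of $\binom{|z|}{j}$ for $j<p_k^a$, by periodicity and Lucas' theorem. This is the BBR lemma, which we cite.

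\textbf{Step 2 (defining the vectors).} Index $[N]$ by the $w$-subsets $A\subseteq[h]$, which gives $N=\binom{h}{w}$. Write $P(z)=\sum_{|S|\le D} c_S\prod_{i\in S}z_i$ with $D=\lfloor (mw)^{1/t}\rfloor$. Coordinates are indexed by the sets $S\subseteq[h]$ with $|S|\le D$, plus one extra coordinate; this yields the stated $d$. Set $(\vec u_A)_S = c_S\cdot[S\subseteq A]$ and $(\vec v_B)_S=[S\subseteq B]$. Then
\[
\langle \vec u_A,\vec v_B\rangle=\sum_S c_S[S\subseteq A\cap B]=P(\mathbf 1_{A\cap B}).
\]
The value $|A\cap B|$ lies in $[0,w]$ and equals $w$ iff $A=B$. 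Since $\prod_k e_k > w$, the condition ``all $e_k$ divide $|A\cap B|$'' forces $|A\cap B|\in\{0\}\cup\{\text{multiples of } \mathrm{lcm}(e_k) > w\}$, so it fails for $0<|A\cap B|<w$. The case $|A\cap B|=0$ is the problem, so we apply $P$ to $w-|A\cap B|=|A\setminus B|$ instead; this is again a symmetric polynomial in $\mathbf 1_{A\cap B}$ when restricted to $|A|=w$, obtained by substituting $x\mapsto w-x$ in the binomial basis, which preserves the degree. Then $\langle\vec u_A,\vec v_B\rangle\equiv1$ iff $|A\setminus B|=0$ iff $A=B$, and every value is $0$ or $1$ mod each $p_k$. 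The extra coordinate absorbs any constant term if needed, matching the ``$1+$'' in $d$.

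\textbf{Step 3 (uniformity).} Given $(1^N,i,j,p_1,\dots,p_t)$, we decode $i,j$ into subsets by ranking combinations in logspace. We enumerate the $S$ in a fixed order using $O(\log d)$ bits, and compute $c_S$ in $O(\log m+\log N)$ space. This works because $c_S$ depends only on $|S|$ and is an explicit CRT of binomial-basis coefficients computable modulo each small prime $p_k$. Checking $S\subseteq A$ uses only counters. We expect Step 1 (the degree bound for the mod-$p_k^a$ indicators) to be the main technical point, and we would cite \cite{Barrington1994} for it.
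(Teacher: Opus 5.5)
\textbf{The gap is in your choice of the moduli $e_k$, which is exactly the content of the paper's proof.} The rest of your construction is sound and is essentially the paper's route: DGY11 Lemma 11 plus Sudan's coefficient/evaluation observation, giving $\langle \vec u_A,\vec v_B\rangle=P(\mathbf 1_{A\cap B})$ for $w$-subsets of $[h]$ with $h=\lceil w^{1+1/t}\rceil$. Your shift to detecting $|z|\equiv w$ is fine, and it makes the extra coordinate mere zero padding. The paper instead uses a polynomial that vanishes on the diagonal, and its extra coordinate implements the flip $0\mapsto 1$. So the only problem is the degree bound.

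You take $e_k$ to be the least power of $p_k$ exceeding $w^{1/t}$. This gives $e_k\le p_k w^{1/t}$, and you then assert $\max_k e_k\le (mw)^{1/t}$. That would require $p_k\le m^{1/t}$, which fails for the largest of $t\ge 2$ distinct primes, and the bound genuinely breaks. Take $t=2$, $p_1=3$, $p_2=101$, $w=102^2$. Then $e_2=101^2=10201$, while $(mw)^{1/2}=102\sqrt{303}<1776$. Moreover, for odd $p$ the mod-$p$ indicator of a residue class modulo $p^a$ has coefficient $\equiv 1$ on $\binom{x}{p^a-1}$, because $\binom{p^a-1}{x}\equiv(-1)^x \pmod p$. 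Since $e_2-1<w<h=102^3$, your $\vec u_A$ therefore need coordinates for all $|S|\le 10200$, far beyond the stated $d$.

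\textbf{Fix.} Balance the thresholds across primes: let $e_k$ be the least power of $p_k$ exceeding $(mw)^{1/t}/p_k$ (possibly $e_k=1$). Minimality gives $e_k\le (mw)^{1/t}$, so every component has degree below $(mw)^{1/t}$. You also get $\prod_k e_k>\prod_k (mw)^{1/t}/p_k=w$, which is all your Step~2 needs. This is precisely the paper's parameter setting, and with it the rest of your argument, including uniformity, goes through.
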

\begin{proof}[Proof Sketch]
    This is immediate from setting parameters suitably in~\cite[Lemma 11]{DGY11}.
    In their theorem, we  set $h = \lceil w^{1+1/t} \rceil$ and for each $i \in [t]$ take $e_i$ to be minimal such that $p_i^{e_i} > (mw)^{1/t}/p_i$.
    This implies that $\prod_i p_i^{e_i} > mw/\prod_i p_i = w$, and moreover for any $i$ we have $p_i^{e_i} \leq (mw)^{1/t}$. Finally,~\cite[Lemma 11]{DGY11} does not state logspace-uniformity but it is immediate from their construction. We verify this in~\Cref{app:unif}.
\end{proof}

The following corollary is straightforward; we defer its proof to~\cref{sec:mvspecificproof}.
\begin{corollary}\label{cor:mvspecific}
    All of the following hold:
    \begin{enumerate}
        \item\label{item:grolmusz} (\cite{DBLP:journals/combinatorica/Grolmusz00}) Let $t$ be constant and $p_1, \ldots, p_t$ be $t$ fixed odd primes.
        Then for any $\ell \geq 1$, there exists a logspace-uniform matching-vector family over $\Z_{p_1p_2\ldots p_t}$ of size $2^\ell$ with dimension $d = \exp\left(O\left(\ell^{1/t} (\log \ell)^{1-1/t}\right)\right)$.

        \item\label{item:manyprimes}
        For any $\ell \geq 1$, let $t = \sqrt{\log \ell - \log \log \ell/2 + O(1)}$, and let $p_1, \ldots, p_t$ be the first $t$ odd primes.
        Then there exists a logspace-uniform matching-vector family over $\Z_{p_1p_2\ldots p_t}$ of size $2^\ell$ with dimension $d = \exp(\exp(O(\sqrt{\log \ell})))$.
    \end{enumerate}
\end{corollary}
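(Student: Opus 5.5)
Both items follow by plugging parameters into \cref{thm:mvgeneral}. We choose $w$ so that $N \ge 2^\ell$, and then bound $d$. A family of size $N \ge 2^\ell$ restricts to one of size exactly $2^\ell$ by keeping the first $2^\ell$ pairs, and this restriction preserves logspace uniformity.

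\textbf{Size.} Write $h = \lceil w^{1+1/t}\rceil$. We have $N=\binom{h}{w} \ge (h/w)^w \ge w^{w/t} = 2^{(w\log w)/t}$. So it suffices to take $w$ minimal with $w\log w \ge t\ell$, which gives $w = \Theta(t\ell/\log(t\ell))$ (we may assume $\ell$ is large enough that $t\ell \ge 2$).

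\textbf{Dimension.} Let $k=\lfloor (mw)^{1/t}\rfloor$. Since $k \le h/2$, the binomial sum is dominated by its last terms, giving
\[
d \le 1+(k+1)\binom{h}{k} \le 1+(k+1)\left(\frac{eh}{k}\right)^{k}=\exp\bigl(O(k\log(h/k)+\log k)\bigr)\le \exp\bigl(O(k\log w)\bigr).
\]

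For \cref{item:grolmusz}, $t$ and $m$ are constants. Then $w=\Theta(\ell/\log\ell)$ and $k = O(w^{1/t}) = O((\ell/\log\ell)^{1/t})$. Hence $k\log w = O(\ell^{1/t}(\log\ell)^{1-1/t})$, which is the claimed bound. The condition $k\le h/2$ holds for large $\ell$ because $w^{1/t}\ll w^{1+1/t}$.

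For \cref{item:manyprimes}, the primes $p_1,\dots,p_t$ are the first $t$ odd primes. By the prime number theorem, $p_t=O(t\log t)$, so $m\le p_t^t = 2^{O(t\log t)}$ and $m^{1/t}=O(t\log t)$. Also $w=O(t\ell)$, so $k \le m^{1/t}w^{1/t} = O(t\log t)\cdot (t\ell)^{1/t}$. With $t=\Theta(\sqrt{\log \ell})$ we get $(t\ell)^{1/t}=2^{(\log\ell+\log t)/t}=2^{O(\sqrt{\log\ell})}$. Therefore $k\log w \le 2^{O(\sqrt{\log \ell})}\cdot O(\log \ell)=2^{O(\sqrt{\log\ell})}$, so $d=\exp(\exp(O(\sqrt{\log\ell})))$ as claimed.

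The exact choice $t=\sqrt{\log\ell-\log\log\ell/2+O(1)}$ presumably balances the two factors $m^{1/t}$ and $\ell^{1/t}$ more finely, but only its order $\Theta(\sqrt{\log\ell})$ matters for the stated asymptotic bound. Here too $k\le h/2$: we have $h/k \ge w/m^{1/t}$, which is huge because $w \approx \ell$ while $m^{1/t}$ is only polylogarithmic in $\ell$.

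\textbf{Uniformity.} \cref{thm:mvgeneral} is uniform given $p_1,\ldots,p_t$. In \cref{item:manyprimes} those primes must themselves be computed in logspace, which is easy: they are $O(\log\ell\log\log\ell)$-bit objects found by trial division in $O(\log\ell)$ space.

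The main obstacle is bookkeeping rather than ideas: checking that $k\le h/2$, that rounding in $w$ costs only constant factors, and controlling $m^{1/t}$ in the many-primes regime. That last point is the only place where $t$ grows with $\ell$.
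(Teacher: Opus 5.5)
Your proposal is correct and follows essentially the paper's route: instantiate \cref{thm:mvgeneral} with $w=\Theta(t\ell/\log(t\ell))$, which is the paper's $w=\Theta(\ell/\log\ell)$ (resp.\ $w=\Theta(\ell/\sqrt{\log\ell})$), and bound $d$ by the number of terms times the largest binomial coefficient, using $k\le h/2$ and $m^{1/t}=O(t\log t)$ in the many-primes case. Two harmless remarks: the specific $t$ in \cref{item:manyprimes} is just the paper's choice $t=\sqrt{\log w}$ for this $w$, and the first $t$ odd primes are only $O(\log\log\ell)$-bit numbers rather than $O(\log\ell\log\log\ell)$-bit ones (and the uniformity definition takes them as input anyway), so your uniformity step is even easier than you state.
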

\section{Main Result: Catalytic Tree Evaluation from Matching Vectors}
\label{sec:tep}

We now prove the following theorem:

\begin{theorem}\label{thm:algo}\label{thm:tep}
    Let odd primes $p_1,\ldots,p_t$ be given as input and let $m=\prod_i p_i$.
    Suppose there exists an $O(\ell)$-space uniform
    matching vector family of size $2^\ell$ over $\Z_m^d$, and suppose additionally that $d \log m \leq \poly(2^{h+\ell})$.
    Assume our algorithm is given the following resources:
    \begin{itemize}
        \item a catalytic tape of length $O(d\log (dm))$.
        \item on the input tape, the truth table of a function $f_u: \zo^\ell \times \zo^\ell \to \zo^\ell$ for every $u\in \zo^{<h}$, and inputs $v_u\in \zo^{\ell}$ for $u\in \zo^h$.
    \end{itemize}
    Then, there exists an algorithm that uses $O(\ell+h\log m)$ free space and time $\poly(2^{\ell+ht})$ and outputs $v_{\emptyset}$ (i.e., the result of \tep).
\end{theorem}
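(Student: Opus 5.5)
The plan is a Cook--Mertz style recursion in which node values are encoded as matching vectors rather than Reed--Muller points. Fix the family $(U,V)$ over $\Z_m^d$ of size $2^\ell$ and identify $a \in \zo^\ell$ with an index in $[2^\ell]$. Using \cref{rmk:bitrep}, view the catalytic tape as one register $\tau_u \in \Z_m^{d}$ for each of a constant number of ``slots'' per level of the tree; this costs $O(d\log(dm))$ bits if registers are reused across nodes at the same depth. The invariant I will aim for is the following. A call $\mathsf{Eval}(u,\lambda)$ with $\lambda\in\Z_m$ transforms the output register $\tau$ of $u$ into $\tau + \lambda\,\uv_{v_u}$ and leaves every other register unchanged. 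Running the call with $-\lambda$ (or $m-1$ more times) undoes it. Leaves are immediate: read $v_u$ and add $\lambda \uv_{v_u}$ coordinate by coordinate, generating $\uv_{v_u}$ by logspace uniformity in $O(\ell + \log d+\log m)=O(\ell)$ space.

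For an internal node $u$ with children $u0,u1$, I decode $a=v_{u0}$ and $b=v_{u1}$ through inner products. For each $a'$, the quantity $\langle \tau_0 + \lambda_0 \uv_a, \vecv_{a'}\rangle = \langle \tau_0,\vecv_{a'}\rangle + \lambda_0 s_{a,a'}$, where $s_{a,a'} \bmod p_k \in\{0,1\}$ for each $k$, and $s_{a,a'}=1$ exactly when $a=a'$. Work one prime at a time. Over $\FF_{p_k}$ one has
\[
\sum_{\lambda_0,\lambda_1\in\FF_{p_k}} (c_0+\lambda_0 s_0)^{p_k-1}(c_1+\lambda_1 s_1)^{p_k-1} = s_0 s_1 \pmod{p_k}
\]
for $s_0,s_1\in\{0,1\}$, up to a fixed sign. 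So for each $\lambda_0,\lambda_1$ the procedure:
\begin{enumerate}
\item recursively sets the child registers to $\tau_0+\lambda_0\uv_a$ and $\tau_1+\lambda_1\uv_b$;
\item streams over all $(a',b')\in\zo^{2\ell}$ and adds $\lambda\,\uv_{f_u(a',b')}\cdot \langle x_0,\vecv_{a'}\rangle^{p_k-1}\langle x_1,\vecv_{b'}\rangle^{p_k-1} \pmod{p_k}$ into $\tau$;
\item uncomputes the children by recursing with $-\lambda_0,-\lambda_1$.
\end{enumerate}
The $\tau_0,\tau_1$ terms cancel in the sum because $\sum_\lambda \lambda^{j}=0$ for $j<p_k-1$. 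The remaining terms leave exactly $\sum_{a',b'} \uv_{f_u(a',b')}\,[s_{a,a'}\equiv 1][s_{b,b'}\equiv 1] \pmod{p_k}$. Combining the primes by $\CRT$ then gives $\lambda\,\uv_{f_u(a,b)}$ in $\Z_m$.

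The main obstacle is this last step. Modulo a single prime $p_k$, the indicator $[s_{a,a'}\equiv1 \bmod p_k]$ is not $[a=a']$: many spurious pairs $(a',b')$ contribute. The true indicator is the product over $k$ of the per-prime indicators, and that product is not a low-degree polynomial over any single field. To handle this I would make the recursion branch on a tuple $(\lambda^{(1)},\dots,\lambda^{(t)})$, one coordinate per prime, or equivalently $\lambda\in\Z_m$ via $\CRT$, with $2^{O(t)}$ distinct combinations chosen so the aggregate computes the cross-prime product. Concretely, I would first try to use $\CRT$ to run the mod-$p_k$ interpolations simultaneously, and then check that for $a'\ne a$ at least one prime kills the contribution. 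The constraint to respect is that the number of recursive calls per node must be $2^{O(t)}$, independent of $p_k$, rather than $\prod_k p_k^2$. Only then does the runtime come out as $\poly(2^{\ell+ht})$; this is essentially the matching-vector PIR decoding with $2^t$ servers, and I expect it to need the detailed structure of the construction behind \cref{thm:mvgeneral}.

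Resource accounting then follows the standard pattern.
\begin{itemize}
\item Each stack frame stores its current $\lambda\in\Z_m$ and a constant-size call index, which gives $O(h\log m)$ free space in total.
\item Streaming at one node at a time costs $O(\ell+\log d+\log m)=O(\ell)$, using $d\log m\le\poly(2^{h+\ell})$.
\item The catalytic space is $O(d\log(dm))$.
\item The time is the branching factor raised to the height times $\poly(2^{\ell}d)$, which is $\poly(2^{\ell+ht})$.
\end{itemize}
At the root, I run $\mathsf{Eval}(\emptyset,1)$ and read $\langle \tau+\uv_{v_\emptyset}, \vecv_j\rangle - \langle\tau,\vecv_j\rangle$ for each $j$ to find the $j$ with value $1$, which is $v_\emptyset$. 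I then run $\mathsf{Eval}(\emptyset,-1)$ to restore the catalytic tape.
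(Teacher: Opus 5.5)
You have a genuine gap, and it is the step you flag as the main obstacle. The proposal never gives a one-level procedure that outputs $\lambda\,\uv_{f_u(a,b)}$, and the fix you sketch cannot work. Write $s^{(k)}_{a,a'}=\langle\uv_a,\vecv_{a'}\rangle\bmod p_k$. Your interpolation over $\lambda_0,\lambda_1\in\FF_{p_k}$ puts $\lambda\sum_{a',b'}\uv_{f_u(a',b')}\,s^{(k)}_{a,a'}s^{(k)}_{b,b'}$ into the $p_k$-component of $\tau$. In general this is not $\lambda\uv_{f_u(a,b)}\bmod p_k$. Spurious matches modulo a single prime are unavoidable: for $d<2^\ell$ the $0/1$ matrix $(s^{(k)}_{a,a'})_{a,a'}$ has rank at most $d$, so it is not the identity. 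Running the interpolations ``simultaneously via $\CRT$'' does not help. A prime $p_j$ with $s^{(j)}_{a,a'}=0$ annihilates only the $p_j$-component. The per-prime quantities live in different fields and are never multiplied together, so ``some prime kills the contribution'' never reaches the other components. Separately, killing the mask $\langle\tau_0,\vecv_{a'}\rangle$ via $(\cdot)^{p_k-1}$ summed over all of $\FF_{p_k}$ costs $p_k^2$ recursive branches per prime. So this route cannot give a branching factor independent of the primes. You also do not need the internal structure of the construction behind \cref{thm:mvgeneral}; the paper uses only the abstract matching-vector properties.

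The missing idea is to learn the mask rather than interpolate it away, and to make the per-prime contributions \emph{integers}, which do multiply.
\begin{itemize}
\item The paper first computes $g_1=\langle\xv,\vecv_a\rangle$ and $g_2=\langle\yv,\vecv_b\rangle$, using only $O(\log m)$ bits of state (\cref{lemma:innerproduct}). This needs a stronger invariant than yours: a child must be able to add $\gamma\vecv_{v_u}$ as well as $\gamma\uv_{v_u}$. That is the role of the bit $\ctrl$ and the family $\wv$ in \cref{thm:TEPonelevel}.
\item For each $i$ it fixes a nonzero monomial $\alpha_iX^{\beta_i}$ of $f_i(X)=\sum_{b_i,c_i\in\zo}(-1)^{b_i+c_i}X^{(g_1+b_i)(g_2+c_i)\bmod p_i}$. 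This polynomial is nonzero because $f_i'(1)\equiv1\pmod{p_i}$.
\item It then branches only over the $2^{2t}$ binary shifts $\xv'=\xv+\CRT(b_1,\ldots,b_t)\uv_a$ and $\yv'=\yv+\CRT(c_1,\ldots,c_t)\uv_b$. While streaming over $(r,s)$, it adds $(-1)^{\sum_i(b_i+c_i)}(\prod_i\alpha_i)^{-1}\gamma^*\wv_{f(r,s)}$ whenever the Boolean test $\langle\xv',\vecv_r\rangle\langle\yv',\vecv_s\rangle\equiv\CRT(\beta_1,\ldots,\beta_t)\pmod m$ passes.
\end{itemize}
That test is a conjunction of per-prime tests, and the $i$-th test depends only on $(b_i,c_i)$. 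So the signed count factors as a product of integers over $i$. If $(r,s)\neq(a,b)$, say $r\neq a$, some prime has $\langle\uv_a,\vecv_r\rangle\equiv0\pmod{p_i}$. Its factor vanishes by pairing $b_i=0$ with $b_i=1$, and the whole product is $0$. For $(r,s)=(a,b)$, the $i$-th factor is the coefficient $\alpha_i$.

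Finally, your catalytic accounting is off. A constant number of registers per level costs $O(h\,d\log(dm))$, not $O(d\log(dm))$. The paper uses three registers in total. Before each recursive call it swaps the queried input register into the output slot, so the parent's other two registers serve as the child's inputs.
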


We can then combine this result with the matching-vector families given by~\cref{cor:mvspecific} to read off the results in the introduction:
\begin{corollary}\label{cor:eps}
    For any $\epsilon > 0$, we can solve \tephl in $O(\ell + h)$ free space, $\exp(O(\ell^\epsilon))$ catalytic space, and $\poly(2^{\ell+h})$ time.
\end{corollary}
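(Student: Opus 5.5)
The plan is to instantiate \cref{thm:algo} with the matching-vector family of \cref{cor:mvspecific}, \cref{item:grolmusz}, and then read off the three resource bounds.

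Fix $\epsilon>0$ and choose the constant $t=\lceil 2/\epsilon\rceil$. Let $p_1,\ldots,p_t$ be the first $t$ odd primes, so that $m=\prod_i p_i=O(1)$. These primes are constants, so they can be hardwired into the machine (or written into free space at the start). By \cref{cor:mvspecific}, \cref{item:grolmusz}, there is a logspace-uniform matching-vector family over $\Z_m^d$ of size $2^\ell$ with
\[
d=\exp\bigl(O(\ell^{1/t}(\log \ell)^{1-1/t})\bigr)\le \exp(O(\ell^{2/t}))\le \exp(O(\ell^{\epsilon})).
\]

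We then check the hypotheses of \cref{thm:algo}.
\begin{itemize}
\item \textbf{Uniformity.} Logspace uniformity gives space $O(\log 2^\ell+\log d+\log m)=O(\ell+\ell^{\epsilon})=O(\ell)$, so the family is $O(\ell)$-space uniform.
\item \textbf{Size condition.} We have $d\log m=\exp(O(\ell^\epsilon))$. This is at most $\poly(2^{\ell})\le\poly(2^{h+\ell})$ once $\ell^\epsilon=O(\ell)$, which always holds for $\epsilon\le 1$. For $\epsilon>1$ the claim is weaker than the one for $\epsilon=1$, so we may assume $\epsilon\le 1$.
\end{itemize}

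Applying \cref{thm:algo} then yields the following bounds.
\begin{itemize}
\item \textbf{Catalytic space.} It is $O(d\log(dm))=\exp(O(\ell^\epsilon))\cdot O(\ell^\epsilon)=\exp(O(\ell^\epsilon))$.
\item \textbf{Free space.} It is $O(\ell+h\log m)=O(\ell+h)$, since $m$ is constant.
\item \textbf{Time.} It is $\poly(2^{\ell+ht})=\poly(2^{\ell+h})$, since $t$ is constant.
\end{itemize}
This is exactly the statement of the corollary.

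The argument is pure bookkeeping. The one point needing care is that $t$, and hence $m$, must be constants independent of $n$. Only then do the $h\log m$ free-space term and the $ht$ exponent in the running time collapse to $O(h)$.
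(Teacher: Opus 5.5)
Your proposal is correct and matches the paper's proof: instantiate \cref{thm:algo} with \cref{item:grolmusz} of \cref{cor:mvspecific} for a constant $t$ depending on $\epsilon$. The paper takes $t=\lceil 3/\epsilon\rceil$, but your $t=\lceil 2/\epsilon\rceil$ works just as well since $\ell^{1/t}(\log \ell)^{1-1/t}=O(\ell^{\epsilon})$, and your explicit checks of uniformity, of the hypothesis $d\log m\le \poly(2^{h+\ell})$, and of the collapse of $h\log m$ and $ht$ for constant $t$ are exactly the bookkeeping the paper leaves implicit.
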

\begin{proof}
    This follows by taking~\cref{item:grolmusz} of~\cref{cor:mvspecific} with $t$ a sufficiently large constant, e.g., $t = \lceil 3/\epsilon \rceil$.
\end{proof}

\begin{corollary}\label{cor:sqrt}
    We can solve \tephl in $O(\ell+h\sqrt{\log \ell}\log \log \ell)$ free space, $\exp(\exp(O(\sqrt{\log \ell})))$ catalytic space, and $\poly(2^{\ell+h\sqrt{\log \ell}})$ time.
\end{corollary}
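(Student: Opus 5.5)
The plan is to instantiate \cref{thm:algo} with the matching-vector family from \cref{item:manyprimes} of \cref{cor:mvspecific}, and then check the resource bounds and the two hypotheses of \cref{thm:algo}.

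\textbf{Parameters.} Set $t = \sqrt{\log \ell - \log\log\ell/2 + O(1)} = \Theta(\sqrt{\log \ell})$ and let $p_1,\ldots,p_t$ be the first $t$ odd primes. By the prime number theorem (Chebyshev's bound suffices), $p_t = O(t\log t)$, so
\[
\log m = \sum_{i=1}^t \log p_i \le t\log p_t = O(\sqrt{\log \ell}\,\log\log \ell).
\]
The primes are not actually supplied on the input, so the algorithm generates them itself by trial division. This costs $O(t\log p_t)$ space and $\poly(t)$ time, both absorbed into the bounds below. \Cref{item:manyprimes} of \cref{cor:mvspecific} then gives a logspace-uniform family of size $2^\ell$ over $\Z_m^d$ with $d = \exp(\exp(O(\sqrt{\log \ell})))$.

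\textbf{Hypotheses of \cref{thm:algo}.} First, note that $\exp(O(\sqrt{\log\ell})) = \ell^{o(1)}$. Hence $\log d = \ell^{o(1)} = O(\ell)$, and more precisely $d \le 2^{\ell}$ for all sufficiently large $\ell$.
\begin{itemize}
\item \emph{Uniformity.} Logspace uniformity gives space $O(\log N + \log d + \log m) = O(\ell + \ell^{o(1)} + \log m) = O(\ell)$, so the family is $O(\ell)$-space uniform.
\item \emph{Size condition.} We get $d\log m \le 2^{\ell}\cdot O(\ell) \le \poly(2^{h+\ell})$.
\end{itemize}
Small values of $\ell$ are handled by adjusting constants.

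\textbf{Reading off the bounds.}
\begin{itemize}
\item \emph{Free space:} $O(\ell + h\log m) = O(\ell + h\sqrt{\log\ell}\,\log\log\ell)$.
\item \emph{Catalytic space:} $O(d\log(dm)) = d\cdot O(\ell^{o(1)} + \log m)$. This is still $\exp(\exp(O(\sqrt{\log\ell})))$, because the polylogarithmic-in-$d$ factor is absorbed into the constant in the exponent.
\item \emph{Time:} $\poly(2^{\ell + ht}) = \poly(2^{\ell + h\sqrt{\log\ell}})$.
\end{itemize}

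The only step needing real care is the estimate on $\log m$. That is where the factor $\log\log \ell$ in the free space comes from, and it requires a prime-counting bound rather than a trivial one. Everything else is direct substitution.
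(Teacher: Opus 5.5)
Your proposal is correct and follows the paper's proof, which simply instantiates \cref{thm:algo} with the family from \cref{item:manyprimes} of \cref{cor:mvspecific}. Your checks of the uniformity hypothesis, of $d\log m \le \poly(2^{h+\ell})$, and of the estimate $\log m = O(\sqrt{\log \ell}\,\log\log \ell)$ via a Chebyshev-type bound (which is exactly where the $\log\log\ell$ factor in the free space comes from) make explicit what the paper leaves implicit.
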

\begin{proof}
    This follows from~\cref{item:manyprimes} of~\cref{cor:mvspecific}.
\end{proof}

\begin{remark}[Better tree evaluation from better matching-vector families]\label{rmk:betterMV}
    There is a wide gap between the  best-known constructions and lower bounds for matching-vector families.
    To the best of our knowledge, it would be consistent with current lower bounds~\cite{DBLP:journals/cc/TardosB98,DBLP:journals/siamcomp/0001DL14,DBLP:conf/innovations/AggarwalDLOS25,gowers2025conjecture} for there to be matching-vector families (with $t, p_1, \ldots, p_t$ all constant) of size $2^\ell$ and dimension $O(\ell\log \ell)$.\footnote{It was shown by~\cite{DBLP:journals/siamcomp/0001DL14} that under the polynomial Freiman-Ruzsa conjecture~\cite{ruzsa1999analog} over $\Z_m^d$, the inequality $2^\ell \leq \exp(O(d/\log d)) \Rightarrow d \geq \Omega(\ell \log \ell)$ must hold (assuming $m$ is constant). This conjecture was recently proven by~\cite{gowers2025conjecture}.}
    If such matching-vector families were to exist, then~\cref{thm:tep} would imply an algorithm for \tep that simultaneously uses $O(\log n\log \log n)$ space---matching~\cite{CM23}---and runs in polynomial time.
\end{remark}

\subsection{Recursive Step: One Level of Tree Evaluation}
\label{sec:tep:step}

The main technical workhorse for our results is the following theorem:
\begin{theorem}\label{thm:TEPonelevel}
    Let $m = p_1p_2\ldots p_t$ be a product of $t$ distinct odd primes that are given as input.
    Suppose there exists an $O(\ell)$-space uniform
    matching vector family $\{\uv_x, \vecv_x \in \Z_m^d: x \in \zo^\ell\}$ of size $2^\ell$ over $\Z_m^d$.
    Additionally, let $\left\{\wv_s: s \in \zo^\ell\right\}$ be any $O(\ell)$-space uniform collection of vectors in $\Z_m^{d}$.
    Suppose our algorithm is given the following resources:
    \begin{itemize}
        \item global space 
        comprising three 
        registers $\xv, \yv,\zv \in \Z_m^d$;
        \item the truth table of a function $f: \zo^\ell \times \zo^\ell \to \zo^\ell$ on the input tape; and
        \item an oracle $\calO$ that takes as input a scalar $\gamma \in \Z_m$ and bits $\ctrl, \sigma \in \zo$ in local space and updates the registers as follows.
        In the below, $a, b \in \zo^\ell$ are some bitstrings.
        \begin{itemize}
            \item if $\sigma = 0$, let $\Delta$ denote $\uv_a$ if $\mathsf{ctrl} = 0$ and $\vecv_a$ otherwise.
            The oracle will update $\xv \gets \xv + \gamma \Delta$ and leave all other registers unchanged.
            (Here, by $\xv$ we mean the first of the three catalytic registers, and $\yv$ refers to the second catalytic register.)

            \item if $\sigma = 1$, let $\Delta$ denote $\uv_b$ if $\mathsf{ctrl} = 0$ and $\vecv_b$ otherwise.
            The oracle will update $\yv \gets \yv + \gamma \Delta$ and leave all other registers unchanged.
        \end{itemize}
    \end{itemize}
    Then, there exists an algorithm that takes as input a scalar $\gamma \in \Z_m$ in local space and updates $\zv \gets \zv + \gamma \wv_{f(a, b)}$ (while leaving the $\xv$ and $\yv$ registers unchanged). Moreover, the algorithm uses $O(\ell + \log m + \log(d \cdot \log m))$ local space, and before making all oracle calls erases all but $O(\log m)$ bits of this space. The algorithm runs in time $\poly(2^{\ell+t} \cdot d \cdot \log m)$ and makes $2^{O(t)}$ queries to $\calO$.
\end{theorem}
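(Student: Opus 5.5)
The plan follows the Cook--Mertz ``add, recurse, subtract'' paradigm. The algorithm computes the change in a scalar-valued polynomial expression over the catalytic registers, evaluated before and after oracle calls. It picks the expression so that all contributions from the unknown initial contents $\tau_x,\tau_y$ of $\xv,\yv$ cancel, and only the term $\wv_{f(a,b)}$ survives.

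\textbf{Step 1: the target sum.} The algorithm streams over all pairs $(a',b')\in\zo^\ell\times\zo^\ell$. For each pair it regenerates $\vecv_{a'},\vecv_{b'},\wv_{f(a',b')}$ in $O(\ell)$ space by uniformity, and accumulates into $\zv$ a multiple of $\wv_{f(a',b')}$. The multiplier should equal $\mathbf 1[a'=a]\cdot\mathbf 1[b'=b]$ in the end. By the matching-vector property, $\mathbf 1[a'=a]$ is the indicator that $c=\langle \uv_a,\vecv_{a'}\rangle\in\Z_m$ equals $1$, i.e.\ that $c\equiv 1\pmod{p_k}$ for every $k$. Since each residue $c_k$ lies in $\{0,1\}$, this indicator is the integer product $\prod_{k=1}^t c_k$, a degree-$t$ function in the ``coordinates'' of $c$.

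\textbf{Step 2: extracting it by finite differences.} I would use inclusion--exclusion over subsets. Let $e_1,\dots,e_t$ be the CRT idempotents of $\Z_m$. For $S\subseteq[t]$, call $\calO$ with scalar $\gamma\sum_{k\in S}e_k$ (and $\ctrl=0$, $\sigma=0$), so that $\xv$ becomes $\tau_x+\gamma\sum_{k\in S}e_k\uv_a$. Do the same for $\yv$ with $\sigma=1$. Then evaluate a fixed multilinear form in the residues $\langle\xv,\vecv_{a'}\rangle \bmod p_k$, add it into $\zv$ with sign $(-1)^{|S|}$, and undo the oracle call with $-\gamma$. This gives $2^{O(t)}$ oracle calls in total: $2^t$ choices of $S$ for $\xv$, $2^t$ for $\yv$, and constant overhead. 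The mixed finite difference $\sum_{S}(-1)^{t-|S|}(\cdot)$ of a form that is multilinear across the prime-coordinates kills every monomial missing some $\tau$-free factor. What remains is $\prod_k c_k$ for $\xv$ times the analogous product for $\yv$, times $\wv_{f(a',b')}$ scaled by $\gamma$. After summing over $(a',b')$, exactly $\gamma\wv_{f(a,b)}$ survives in $\zv$. Between oracle calls only $\gamma$, $S$, and a sign need to be kept, which is $O(\log m)$ bits plus $O(t)\le O(\log m)$ bits. The streaming counters over $(a',b')$ are erased before each call, as required.

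\textbf{Main obstacle.} The step I expect to be hardest is making ``multilinear across primes'' meaningful inside ring operations on $\Z_m$. Any polynomial over $\Z_m$ splits under CRT prime by prime, so the $p_j$-component of a product cannot see $c_k$ for $k\ne j$. I would first try the following workaround. Keep $\zv$'s update CRT-decomposed. For the $p_j$-component, realize the factor $c_k$ ($k\neq j$) not algebraically, but through the recursion structure itself: the iterated differences over $S$ let us replace a cross-prime product by a sum over $2^t$ sign patterns, each of which is computed per prime. If that fails, the fallback is the Dvir--Gopi/Efremenko route with $S$-decoding polynomials, lifting the $\Z_m$ inner products into an extension containing an $m$-th root of unity. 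The remaining steps are routine: the restoration of $\xv,\yv$ (each $+\gamma$ call is followed by a $-\gamma$ call), the space count $O(\ell+\log m+\log(d\log m))$, and the time bound $\poly(2^{\ell+t}d\log m)$.
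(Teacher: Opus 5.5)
Your skeleton matches the paper's. Your subset $S$ is its bit vector $(b_1,\dots,b_t)$, and your shift $\sum_{k\in S}e_k$ is $\CRT(b_1,\dots,b_t)$. You also correctly see that for $a'\neq a$ the signed sum vanishes, since some $c_k=0$ makes every term independent of whether $k\in S$.

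But the step you flag as the main obstacle is a genuine gap. For $t\ge 2$, no test function fixed in advance (independent of the tape contents) can close it. Write $\tau_k=\langle\xv,\vecv_a\rangle \bmod p_k$ for the initial $\xv$. The coefficient $M$ your procedure puts on $\wv_{f(a,b)}$ is a signed sum of values of $F$ at $\tau+\mathbf{1}_S$, together with the analogous $\yv$-coordinates. Translation by $1$ permutes $\Z_{p_1}$, so summing over $\tau_1\in\Z_{p_1}$ cancels the terms with $1\in S$ against those with $1\notin S$. This gives $\sum_{\tau_1}M=0$. If $M=1$ for every tape, it would say $p_1\equiv 0\pmod m$, which is false.

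Concretely, with integer lifts of residues and your product form, the $\xv$-part of $M$ is $\prod_k\delta_k$ with $\delta_k\in\{1,1-p_k\}$, according to whether $\tau_k=p_k-1$. Such a factor need not be a unit, e.g. $1-7\equiv 0\pmod 3$. The same computation on a single $p_j$-component (summing over $\tau_k$ with $k\neq j$) rules out your CRT-decomposed workaround. The root-of-unity fallback fails too. It leaves a mask-dependent factor, such as $\prod_i f_i(X_i)$ in the paper's polynomial-ring remark, that must be computed and removed, and it also takes $\zv$ out of $\Z_m^d$.

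The missing idea is to let the test depend on the tape through an $O(\log m)$-bit state. The paper first computes $g_1=\langle\xv,\vecv_a\rangle$ and $g_2=\langle\yv,\vecv_b\rangle$ with the standard catalytic inner-product trick. This is what the $\ctrl=1$ oracle calls are for, and your plan never uses them. It then tests whether $\langle\xv,\vecv_r\rangle\cdot\langle\yv,\vecv_s\rangle\equiv\CRT(\beta_1,\dots,\beta_t)\pmod m$. This indicator is evaluated in free space rather than as a ring polynomial, so the signed sum over $(b_i,c_i)_{i\in[t]}$ factors as an \emph{integer} product over the primes.

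At $(r,s)=(a,b)$, the $i$-th factor is the coefficient of $X^{\beta_i}$ in $f_i(X)=\sum_{b,c\in\zo}(-1)^{b+c}X^{(g_1+b)(g_2+c)\bmod p_i}$. This polynomial is nonzero because $f_i'(1)\equiv 1\pmod{p_i}$. So $\beta_i$ can be chosen with coefficient $\alpha_i\in\{\pm1,\pm2\}$, a unit modulo the odd $m$. The $\zv$-update is then scaled by $\gamma(\prod_i\alpha_i)^{-1}$.

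Relatedly, put $\gamma$ on the $\zv$-update, not on the oracle shifts. If $\gamma\equiv 0\pmod{p_1}$ but $\gamma\not\equiv0\pmod{p_2}$, your shift $\gamma e_1$ is zero. Then every difference vanishes, even though $\gamma\wv_{f(a,b)}\neq0$.
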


\begin{corollary}
    We will use three corollaries of this theorem:
    \begin{itemize}
        \item letting $\wv_s = \uv_s$, we can update $\zv \gets \zv + \gamma \uv_{f(a, b)}$;
        \item letting $\wv_s = \vecv_s$, we can update $\zv \gets \zv + \gamma \vecv_{f(a, b)}$; and
        \item letting $\wv_s = s$ (with appropriate zero padding), we can update $\zv \gets \zv + \gamma s$. 
    \end{itemize}
\end{corollary}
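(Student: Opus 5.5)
The statement to prove is the three-part corollary, and each part is a direct instantiation of \cref{thm:TEPonelevel} with a particular choice of $\{\wv_s\}$. In each case I only need two things. First, the chosen collection $\{\wv_s : s\in\zo^\ell\}$ consists of vectors in $\Z_m^d$ and is $O(\ell)$-space uniform, which is the one hypothesis of \cref{thm:TEPonelevel} that depends on $\wv$. Second, the conclusion $\zv \gets \zv + \gamma\wv_{f(a,b)}$ then reads as the claimed update. All other hypotheses (the registers $\xv,\yv,\zv$, the truth table of $f$, the oracle $\calO$, the matching-vector family) are untouched, so the resource bounds and the invariance of $\xv,\yv$ carry over verbatim.

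For the first two items, I take $\wv_s=\uv_s$ and $\wv_s=\vecv_s$ respectively. These are vectors in $\Z_m^d$ because $(U,V)$ is a matching-vector family over $\Z_m^d$. Uniformity holds because \cref{thm:TEPonelevel} already assumes the family $\{\uv_x,\vecv_x\}$ is $O(\ell)$-space uniform. To print $\wv_s$, one runs the uniformity algorithm on index $s$ and outputs only the $\uv$ part (respectively the $\vecv$ part). This costs $O(\ell)$ space, since any index pair can be passed in and the unwanted half of the output discarded as it streams. Substituting, the update is $\zv\gets\zv+\gamma\uv_{f(a,b)}$ (respectively $\zv\gets\zv+\gamma\vecv_{f(a,b)}$).

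For the third item, I define $\wv_s\in\Z_m^d$ as the vector whose first $\ell$ coordinates are the bits $s_1,\dots,s_\ell$, viewed as elements $0,1\in\Z_m$, and whose remaining $d-\ell$ coordinates are $0$. The one point to check is that $d\ge\ell$, so that this padding makes sense. I would argue that the vectors $\uv_x$ for $x\in\zo^\ell$ are pairwise distinct: $\langle\uv_x,\vecv_x\rangle=1$ while $\langle\uv_y,\vecv_x\rangle\neq1$ for $y\ne x$. Hence $2^\ell\le m^d$, so $d\ge \ell/\log m$. If this bound falls short of $\ell$, I would not rely on it. Instead I would pack the bits of $s$ in base $m$ (at least one bit per coordinate, since $m\ge3$), or equivalently pad $d$ up to $\max(d,\ell)$, which does not affect the stated bounds when $d\ge\ell$ as in all our instantiations. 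Uniformity is immediate: printing $\wv_s$ just copies the bits of $s$ and then writes zeros, using $O(\log d+\ell)$ space.

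Applying \cref{thm:TEPonelevel} then gives $\zv\gets\zv+\gamma\,\wv_{f(a,b)}$, meaning that $\gamma$ times the bit string $f(a,b)$ is added into the first $\ell$ coordinates of $\zv$. With $\gamma=1$ and then $\gamma=-1$, this lets $f(a,b)$ be read off as a difference of register contents. The only mildly delicate step is this encoding and dimension bookkeeping in the third item; the rest is substitution.
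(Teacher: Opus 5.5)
Your proposal is correct and matches the paper, which gives no separate proof and treats this corollary as the direct instantiation of \cref{thm:TEPonelevel} with $\wv_s=\uv_s$, $\wv_s=\vecv_s$, or $\wv_s=s$; when the paper later uses the third item, it casts $s$ into $\Z_m^{\lceil \ell/\log m\rceil}$, which is your base-$m$ fallback. One small clarification: that fallback should be a genuine base-$m$ encoding of the integer $s<2^\ell\le m^d$, not packing $\lfloor\log m\rfloor$ bits per coordinate, since your counting bound alone does not guarantee the latter fits in $d$ coordinates (your option of zero-padding to dimension $\max(d,\ell)$ avoids the issue anyway).
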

\noindent
Before we prove the theorem, we begin with some preliminary lemmas:
\begin{lemma}\label{lemma:polynonzero}
    Let $p$ be prime. For any $g_1, g_2 \in \Z_p$, consider the polynomial:
    $$f(X) = X^{g_1g_2 \bmod{p}} - X^{(g_1+1)g_2 \bmod{p}} - X^{g_1(g_2+1) \bmod{p}} + X^{(g_1+1)(g_2+1) \bmod{p}} \in \Z[X].$$
    This polynomial is nonzero.
    Moreover, its nonzero coefficients are all in the set $\{-2, -1, 1, 2\}$.
\end{lemma}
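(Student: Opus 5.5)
The plan is to study the four exponents $e_{00} = g_1g_2$, $e_{10} = (g_1+1)g_2$, $e_{01} = g_1(g_2+1)$, $e_{11} = (g_1+1)(g_2+1)$, all reduced mod $p$ into $\{0,\dots,p-1\}$. The polynomial is $X^{e_{00}} + X^{e_{11}} - X^{e_{10}} - X^{e_{01}}$. Its coefficients after collecting like terms are sums of at most four signed ones. The coefficient of any monomial is (number of positive terms with that exponent) minus (number of negative terms with that exponent). This lies in $\{-2,\dots,2\}$, since there are only two terms of each sign. This gives the second claim immediately, once nonvanishing is established.

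For nonvanishing, the only way $f$ can be zero is if the multiset $\{e_{00}, e_{11}\}$ equals the multiset $\{e_{10}, e_{01}\}$. Work in $\Z_p$, where the exponents are reduced representatives, so equality of exponents is equality mod $p$. Note $e_{00}+e_{11} - e_{10} - e_{01} \equiv 1 \pmod p$. This is the key identity: $(g_1+1)(g_2+1) - (g_1+1)g_2 - g_1(g_2+1) + g_1g_2 = 1$.

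If the multisets agreed, then the sums of their elements would agree mod $p$. That would give $e_{00}+e_{11} \equiv e_{10}+e_{01}$, contradicting the identity since $1 \not\equiv 0 \pmod p$. So the positive and negative multisets differ, and some exponent has a nonzero net coefficient. Hence $f \neq 0$.

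The only care needed is that cancellation in $\Z[X]$ happens exactly when exponents coincide as integers in $\{0,\dots,p-1\}$, which is the same as coinciding mod $p$. So the multiset argument is valid. I expect no real obstacle; the main point is spotting the "second difference equals $1$" identity.
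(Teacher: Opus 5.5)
Your proof is correct and is essentially the paper's argument. The paper notes that $f=0$ would force $f'(1)=e_{00}+e_{11}-e_{10}-e_{01}=0$, whereas this integer is $\equiv 1 \pmod p$; your comparison of the sums of the positive and negative exponent multisets is the same signed-sum-of-exponents invariant, phrased without derivatives, and the coefficient bound is handled the same way (at most two terms of each sign).
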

\begin{proof}
    Suppose for the sake of contradiction that this polynomial is zero.
    Then, we would necessarily have $f'(1) = 0$.
    However, we have
    $$f'(1) \equiv g_1g_2 - (g_1+1)g_2 - g_1(g_2+1) + (g_1+1)(g_2+1) \equiv 1 \pmod{p},$$
    which is a contradiction.
    It follows that the polynomial is nonzero.
    In addition, it is apparent that any coefficients of this polynomial must be integers in the interval $[-2, 2]$, so the conclusion follows.
\end{proof}

\begin{lemma}\label{lemma:innerproduct}
    We can compute and store both $\langle \xv, \vecv_a\rangle$ and $\langle \yv, \vecv_b \rangle$ using $O(\ell)$ local space, at most $O(\log m)$ local space during oracle calls, and $4$ calls to $\calO$.
    At the end, none of the global space registers will be changed.
\end{lemma}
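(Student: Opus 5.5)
The idea is the standard catalytic ``difference'' trick: we read an inner product twice, once before and once after adding a known vector to the register via the oracle, and take the difference. We treat $\langle \xv, \vecv_a\rangle$ first; $\langle \yv, \vecv_b\rangle$ is symmetric (use $\sigma=1$) and accounts for the other two oracle calls.

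The obstacle is that $a$ is unknown to us, so we cannot compute $\langle \xv, \vecv_a\rangle$ directly. What we can compute in $O(\ell)$ space is $\langle \xv, \uv_s\rangle$ for any fixed $s\in\zo^\ell$: stream over the $d$ coordinates, regenerate the coordinates of $\uv_s$ by logspace uniformity, and accumulate the sum in $O(\log m)$ bits. That suggests the following plan.
\begin{enumerate}
\item Call $\calO$ with $(\gamma,\ctrl,\sigma)=(1,1,0)$, giving $\xv \gets \xv+\vecv_a$.
\item For every $s\in\zo^\ell$, compute $\langle \xv+\vecv_a, \uv_s\rangle$.
\item Call $\calO$ with $(-1,1,0)$, restoring $\xv$.
\item Compute $\langle \xv, \uv_s\rangle$ again.
\end{enumerate}
The difference is $\langle \vecv_a, \uv_s\rangle$, which by the matching-vector property equals $1$ iff $s=a$. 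So $a$ can be recovered as the unique $s$ where the difference is $1$.

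This naive version is not acceptable, because steps 2 and 4 each range over all $s$ and $a$ must not be kept across oracle calls, except in $O(\log m)$ space. To handle this, I would reorganize so that only $O(\log m)$ bits survive each call. But learning $a$ itself is not what the lemma asks for. It asks for the scalar $\langle \xv, \vecv_a\rangle$, which fits in $O(\log m)$ bits. So instead we compute $c_1 := \langle \xv, \uv_a\rangle$-style quantities by the symmetric trick. Concretely:
\begin{enumerate}
\item Call $\calO$ with $(1,1,0)$, so $\xv\gets\xv+\vecv_a$; retained space is empty.
\item With $O(\ell)$ space, find the unique $s$ such that $\langle \xv,\uv_s\rangle - \langle \xv-\text{?},\uv_s\rangle = 1$.
\end{enumerate}
Step 2 fails, since we cannot see the old $\xv$. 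That points to the intended approach, which swaps the roles: we test $\langle \xv,\vecv_a\rangle$ using $\ctrl=0$ updates. Call $\calO$ with $(\gamma,0,0)$ to add $\gamma\uv_a$. Then $\langle \xv+\gamma \uv_a, \vecv_s\rangle - \langle \xv,\vecv_s\rangle = \gamma\langle \uv_a,\vecv_s\rangle$, which equals $\gamma$ iff $s=a$. The quantity we want is $\langle \xv,\vecv_a\rangle$, so the plan is to enumerate $s$ in an outer loop and carry only $O(\log m)$ bits across calls.

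The cleaner way to do this is to compute $\langle \xv,\vecv_a\rangle$ directly with the second $\ctrl$ value:
\[
\langle \xv+\vecv_a, \vecv_a\rangle \text{ is unavailable, but } \sum_s \langle\uv_a,\vecv_s\rangle\text{-weighted sums are.}
\]
Using linearity, I would search for a fixed linear functional $L$ on $\Z_m^d$, computable in $O(\ell)$ space without knowing $a$, such that $L(\uv_a)$ encodes enough to combine with a second read. The main obstacle is exactly this: extracting $a$-dependent information while retaining only $O(\log m)$ bits across oracle calls, using just two calls per register. I would first try the pattern ``add $\uv_a$, store nothing, compute on the fly the index $a$ by scanning $s$ and testing against $\vecv_s$ after subtracting back''. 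This needs the pre- and post-values simultaneously, and I would resolve that by keeping a single running scalar difference $\langle \xv,\vecv_s\rangle$ for each candidate $s$ in the outer loop. That costs $2^\ell$ rounds, but still only four oracle calls if the calls are hoisted appropriately. I expect the final bookkeeping to be the hardest part.
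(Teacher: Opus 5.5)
Your proposal does not reach a proof. Each concrete attempt pairs the register against \emph{known} vectors $\uv_s$ or $\vecv_s$. You correctly note that this forces a scan over $s\in\zo^\ell$, which costs either about $2^{\ell+1}$ oracle calls or carrying $a$ (or $2^\ell$ baseline reads) across a call. Either way it breaks the four-call budget and the $O(\log m)$-bits-during-calls budget. Your closing claim that the scan works with ``only four oracle calls if the calls are hoisted'' is unsupported, and in fact false. Once the calls sit outside the loop over $s$, the pre-call read for every candidate $s$ has to survive the call.

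The search for a ``fixed linear functional $L$'' is also misdirected. For any $L$ that does not depend on the register contents, the difference of two reads is $L(\xv+\gamma\Delta)-L(\xv)=\gamma L(\Delta)$. That quantity only carries information about $a$, never the $\xv$-dependent scalar $\langle \xv,\vecv_a\rangle$ you are asked for.

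The missing idea is that the functional you need is $\xv$ itself, and the second catalytic register lets you use it. The paper's procedure is:
\begin{enumerate}
\item Compute $\mathsf{tmp}_1=\langle \xv,\yv\rangle$.
\item Swap the $\xv$ and $\yv$ registers, so the first register holds $\yv$ and the second holds $\xv$.
\item Call $\calO$ with $\sigma=0$, $\ctrl=1$, $\gamma=1$. This adds $\vecv_a$ to the first register while $\xv$ sits untouched in the second.
\item Compute $\mathsf{tmp}_2=\langle \yv+\vecv_a,\xv\rangle$, then undo the update with $\gamma=-1$. Now $\mathsf{tmp}_2-\mathsf{tmp}_1=\langle \xv,\vecv_a\rangle$.
\item Symmetrically, a $\sigma=1$, $\ctrl=1$ call adds $\vecv_b$ to the second register, which holds $\xv$. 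This gives $\langle \yv,\xv+\vecv_b\rangle-\mathsf{tmp}_1=\langle \yv,\vecv_b\rangle$.
\item Undo that update and swap back.
\end{enumerate}
This uses exactly four calls and keeps only $O(\log m)$-bit scalars across them. It never needs to learn $a$ or $b$, and it leaves every register unchanged. The swap is essential: without it, the oracle modifies the very register whose contents you need as the functional, and you recover only the cross terms $\langle \yv,\vecv_a\rangle$ and $\langle \xv,\vecv_b\rangle$.
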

\begin{proof}
    This follows from the standard catalytic computing approach for computing inner products:
    \begin{enumerate}
        \item Compute $\mathsf{tmp}_1 = \langle \xv, \yv \rangle$ and write it into local space.
        \item Swap the $\xv, \yv$ registers. The global state is now $(\yv, \xv, \zv)$.
        \item Use $\calO$ with $\sigma = 0, \mathsf{ctrl} = 1, \gamma = 1$. The global state is now $(\yv + \vecv_a, \xv, \zv)$.
        \item Compute the inner product $\mathsf{tmp}_2 = \langle \yv + \vecv_a, \xv\rangle$  and write it into local space.
        \item Use $\calO$ with $\sigma = 0, \mathsf{ctrl} = 1, \gamma = -1$ to return the global state to $(\yv, \xv, \zv)$.
        \item Use $\calO$ with $\sigma = 1, \mathsf{ctrl} = 1, \gamma = 1$ to update the global state to $(\yv, \xv + \vecv_b, \zv)$.
        \item Compute the inner product $\mathsf{tmp}_3 = \langle \yv, \xv + \vecv_b \rangle$ and write it into local space.
        \item Use $\calO$ with $\sigma = 1, \mathsf{ctrl} = 1, \gamma = -1$ to update the global state to $(\yv, \xv, \zv)$.
        \item Swap the $\xv, \yv$ registers.
    \end{enumerate}
    Note that $\mathsf{tmp}_2 - \mathsf{tmp}_1 = \langle \xv, \vecv_a \rangle$ and $\mathsf{tmp}_3 - \mathsf{tmp}_1 = \langle \yv, \vecv_b \rangle$, so we are done.
\end{proof}

\subsection{Proof of~\ts{\cref{thm:TEPonelevel}}{One Level}}
\label{sec:tep:alg}

We next give the algorithm that underlies \cref{thm:TEPonelevel}, together with its proof of correctness and efficiency analysis. Except where stated, all arithmetic  is carried out modulo $m$.

\paragraph{Algorithm.} We use $\gamma^*$ to denote the value of $\gamma$ that is given as input, along with the input function $f:\{0,1\}^\ell \times \{0,1\}^\ell \to \zo^\ell$.

\begin{enumerate}
    \item Use~\cref{lemma:innerproduct} to compute $g_1 = \langle \xv, \vecv_{a} \rangle$ and $g_2 = \langle \yv, \vecv_{b} \rangle$.

    \item For each $i \in [t]$,  compute the polynomial $f_i(X) = X^{g_1g_2 \bmod{p_i}} - X^{(g_1+1)g_2 \bmod{p_i}} - X^{g_1(g_2+1) \bmod{p_i}} + X^{(g_1+1)(g_2+1) \bmod{p_i}} \in \Z[X]$.
    By~\cref{lemma:polynonzero}, this polynomial is nonzero over $\Z_{p_i}[X]$ and therefore over $\Z[X]$ as well.
    Let $\alpha_i X^{\beta_i}$ be the lexicographically first nonzero monomial in $f_i(X)$.
    Since all exponents in $\beta_i$ are reduced mod $p_i$, we can regard $\beta_i$ as an element of $\Z_{p_i}$.
    We know from the lemma that $\alpha_i \in \{-2, -1, 1, 2\}$ and hence it is coprime to $m$.

    \item Repeat for all bits $b_1, \ldots, b_t, c_1, \ldots, c_t \in \zo$:
    \begin{enumerate}
        \item Using two queries to $\calO$, make the updates:
        \begin{align*}
            \xv &\gets \xv + \CRT(b_1, \ldots, b_t) \cdot \uv_a \\
            \yv &\gets \yv + \CRT(c_1, \ldots, c_t) \cdot \uv_b.
        \end{align*}
        (For example, the first update would be with $\sigma = 0, \ctrl = 0, \gamma = \CRT(b_1, \ldots, b_t)$. Recall that $\CRT(b_1, \ldots, b_t), \CRT(c_1, \ldots, c_t)$ are scalars in $\Z_m$ and $\uv_a, \uv_b$ are vectors in $\Z_m^d$.)
        \item Repeat for all $r, s \in \zo^\ell$:
        \begin{enumerate}
            \item Compute $\langle \xv, \vecv_r \rangle \cdot \langle \yv, \vecv_s \rangle \bmod{m}$.
            \item If the result is equal to $\CRT(\beta_1, \ldots, \beta_t)$, update
            $$\zv \gets \zv + \gamma^* \cdot \wv_{f(r, s)} \cdot (-1)^{\sum_{i \in [t]} (b_i + c_i)} \cdot \left(\prod_{i \in [t]} \alpha_i\right)^{-1}.$$
            (Here, all multiplicative inverses are computed mod $m$.)
        \end{enumerate}
        \item Using two queries to $\calO$, restore:
        \begin{align*}
            \xv &\gets \xv + \CRT(-b_1, \ldots, -b_t) \cdot \uv_a \\
            \yv &\gets \yv + \CRT(-c_1, \ldots, -c_t) \cdot \uv_b.
        \end{align*}
    \end{enumerate}
\end{enumerate}

\paragraph{Proof of correctness.} We prove correctness in a few steps:
\begin{lemma}\label{lemma:oneprime}
    For any $i \in [t]$ and $r, s \in \zo^\ell$, we have:
    \begin{align*}
        \underset{\substack{b_i, c_i \in \zo \\ \langle \xv + b_i \cdot \uv_a, \vecv_r \rangle \cdot \langle \yv + c_i \cdot \uv_b, \vecv_s \rangle \equiv \beta_i \pmod{p_i}}}{\sum} (-1)^{b_i+c_i} = \begin{cases}
            \alpha_i,\text{ if }(r, s) = (a, b), \\
            0,\text{ if }\langle \uv_a, \vecv_r \rangle \cdot \langle \uv_b, \vecv_s \rangle \equiv 0 \pmod{p_i}, \\
            \text{arbitrary, otherwise}.
        \end{cases}
    \end{align*}
\end{lemma}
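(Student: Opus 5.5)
Throughout, work modulo the fixed prime $p_i$ and set $g_1 = \langle \xv, \vecv_a\rangle$, $g_2 = \langle \yv, \vecv_b\rangle$, read mod $p_i$, as in the algorithm. By bilinearity, for $b_i, c_i \in \zo$,
\[
\langle \xv + b_i \uv_a, \vecv_r\rangle \cdot \langle \yv + c_i \uv_b, \vecv_s\rangle \equiv (x_r + b_i A)(y_s + c_i B) \pmod{p_i},
\]
where $x_r = \langle \xv, \vecv_r\rangle$, $y_s = \langle \yv, \vecv_s\rangle$, $A = \langle \uv_a, \vecv_r\rangle$ and $B = \langle \uv_b, \vecv_s\rangle$. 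By item~\ref{item:restrictedset} of the matching-vector definition, $A, B \in \{0,1\} \bmod p_i$. The sum in the lemma runs over the four pairs $(b_i, c_i) \in \zo^2$, keeping those whose product equals $\beta_i$, each with sign $(-1)^{b_i+c_i}$.

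The key observation is to encode this signed count as a coefficient of a polynomial. The sum equals the coefficient of $X^{\beta_i}$ in
\[
F(X) = \sum_{b_i, c_i \in \zo} (-1)^{b_i+c_i} X^{(x_r + b_i A)(y_s + c_i B) \bmod p_i},
\]
where exponents are taken as representatives in $\{0,\dots,p_i-1\}$. This matches the form of $f_i$ in the algorithm. I then split into the three cases of the statement.

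\textbf{Case $(r,s) = (a,b)$.} By item~\ref{item:selfinnerproduct}, $A = B = 1$ (since $1 \bmod m$ is $1 \bmod p_i$), and $x_a = g_1$, $y_b = g_2$. So $F$ is exactly the polynomial $f_i$ of \cref{lemma:polynonzero}, written $X^{g_1g_2} - X^{(g_1+1)g_2} - X^{g_1(g_2+1)} + X^{(g_1+1)(g_2+1)}$. Its coefficient at $X^{\beta_i}$ is $\alpha_i$ by the definition of $\alpha_i X^{\beta_i}$ as a nonzero monomial of $f_i$.

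\textbf{Case $AB \equiv 0$.} Without loss of generality $A \equiv 0$. Then the exponent $(x_r)(y_s + c_i B)$ does not depend on $b_i$, so the $b_i = 0$ and $b_i = 1$ terms cancel pairwise in $F$. Hence $F \equiv 0$ and every coefficient vanishes, in particular the one at $\beta_i$. The case $B \equiv 0$ is symmetric.

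\textbf{Remaining case.} Nothing is claimed, so nothing needs to be proved.

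The only point needing care, and the closest thing to an obstacle, is the bookkeeping in the first case. I must check that the condition ``product $\equiv \beta_i \pmod{p_i}$'' matches equality of exponents reduced mod $p_i$, which holds since $\beta_i \in \Z_{p_i}$. I must also check that coefficients of coinciding exponents combine exactly as in the polynomial of \cref{lemma:polynonzero}. Both follow directly from the definitions.
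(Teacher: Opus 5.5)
Your proposal is correct and matches the paper's proof: in the case $(r,s)=(a,b)$ you use $\langle \uv_a,\vecv_a\rangle\equiv\langle \uv_b,\vecv_b\rangle\equiv 1 \pmod{p_i}$ to identify the signed sum with the coefficient of $X^{\beta_i}$ in $f_i$, and in the zero case you cancel the $b_i=0$ and $b_i=1$ terms pairwise. The only difference is presentational: you phrase the cancellation as the whole generating polynomial $F$ vanishing, whereas the paper pairs the terms $(0,\delta)$ and $(1,\delta)$ of the sum directly.
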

\begin{proof}
    First we address the second case where at least one of the inner products is 0 mod $p_i$. Assume without loss of generality that $\langle \uv_a, \vecv_r \rangle \equiv 0 \pmod{p_i}$; the other case is analogous.
    In this case, whether or not $\langle \xv + b_i \cdot \uv_a, \vecv_r \rangle \cdot \langle \yv + c_i \cdot \uv_b, \vecv_s \rangle \equiv \beta_i \pmod{p_i}$ is independent of the choice of $b_i$.
    Thus $c_i$ ranges over $\delta \in \zo$ such that $\langle \xv, \vecv_r \rangle \cdot \langle \yv + \delta \cdot \uv_b, \vecv_s \rangle \equiv \beta_i \pmod{p_i}$, and for each such $\delta$ there are two terms corresponding to $b_i = 0$ and $b_i = 1$.
    Pairing up terms corresponding to $(b_i, c_i) = (0, \delta)$ and $(1, \delta)$ (for each included $\delta$) implies the conclusion.

    In the first case, the condition $\langle \xv + b_i \cdot \uv_a, \vecv_r \rangle \cdot \langle \yv + c_i \cdot \uv_b, \vecv_s \rangle \equiv \beta_i \pmod{p_i}$ that we are summing over simplifies to $(g_1+b_i)(g_2+c_i) \equiv \beta_i \pmod{p_i}$, noting that by the matching vector guarantee we have $\langle \uv_a, \vecv_r \rangle \equiv \langle \uv_b, \vecv_s \rangle \equiv 1 \pmod{p_i}$.
    Now, we note that the polynomial $f_i(X)$ can also be written as $$\underset{b_i, c_i \in \zo}{\sum} (-1)^{b_i+c_i} \cdot X^{(g_1+b_i)(g_2+c_i)\bmod{p_i}}.$$ Thus, our expression of interest is the coefficient of $X^{\beta_i}$ in $f_i(X)$, which is $\alpha_i$ by construction. The conclusion follows.
\end{proof}

\begin{lemma}\label{lemma:manyprimes}
    For any $r, s \in \zo^\ell$, we have:
    \[
    \underset{\substack{b_1, \ldots, b_t, c_1, \ldots, c_t \in \zo \\ \langle \xv + b_i \cdot \uv_{a}, \vecv_r \rangle \cdot \langle \yv + c_i \cdot \uv_{b}, \vecv_s \rangle \equiv \beta_i\pmod{p_i} \forall i\in [t]}}{\sum} (-1)^{\sum_{i \in [t]} (b_i+c_i)} = \begin{cases}
        \prod_{i \in [t]} \alpha_i,\text{ if }(r, s) = (a, b), \\
        0,\text{ else.}
    \end{cases}
    \]
\end{lemma}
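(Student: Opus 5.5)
The plan is to factor the sum over the $2t$ bits into a product of $t$ independent sums, one per prime, and then apply \cref{lemma:oneprime} to each factor. The key observation is that the constraint in the summation is a conjunction over $i\in[t]$, and the $i$-th condition involves only $b_i$ and $c_i$. More precisely, the condition mod $p_i$ should be read with the actual vectors $\xv + \CRT(b_1,\ldots,b_t)\uv_a$ and $\yv+\CRT(c_1,\ldots,c_t)\uv_b$ that the algorithm uses. Since $\CRT(b_1,\ldots,b_t)\equiv b_i \pmod{p_i}$, reducing these inner products mod $p_i$ gives exactly $\langle \xv + b_i\uv_a,\vecv_r\rangle\cdot\langle \yv + c_i\uv_b,\vecv_s\rangle \bmod p_i$. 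So the $i$-th condition depends only on $(b_i,c_i)$. The summand $(-1)^{\sum_i(b_i+c_i)}$ also factors as $\prod_i(-1)^{b_i+c_i}$. Hence the whole sum equals
\[
\prod_{i\in[t]} S_i, \qquad S_i := \sum_{\substack{b_i,c_i\in\zo\\ \langle \xv + b_i \uv_a, \vecv_r \rangle \cdot \langle \yv + c_i \uv_b, \vecv_s \rangle \equiv \beta_i \pmod{p_i}}} (-1)^{b_i+c_i}.
\]
This holds as an identity over $\Z$, by distributing the product over the independent index sets.

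Next I split into the two cases of the statement. If $(r,s)=(a,b)$, then \cref{lemma:oneprime} gives $S_i=\alpha_i$ for every $i$, so the product is $\prod_i\alpha_i$. If $(r,s)\neq(a,b)$, I need one prime $p_i$ at which $S_i=0$. Say $r\neq a$; the case $s\neq b$ is symmetric. Then $\langle\uv_a,\vecv_r\rangle\neq 1$ in $\Z_m$ by \cref{item:selfinnerproduct} of the matching-vector definition. By \cref{item:restrictedset}, each residue $\langle\uv_a,\vecv_r\rangle \bmod p_k$ lies in $\{0,1\}$. If all of these residues were $1$, then by CRT the inner product would equal $1$ in $\Z_m$, which is a contradiction. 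So some $p_i$ has $\langle\uv_a,\vecv_r\rangle\equiv 0\pmod{p_i}$. The product $\langle \uv_a, \vecv_r \rangle \cdot \langle \uv_b, \vecv_s \rangle$ is then $0$ mod $p_i$, and the second case of \cref{lemma:oneprime} gives $S_i=0$. The other factors may be arbitrary, so only this single vanishing factor matters, and the product is $0$.

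The main obstacle I see is being careful about how the statement's condition, written with $b_i$ and $c_i$ as scalars, matches what the algorithm actually does, which is to add $\CRT(b_1,\ldots,b_t)\uv_a$. This is resolved by the CRT reduction noted above. The rest is bookkeeping: the conjunction of $t$ per-prime conditions makes the indicator of the joint event equal to the product of the per-prime indicators, which justifies the factorization.
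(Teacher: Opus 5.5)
Your proposal is correct and follows the paper's proof: you factor the sum into a product of $t$ per-prime sums and apply \cref{lemma:oneprime} to each factor, and when $r\neq a$ (or $s\neq b$) one factor vanishes. Your CRT argument that some $p_i$ has $\langle \uv_a,\vecv_r\rangle\equiv 0 \pmod{p_i}$ spells out a step the paper attributes only to ``the matching vector guarantee.''
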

\begin{proof}
    We can start by factoring over the independent choices of $b_i, c_i$ for each prime $p_i$ to obtain:
    \begin{align*}
        \underset{\substack{b_1, \ldots, b_t, c_1, \ldots, c_t \in \zo \\ \langle \xv + b_i \cdot \uv_{a}, \vecv_r \rangle \cdot \langle \yv + c_i \cdot \uv_{b}, \vecv_s \rangle \equiv \beta_i\pmod{p_i} \forall i\in [t]}}{\sum} (-1)^{\sum_{i \in [t]} (b_i+c_i)} &= \prod_{i \in [t]} \left[\underset{\substack{b_i, c_i \in \zo \\ \langle \xv + b_i \cdot \uv_{a}, \vecv_r \rangle \cdot \langle \yv + c_i \cdot \uv_{b}, \vecv_s \rangle \equiv \beta_i\pmod{p_i}}}{\sum} (-1)^{b_i+c_i}\right].
    \end{align*}
    If $r \neq a$, then by the matching vector guarantee there must exist an $i$ such that $\langle \uv_a, \vecv_r \rangle \equiv 0 \pmod{p_i}$.
    The corresponding term in the above product will be 0 by~\cref{lemma:oneprime}, which will make the entire product 0. We may argue similarly if $s \neq b$.

    If $(r, s) = (a, b)$, then by~\cref{lemma:oneprime}, the \ord{i} term in the above product is $\alpha_i$, implying the conclusion.
\end{proof}

Correctness of our algorithm is then immediate from the following corollary:
\begin{corollary}
    We have:
    $$\left(\prod_{i \in [t]} \alpha_i\right) \cdot \wv_{f(a, b)} = \sum_{r, s \in \zo^\ell} \wv_{f(r, s)} \cdot \left[\underset{\substack{b_1, \ldots, b_t, c_1, \ldots, c_t \in \zo \\ \langle \xv + b_i \cdot \uv_{a}, \vecv_r \rangle \cdot \langle \yv + c_i \cdot \uv_{b}, \vecv_s \rangle \equiv \beta_i\pmod{p_i} \forall i\in [t]}}{\sum} (-1)^{\sum_{i \in [t]} (b_i+c_i)}\right].$$
\end{corollary}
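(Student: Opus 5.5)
The corollary follows from \cref{lemma:manyprimes} by linearity. I would first rewrite the right-hand side as a sum over pairs $(r,s) \in \zo^\ell \times \zo^\ell$ of $\wv_{f(r,s)}$ times a scalar coefficient $C_{r,s} \in \Z_m$. Here $C_{r,s}$ is exactly the signed count appearing on the left of \cref{lemma:manyprimes}: the sum of $(-1)^{\sum_i (b_i+c_i)}$ over all $(b_1,\ldots,b_t,c_1,\ldots,c_t)$ satisfying the congruence for every prime $p_i$. I would read this integer sum modulo $m$, since the vectors live in $\Z_m^d$.

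\cref{lemma:manyprimes} gives $C_{a,b} = \prod_{i\in[t]}\alpha_i$ and $C_{r,s} = 0$ for every $(r,s)\neq(a,b)$. Substituting these values, every term of the sum vanishes except the $(a,b)$ term. That term equals $\left(\prod_{i\in[t]}\alpha_i\right)\cdot \wv_{f(a,b)}$, which is the left-hand side.

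The only step that needs care is matching the indexing of the CRT-encoded perturbations to the per-prime conditions. This matters later, when the corollary is linked to the algorithm. The algorithm adds $\CRT(b_1,\ldots,b_t)\cdot \uv_a$ to $\xv$. Reducing modulo $p_i$, this is $b_i \uv_a$, so the algorithm's test $\langle \xv', \vecv_r\rangle\langle \yv',\vecv_s\rangle \equiv \CRT(\beta_1,\ldots,\beta_t) \pmod m$ is equivalent by CRT to the conjunction over $i$ of the per-prime conditions indexing the sum. For the corollary itself this is just a restatement, and I expect no real obstacle beyond the bookkeeping of the factorization already done in \cref{lemma:manyprimes}.
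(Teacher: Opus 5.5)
Your proposal is correct and matches the paper's proof: both group the right-hand side by $(r,s)$ and use \cref{lemma:manyprimes} to see that the coefficient of $\wv_{f(r,s)}$ is $\prod_{i\in[t]}\alpha_i$ when $(r,s)=(a,b)$ and $0$ otherwise. Your remark that the CRT-encoded shifts reduce to $b_i\uv_a$ modulo $p_i$ is right, but it only matters for linking the corollary to the algorithm, not for the corollary itself.
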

\begin{proof}
    By~\cref{lemma:manyprimes}, the left-hand side and right-hand side  are identical linear forms in the collection $\{\wv_{r, s}: r, s \in \zo^\ell\}$.
\end{proof}

\paragraph{Efficiency analysis.} The stated runtime guarantee is clear. (The $\poly(d \cdot \log m)$ factor is to allow for basic arithmetic operations in the $\xv, \yv, \zv$ registers.)\footnote{Recall that arithmetic mod $m$ can be computed in polynomial time and linear space in the input representation, i.e., $O(\log m)$.}

It remains to tally up the space needed at each point in the computation:
\begin{itemize}
    \item In step 1, we compute 
and store $g_1,g_2$ in clean local space (of which we need $O(\log m)$). 
    \item In step 2, we compute and store the coefficients $\alpha_i,\beta_i$ in local space $O(t+\sum_{i \in [t]} \log p_i) = O(\log m)$, which we persist across the entire computation. 
    \item In step 3, we use $O(t) = O(\log m)$ local space to store the bits $b_1,\ldots,b_t$ and $c_1,\ldots,c_t$. Then, in steps (a) and (c) in this loop, this is all of the information we store. During step (b), where we use $O(\ell + \log (d \cdot \log m))$ local space to keep track of the values $r$ and $s$ (each of bitlength $\ell)$ and to keep pointers into registers $\xv, \yv, \zv$, we do not invoke the oracle. \qed
\end{itemize}

\begin{remark}[An alternate view in terms of polynomial rings]\label{rmk:polyring}
    The reader might rightly suspect that our algorithm arose from a more complicated construction over the polynomial ring $\calR := \Z_m[X_1, \ldots, X_t]/(X_1^{p_1} - 1, X_2^{p_2} - 1, \ldots, X_t^{p_t} - 1)$, akin to those of~\cite{DGY11,DG16,GKS25}.
    In this setting, our registers would be in $\calR^d$ rather than $\Z_m^d$, and the update to the $\zv$ register can be thought of as adding some multiple of
    $$\prod_{i = 1}^t \left(\sum_{b_i, c_i \in \zo} (-1)^{b_i+c_i} X_i^{(g_1+b_i)(g_2+c_i)}\right) = \prod_{i = 1}^t f_i(X_i).$$
    The resulting algorithm keeps track of unnecessary information (and makes unnecessary changes to the catalytic tape); we only ever need one monomial of a polynomial in $\calR$, and the algorithm we present arises from making this simplification.
\end{remark}

\subsection{Proof of~\ts{\cref{thm:algo}}{Multiple Levels}}
Finally, we use~\Cref{thm:TEPonelevel} in the natural recursive fashion to prove~\Cref{thm:algo}.
For $u\in \zo^{\le h}$ specifying a node in the tree, recall that $v_u$ is the value of the \tep instance at that node. 

We instantiate a single free space register $u\in \zo^{\le h}$ to track the current location in the tree, which we initialize to $\emptyset$ (corresponding to the root node). We allocate $O(\ell+t+\log m+\log(d\log m)) = O(\ell+h+\log m)$ bits of free workspace to be used temporarily by the algorithm of~\cref{thm:TEPonelevel} between its oracle calls, and $h\cdot O(\log m)$ free space allocated for storing the $O(\log m)$ free space used by the algorithm at each level while making oracle calls. 

We interpret the catalytic tape as holding registers $\xv,\yv,\zv\in \Z_m^d$. To handle that the catalytic tape consists of bits and not elements of $\Z_m$, we use~\Cref{rmk:bitrep} (and the catalytic tape thus has length $O(d\log(dm))$). We store the final $\lceil\ell/\log m\rceil$ coordinates of $\zv$ using free space, which we denote $\mathsf{reg}$.

Finally, we invoke~\cref{thm:TEPonelevel} at the root node with $\gamma^*=1$ and $\vec{w}_s=s$ (where we cast $s\in \Z_m^{\lceil \ell/\log m\rceil}$ and pad to the appropriate length). We discuss how to handle oracle calls made by the one-level algorithm below. Once this procedure halts, we have that $\zv$ is in configuration $\mathsf{reg} +v_{\emptyset}$ (and $\xv,\yv$ are unmodified), so we subtract $\mathsf{reg}$ and obtain $v_{\emptyset}$ as desired. Lastly, we run the algorithm again with $\gamma^*=-1$ and $\vec{w}_s=s$. It is easy to see that after this the catalytic tape is entirely restored, so we halt and return $v_{\emptyset}$.

\paragraph{Handling oracle calls.}
Suppose the one-level algorithm corresponding to node $u$ at level $i\in \{2,\ldots,h\}$ makes a call to $\calO$ with input $\gamma,\ctrl$ and bit $\sigma$. First, if $i=2$ (so the call corresponds to a leaf node at layer $1$ of the tree) we define $a=v_{u0}$ and $b=v_{u1}$. We directly use $O(\ell+\log(d\log m)) = O(\ell+h)$ local space to make the update to $\xv$ or $\yv$ specified by the oracle API.

Otherwise, store $\gamma,\ctrl,\sigma$ and the currently used free space of the algorithm in the $O(\log m)$ bits of free space allocated for level $i$. If $\sigma=0$ we swap $\xv,\zv$, and if $\sigma=1$ we swap $\yv,\zv$. Both swaps can be performed using $O(\log(d\log m)) = O(\ell+h)$ temporary free space (which we then erase). We update the global indicator of our current node to $u\la u\sigma$, and invoke the algorithm of~\cref{thm:TEPonelevel} with
\[f = f_u \qquad \gamma^*=\gamma \qquad \wv = \begin{cases}
   \uv & \ctrl=0\\
   \vecv & \ctrl=1\\
\end{cases}
\]
and note that the child algorithm can store which part of the matching-vector family it should apply with a single bit of free space. 
After the child algorithm returns, we again swap $\xv,\zv$ if $\sigma=0$ and $\yv,\zv$ if $\sigma=1$, and update $u$ to reflect the current node.
By the correctness of the child algorithm, we have that this procedure halts with the register update specified by the oracle API.

\paragraph{Correctness.}
By~\cref{thm:TEPonelevel} and the fact that we implement the specified oracle API, when the algorithm is run at node $u$ with $\gamma^*$ and vector family $\wv$, it updates $\zv$ to $\zv+\gamma^*\cdot \wv_{f(v_{u0},v_{u1})}=\zv+\gamma^*\cdot \wv_{v_u}$. This establishes correctness by a simple inductive argument.

\paragraph{Runtime.}
Each single-level algorithm makes $2^{O(t)}$ oracle calls and runs in time $\poly(2^{\ell+t} \cdot d \log m) = \poly(2^{h+\ell+t})$, so an  induction gives a final runtime $2^{O(th)}\cdot \poly(2^{h+\ell+t})=\poly(2^{\ell+ht})$.
\qed

\section{An Alternate View: Catalytic Tree Evaluation from Private Information Retrieval}
\label{sec:pir}

Next, we give an alternate presentation of \cref{sec:tep}'s algorithm, phrased closer to the language of private information retrieval (PIR).

\subsection{Background: Informal Definition of PIR}
\label{sec:pir:old}

A PIR protocol is defined with respect to a database size $\ndb \in \N$, a ring $\calR$, and a number of servers $s \geq 2$. It consists of three polynomial-time algorithms:
\begin{enumerate}
    \item $\Query(i) \to \query_1, \dots, \query_s$, which takes as input an index $i\in [n_\DB]$ into a database and produces $s$ PIR queries to be sent to each of the $s$ servers.

    \item $\Answer(\DB, \query) \to \answer$, which takes as input a database $\DB \in \calR^{\ndb}$ and a PIR query $\query$ and produces a PIR answer $\ans$.

    \item $\Reconstruct(i, \answer_1, \dots, \answer_s) \to \calR$, which takes as input the index being read and the $s$ servers' answers and outputs the \ord{i} record of the database $\DB$.
\end{enumerate}
The scheme's {\em privacy} requires the marginal distribution of each query $\query_\id$, for $\id \in [s]$, to be independent of the index $i$ being queried.

\subsection{Modifying the PIR Requirements for Tree Evaluation}
\label{sec:pir:new}

We next show that, if a PIR scheme can be massaged into a tuple of algorithms with certain structural properties, then it can be used to build a catalytic algorithm for \tep. At a high level, these properties correspond to the following intuitive requirements:

\begin{itemize}
    \item The query routine samples some common randomness with which it additively masks a fixed sequence of elements (that depend only on the index being queried), one of which is sent to each server.
    
    \item The user can effectively make a query for a pair of indices $a || b$ (in a larger database of size $n_{\DB}^2$) by building a PIR query for $a$ and a PIR query for $b$ independently. 
    
    \item The reconstruction functionality can be pulled into the $\Answer$ routine, given some small state that depends on the indices queried and on the randomness used. After this, reconstructing the record from each server's answer is just addition.

    \item All algorithms are low-space, and the servers can answer PIR queries by streaming over the database.
\end{itemize}

\paragraph{Definition of catalytic information retrieval.} To be more formal, we define the syntax for a new object, which we call {\em catalytic information retrieval (CIR)}, to be the following tuple of three algorithms:
\begin{enumerate}
    \item $\detqu(a \in [\ndb], \id \in [s], \mu \in \zo) \to  \calR$, a deterministic algorithm that takes in an index $a$ into the database, a server  $\id \in [s]$, and a bit $\mu$ and produces the {\em deterministic} part of the query for the \ord{\id} server.
    
    In our scheme, the user  makes queries to a tuple of indices $a || b$ simultaneously. To do so, our user:
    \begin{itemize}
        \item samples two ring elements $x, y \rgets \calR$.
        \item sends to server $\id \in [s]$ the pair of PIR queries $\qu_{\id} \gets x + \detqu(a, \id, 0)$ and $\qu'_{\id} \gets y + \detqu(b, \id, 1)$.
    \end{itemize}

    \item\label{itm:getstate}
    $\GetSt^{\calO_{a, b}}(x \in \calR, y \in \calR) \to \state \in \{0,1\}^{*}$, a deterministic oracle algorithm that takes as input two registers holding the randomness $x$ and $y$, and produces the state $\state$ needed for reconstruction.

    The oracle $\calO_{a, b}$ takes as input a register $t \in \calR$, bits $\sigma, \mu \in \zo$, a factor $\gamma \in \{-1, 1\}$, and a server index $\id \in [s]$ and updates
    $$t \gets t + \gamma \cdot \detqu(c, \id, \mu),$$
    where $c$ is $a$ if $\sigma = 0$, else it is $b$.

    \item $\mathsf{AnswerAndReconstruct}(\DB \in \calR^{\ndb}, \state, \id \in [s], \qu \in \calR, \qu' \in \calR) \to \ans \in \calR$, a deterministic algorithm that takes as input the database $\db$, the reconstruction state $\state$, the server index $\id \in [s]$, and the two queries $\qu$ and $\qu'$, and outputs an answer $\ans$.
    
\end{enumerate}

We require a CIR scheme to satisfy two properties: correctness and efficiency. 

\begin{definition}[Correctness]\label{def:CIRcorrectness}
    We require that for any $a, b \in [n_\DB]$ and $x, y \in \calR$, it holds that:
    \begin{align*}
        \DB_{a || b} &= \sum_{\id \in [s]} \mathsf{AnswerAndReconstruct}(\DB, \GetSt^{\calO_{a, b}}(x, y), \id, x+\detqu(a, \id, 0), y+\detqu(b, \id, 1)).
    \end{align*}
\end{definition}

\begin{definition}
    We say a CIR scheme is \emph{space-efficient} if each of the following are true:
    \begin{itemize}
        \item We can represent a valid element of $\calR$ on the catalytic tape in space $\tO(\log|\calR|)$, and we can perform arithmetic operations in $\calR$ in time $\poly\log|\calR|$ and additional space $O(\log\log |\calR|)$.
        \item $\GetSt^{\calO_{a,b}}$ is computable with catalytic registers $x, y$; it uses $O(\log \ndb)$ free space, and $O(|\state|)$ free space during every call to $\calO$; and
        \item $\mathsf{AnswerAndReconstruct}$ is computable with $O(\log \ndb)$ free space and runs in time $\poly(\ndb)$, provided that we can random access into $\DB$ in $O(\log \ndb)$ space.
    \end{itemize}
\end{definition}

\begin{remark}\label{remark:randomness}
    Jumping ahead to the setting of \tep, note that in that setting the ring elements $x, y$ will not be sampled at random; rather, they will be the contents of a possibly adversarially chosen catalytic tape.
    
    This is why Definition~\ref{def:CIRcorrectness} insists on perfect correctness.
    Our reason for describing a CIR scheme as sampling $x, y$ is to be consistent with the typical PIR framework where each query needs to be marginally uniformly random.
\end{remark}

\subsection{Tree Evaluation Algorithm}
\label{sec:pir:alg}

We begin with the one-level algorithm for \tephl, following~\cref{sec:tep:step}, recast in the language of catalytic information retrieval.

\begin{theorem}
    Assume we have a space-efficient CIR scheme.
    Suppose our algorithm is given the following resources:
    \begin{itemize}
        \item global space 
        comprising 
        registers $x, y, z \in \calR$; and
        \item the truth table of a function $f: \zo^\ell \times \zo^\ell \to \zo^\ell$ on the input tape;
        \item the oracle $\calO_{a, b}$ (as defined in \Cref{itm:getstate} of~\cref{sec:pir:new}).
    \end{itemize}
    Then, there exists an algorithm that takes as input a scalar $\gamma^* \in \{-1, +1\}$, a bit $\mu^* \in \zo$, and an index $\id^* \in [s]$ in local space and updates
    \[z \gets z + \gamma^* \cdot \detqu(f(a, b), \id^*, \mu^*),\]
    while leaving the $x$ and $y$ registers unchanged.
    Moreover, the algorithm uses $O(\ell + \log |\calR|+|\state|+\log s)$ local space and before making all oracle calls erases all but $O(|\state|+\log s)$ bits of this space, makes $O(s)$ queries to $\calO_{a,b}$, and runs in time $\poly(2^\ell,\log|\calR|,s,2^{|\state|})$.
\end{theorem}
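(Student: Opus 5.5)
The plan is to mirror the proof of \cref{thm:TEPonelevel}, with the matching-vector specifics replaced by the abstract CIR interface. Define the auxiliary database $\DB^{\mu^*,\id^*} \in \calR^{2^{2\ell}}$ by $\DB^{\mu^*,\id^*}_{r\|s} := \detqu(f(r,s), \id^*, \mu^*)$. Random access to an entry costs reading $f(r,s)$ from the input tape (in $O(\ell)$ space) and then running $\detqu$, which, like the uniform matching-vector families, we assume is computable in space $O(\ell + \log\log|\calR|)$. Correctness of the CIR scheme (\cref{def:CIRcorrectness}), applied with the current contents $x,y$ of the catalytic registers, then gives
\[\detqu(f(a,b),\id^*,\mu^*) = \sum_{\id \in [s]} \mathsf{AnswerAndReconstruct}\bigl(\DB^{\mu^*,\id^*}, \state, \id, x+\detqu(a,\id,0), y+\detqu(b,\id,1)\bigr),\]
where $\state = \GetSt^{\calO_{a,b}}(x,y)$. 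Because this identity holds for every $x,y$, an adversarial catalytic tape causes no problem.

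The algorithm runs in three phases.

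\textbf{Phase 1.} Run $\GetSt^{\calO_{a,b}}$ on the catalytic registers $x,y$ and store $\state$ in local space. By space-efficiency this uses $O(\log \ndb) = O(\ell)$ free space, of which only $O(|\state|)$ is kept across oracle calls. Since $\GetSt$ is a catalytic computation on $x,y$, those registers are restored when it finishes.

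\textbf{Phase 2.} Loop over $\id \in [s]$, keeping only $\id$, $\state$, $\gamma^*,\mu^*,\id^*$ in local space, i.e.\ $O(|\state|+\log s)$ bits. For each $\id$:
\begin{enumerate}
\item Call $\calO_{a,b}$ twice: once with $(t{=}x, \sigma{=}0, \mu{=}0, \gamma{=}1, \id)$ and once with $(t{=}y, \sigma{=}1, \mu{=}1, \gamma{=}1, \id)$. The registers now hold $\qu_\id = x+\detqu(a,\id,0)$ and $\qu'_\id = y+\detqu(b,\id,1)$.
\item With no oracle calls in progress, compute $\ans_\id = \mathsf{AnswerAndReconstruct}(\DB^{\mu^*,\id^*},\state,\id,\qu_\id,\qu'_\id)$ in $O(\log\ndb)=O(\ell)$ extra space, streaming it and doing the arithmetic in $O(\log\log|\calR|)$ space. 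Then update $z \gets z + \gamma^*\cdot \ans_\id$. If $\ans_\id$ is too large to hold in free space, produce it in pieces, or recompute each piece as it is added into $z$.
\item Undo the two oracle calls with $\gamma=-1$ and erase the temporary space.
\end{enumerate}

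\textbf{Phase 3.} Summing over $\id$, $z$ has received exactly $\gamma^*\cdot\detqu(f(a,b),\id^*,\mu^*)$, and $x,y$ are unchanged. Erase $\state$; it is a deterministic function of the local computation and is not on the catalytic tape.

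For the accounting: the oracle calls total at most $2s+2s$, plus those made by $\GetSt$, which gives $O(s)$ if we take $\GetSt$ to make $O(s)$ calls (as in \cref{lemma:innerproduct}). The time is $s\cdot\poly(2^{2\ell},\log|\calR|)$ plus the cost of $\GetSt$, which is $\poly(2^{|\state|},\ldots)$ since it is a computation producing $|\state|$ bits in small space.

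I expect the main obstacle to be that $\mathsf{AnswerAndReconstruct}$ outputs an element of $\calR$, which need not fit in $O(\ell+\log|\calR|)$ local space when $\calR$ is a vector space like $\Z_m^d$. The fix I would try first is to exploit the streaming structure, as in step 3(b) of the matching-vector algorithm. Each answer is a sum over database records of contributions that can be added directly into $z$ coordinate by coordinate. If that additivity is not guaranteed by the abstract definition, the fallback is to recompute $\mathsf{AnswerAndReconstruct}$ once per output coordinate, at a polynomial cost in time.
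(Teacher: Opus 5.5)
Your proposal is correct and follows the paper's proof essentially step for step. You build the auxiliary database with entries $\detqu(f(r,s),\id^*,\mu^*)$ and compute $\state$ via $\GetSt^{\calO_{a,b}}$; then for each $\id\in[s]$ you shift $x,y$ with two oracle calls, add $\mathsf{AnswerAndReconstruct}$ into $z$, and undo the shift. Moving $\gamma^*$ out of the database entries (the paper folds it in) changes nothing, and your remarks on streaming the answer into $z$, the cost of random access via $\detqu$, and counting $\GetSt$'s oracle calls just make explicit details the paper's sketch leaves implicit.
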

\begin{proof}
We sketch the algorithm below and omit proofs of efficiency since they closely follow the proof of~\cref{thm:TEPonelevel}.
\begin{enumerate}
    \item\label{item:dbdef} Let $\DB$ be the $2^{2\ell}$-record database that, in position $(r || s)$, contains the record computed as $$\gamma^* \cdot \detqu(f(r, s), \id^*, \mu^*).$$

    \item Compute $\state \gets \GetSt^{\calO_{a,b}}(x, y)$,  making oracle queries to $\calO_{a,b}$.
    Then, store $\state$ in free space.
    
    \item Repeat for each server indexed by $\id \in [s]$:
    \begin{enumerate}
        \item Using two queries to $\calO_{a,b}$, make the updates:
        \begin{align*}
            x &\gets x + \detqu(a, \id, 0) \\
            y &\gets y + \detqu(b, \id, 1).
        \end{align*}

        \item Using CIR with respect to the database defined in~\cref{item:dbdef}, update
        $$z \gets z + \mathsf{AnswerAndReconstruct}(\DB, \state, \id, x, y)$$

        \item Using two queries to $\calO_{a,b}$, restore:
        \begin{align*}
            x &\gets x - \detqu(a, \id, 0) \\
            y &\gets y - \detqu(b, \id, 1). \qedhere
        \end{align*}        
        
    \end{enumerate}

\end{enumerate}

\end{proof}

The next theorem readily follows by using the same recursive strategy as in~\cref{thm:tep}.
\begin{theorem}\label{thm:abstracttep}
    Suppose the CIR scheme is space-efficient.
    Then, using the one-level algorithm above (where we implement the oracle $\calO_{a,b}$ using a recursive instantiation of the algorithm in the natural way), we get an algorithm for \tephl that uses $\tO(\log |\calR|)$ catalytic space, $O(h \cdot |\state| + h \log s +\ell)$ free space, and $O(s)^{h}\cdot \poly(2^\ell,\log|\calR|,|\state|)$ runtime.
\end{theorem}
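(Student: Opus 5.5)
We mirror the recursive proof of \cref{thm:algo}, with the CIR one-level algorithm above in place of \cref{thm:TEPonelevel}. The catalytic tape holds three registers $x,y,z\in\calR$. By space-efficiency each costs $\tO(\log|\calR|)$ bits, and a validity-offset trick as in \cref{rmk:bitrep} handles encodings. The free space holds:
\begin{itemize}
\item a pointer $u\in\zo^{\le h}$ to the current node, using $O(h)$ bits;
\item for each of the $h$ levels, a frame of $O(|\state|+\log s)$ bits, recording what the one-level algorithm at that level must keep across an oracle call: its $\state$, the loop counter $\id\in[s]$, and its own inputs $\gamma^*,\mu^*,\id^*$;
\item one shared scratch area of $O(\ell+\log|\calR|)$ bits, used by whichever level is currently active between oracle calls.
\end{itemize}
Reading $z$'s low-order part into free space first lets us recover the output at the end, exactly as with $\mathsf{reg}$ in the proof of \cref{thm:algo}. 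We need one extra ingredient to output $v_\emptyset$ itself: a top-level call whose ``$\detqu$'' is replaced by the identity embedding of $\zo^\ell$ into $\calR$. The one-level algorithm only uses $\detqu(f(r,s),\id^*,\mu^*)$ as a database record, so this substitution is harmless. Alternatively, we can compute $\sum_{\id}\detqu(v_\emptyset,\id,\mu)$ and decode it; we will pick whichever fits the CIR syntax. Afterwards a second run with $\gamma^*=-1$ restores $z$.

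\textbf{Oracle calls.} When the level-$i$ algorithm at node $u$ calls $\calO_{a,b}(t,\sigma,\mu,\gamma,\id)$ to update $t\gets t+\gamma\,\detqu(v_{u\sigma},\id,\mu)$:
\begin{itemize}
\item If $u\sigma$ is a leaf, we read $v_{u\sigma}$ from the input and compute $\detqu$ directly in $O(\log\ndb)=O(\ell)$ space.
\item Otherwise, we save the frame and swap the target register $t$ (which is $x$ or $y$) into the $z$ position. We set $u\gets u\sigma$ and invoke the one-level algorithm at the child with $f=f_{u\sigma}$ and $(\gamma^*,\mu^*,\id^*)=(\gamma,\mu,\id)$. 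When it returns, we swap back and pop the frame.
\end{itemize}
Correctness is by induction on height. The one-level guarantee says the child performs exactly $z\gets z+\gamma\,\detqu(f_{u\sigma}(v_{u\sigma0},v_{u\sigma1}),\id,\mu)=z+\gamma\,\detqu(v_{u\sigma},\id,\mu)$ and leaves the other two registers unchanged, which is precisely the oracle specification. Since CIR correctness (\cref{def:CIRcorrectness}) holds for every $x,y$, arbitrary catalytic contents are fine.

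\textbf{Resource tally.} The frames give $O(h(|\state|+\log s))$ free space, the scratch adds $O(\ell+\log|\calR|)$, and the pointer adds $O(h)$. The term $\log|\calR|$ is needed for arithmetic pointers and is absorbed under the standing assumption that $\log\log|\calR|$ and these pointers are $O(\ell+h\log s)$. Otherwise we keep it explicit, noting that arithmetic needs only $O(\log\log|\calR|)$ space. Each level makes $O(s)$ oracle calls, so the number of recursive invocations is $O(s)^h$. Each invocation costs $\poly(2^\ell,\log|\calR|,|\state|)$ local time, the $2^{|\state|}$ factor being dominated once $|\state|=O(\ell)$; this gives the stated runtime.

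\textbf{Main obstacle.} The most delicate step is checking that the only information surviving across an oracle call is the $O(|\state|+\log s)$-bit frame. The scratch area is shared by all levels, so each level must erase its scratch before calling down. This is exactly what the one-level theorem's ``erases all but $O(|\state|+\log s)$ bits'' clause guarantees.
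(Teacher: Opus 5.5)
Your proposal is essentially the paper's proof. The paper justifies this theorem only by saying it follows from the same recursive strategy as the proof of \cref{thm:algo}, and you spell that strategy out faithfully: swapping the target register into the $z$ slot for the child call, per-level frames of $O(|\state|+\log s)$ bits, one shared scratch area, the $\mathsf{reg}$ trick with a special root database followed by a $\gamma^*=-1$ rerun, and induction on height.

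One correction to your tally: the shared scratch needs only $O(\ell+\log\log|\calR|)$ bits, because $\mathsf{AnswerAndReconstruct}$ is streamed into $z$ in $O(\log \ndb)$ space and in-place arithmetic on $\calR$ costs $O(\log\log|\calR|)$ extra (the ``$\log|\calR|$'' in the one-level statement should be read this way). It is this $\log\log|\calR|$ term that gets absorbed, e.g.\ under $\log|\calR|\le\poly(2^{h+\ell})$ as in \cref{thm:algo}; a $\log|\calR|$-bit scratch would be as large as the catalytic tape and would void the claimed free-space bound.
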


\begin{remark}
    We note that an even more general variant of a CIR scheme would still give new algorithms for \tep.
    For example, $\GetSt$ could also use the $z$ register as catalytic space.
    Additionally, we do not need to work over a ring $\calR$; we could work over an arbitrary universe and replace additions and subtractions with reversible updates.
    We refrain from formally presenting these abstractions for the sake of clarity.
\end{remark}

\subsection{Special Cases}

\paragraph{The algorithm of Cook and Mertz.}
We now sketch how the algorithm of Cook and Mertz~\cite{CM23} can also be viewed as arising from a CIR scheme.
For clarity, we will assume $\ndb = 2^{2\ell}$ and that $\DB$ is viewed as the truth table of a function $f: \zo^\ell \times \zo^\ell \to \zo^\ell$.
We make the following choices:
\begin{itemize}
    \item $\FF$ will be a prime field of order $O(\ell)$ that has a primitive $s$th root of unity $\omega$ for some $s \in (2\ell, |\FF|)$;
    \item The ring $\calR$ will be $\FF^\ell$;
    \item The number of servers will be $s$; and
    \item $g: \calR \times \calR \to \calR$ will be the multilinear extension of $f$.
\end{itemize}
We now define the CIR scheme as follows:
\begin{itemize}
    \item $\detqu(a, \id, \mu) := \omega^{-\id} a$ (note that $\mu$ is irrelevant for this construction);
    \item $\GetSt$ will not do anything, i.e., $\state = \bot$;
    \item $\mathsf{AnswerAndReconstruct}(\DB, \state, \id, x, y) := g(\omega^{\id} x, \omega^{\id} y)/m$.
    Here, $g(\cdot)$ is computed on the fly in space $O(\ell)$.
\end{itemize}
Since the number of servers is $O(\ell)$, this recovers the $O(\ell + h\log \ell)$ free space and $\widetilde{O}(\ell)$ catalytic space of~\cite[Theorem 1.3]{CM23}.\footnote{The algorithm of Cook-Mertz does not work in the catalytic space model, so they do not need to incur the space overhead of the transformation in~\cref{rmk:bitrep} (since they can initialize all registers to valid representations).}

\paragraph{The algorithm of~\cref{sec:tep}.}
Here, the correspondence is easier to see because the presentation of our algorithm in this section is modeled off of~\cref{sec:tep}.
We sketch the correspondence below:
\begin{itemize}
    \item The ring $\calR$ is $\Z_m^{2d}$, hence we denote the catalytic registers $\xv, \yv, \zv$ with boldface.
    \item There are $2^{2t}$ servers which we identify with strings of $2t$ bits $b_1, \ldots, b_t, c_1, \ldots, c_t$;
    \item $\detqu(a, \id = (b_1, \ldots, b_t, c_1, \ldots, c_t), \mu) := \begin{cases} \CRT(b_1, \ldots, b_t) \cdot (\uv_a || \vecv_a)\text{, if }\mu = 0\\ \CRT(c_1, \ldots, c_t) \cdot (\uv_a || \vecv_a)\text{, if }\mu = 1.\end{cases}$
    
    \item $\GetSt$ will compute and store $g_1, g_2$, and all the $\alpha_i$'s and $\beta_i$'s in $\state$.
    In a little more detail, we will isolate $\xv_\mathsf{trunc}, \yv_\mathsf{trunc} \in \ZZ_m^d$ to be the first $d$ entries of $\xv, \yv$ respectively and compute $g_1 = \langle \xv_\mathsf{trunc}, \vecv_a \rangle$ and $g_2 = \langle \yv_\mathsf{trunc}, \vecv_b \rangle$, which could potentially require some swaps within the $\xv, \yv$ registers that can be reversed. 

    The oracle queries in~\cref{lemma:innerproduct} can be instantiated  by setting $b_1 = \ldots = b_t = c_1 = \ldots = c_t = 1$, so that the factor coming from each $\CRT$ term is just $\pm 1$.

    \item $\mathsf{AnswerAndReconstruct}(\DB, \state, \id = (b_1, \ldots, b_t, c_1, \ldots, c_t), \xv, \yv)$: this will be equal to
    \[
        \sum_{r, s \in \zo^\ell}
        \begin{cases}
             \DB_{r || s} \cdot (-1)^{\sum_{i \in [t]} (b_i + c_i)} \cdot \left(\prod_{i \in [t]} \alpha_i\right)^{-1},\text{ if }\langle \xv_\mathsf{trunc}, \vecv_r \rangle \cdot \langle \yv_\mathsf{trunc}, \vecv_s \rangle \equiv \CRT(\beta_1, \ldots, \beta_t) \pmod{m},\\
            0,\text{ otherwise.}
        \end{cases}
    \]
\end{itemize}
Here the number of servers is $2^{2t}$, $|\state| = O(\log m)$, and $\log |\calR| = O(d \log m)$.
Plugging these in to~\cref{thm:abstracttep} recovers the statement of~\cref{thm:tep}, noting that we assume $d \log m \leq \poly(2^{h+\ell})$ and that the transformation of~\cref{rmk:bitrep} will only require catalytic space $O(d \log (dm))$ to represent an element of $\calR$.

\paragraph{Motivating the algorithm of~\cref{sec:tep}.}
We take the opportunity here to provide some high-level intuition for the various departures our algorithm in~\cref{sec:tep} makes from typical PIR protocols based on matching vector families~\cite{DGY11,DG16,GKS25,LBA25}.
We start with the following simpler construction of $2^t$-server PIR~\cite{E12} given a matching vector family of size $N$ over $\Z_m^d$ (where $m = p_1\ldots p_t$ is a product of $t$ primes).
Let $q$ be a prime such that $m|q-1$.
Let $g_1, \ldots, g_t \in \Z_q$ be elements with respective order $p_1 \ldots p_t$.
Servers are indexed by tuples $(b_1, \ldots, b_t) \in \zo^t$ of bits.
The protocol now proceeds as follows:
\begin{itemize}
    \item Suppose the client has an index $i^* \in [N]$. They will sample uniformly random $\rv \gets \Z_m^d$ and send server $(b_1, \ldots, b_t)$ the point $\rv + \CRT(b_1, \ldots, b_t) \cdot \uv_{i^*} \in \Z_m^d$.
    \item Given a vector $\qu \in \Z_m^d$, server $(b_1, \ldots, b_t)$ will reply with:
    $$\ans_{b_1, \ldots, b_t} = \sum_{i \in [N]} \DB_i \cdot \prod_{j = 1}^t g_j^{\langle \qu, \vecv_i \rangle} = \sum_{i \in [N]} \DB_i \cdot \prod_{j = 1}^t g_j^{\langle \rv + b_j \uv_{i^*}, \vecv_i \rangle}.$$
    \item The client will now compute:
    \begin{align*}
        \sum_{b_1, \ldots, b_t \in \zo} (-1)^{\sum_{i \in [t]} b_i} \ans_{b_1, \ldots, b_t} &= \sum_{b_1, \ldots, b_t \in \zo} (-1)^{\sum_{i \in [t]} b_i} \sum_{i \in [N]} \DB_i \cdot \prod_{j = 1}^t g_j^{\langle \rv + b_j \uv_{i^*}, \vecv_i \rangle} \\
        &= \sum_{i \in [N]} \DB_i \cdot \left[\prod_{j = 1}^t \left(\sum_{b_j \in \zo} (-1)^{b_j} g_j^{\langle \rv + b_j \uv_{i^*}, \vecv_i \rangle}\right)\right] \\
        &= \sum_{i \in [N]} \DB_i \cdot \left[\prod_{j = 1}^t g_j^{\langle \rv, \vecv_i \rangle} \left(1 - g_j^{\langle \uv_{i^*}, \vecv_i \rangle}\right)\right] \\
        &= \DB_{i^*} \cdot \prod_{j = 1}^t g_j^{\langle \rv, \vecv_{i^*} \rangle} (1-g_j),
    \end{align*}
    from which they can recover $\DB_{i^*}$.
\end{itemize}
When adapting this to tree evaluation, we face the following natural obstacles:
\begin{enumerate}
    \item The CIR protocol needs to be composable with itself in order to recursively apply it when going up the tree.
    To this end, we ensure that our queries and reconstructed answers both take the form of adding a matching vector into a catalytic register.

    \item Thus, when carrying out one level of tree evaluation, we assume we can update $\xv \gets \xv + \uv_a$ and $\yv \gets \yv + \uv_b$.
    However, what we really need for CIR is to be able to make queries that are indexed by the tuple $(a, b)$.
    This can be seen in the equation in~\cref{def:CIRcorrectness}.
    Our solution is roughly inspired by the fact that the tensored collection of vectors $\{\uv_a \otimes \uv_b, \vecv_a \otimes \vecv_b: a, b \in \zo^\ell\}$ is itself a matching vector family over $\Z_m^{d^2}$.
    We cannot actually write these tensor products down, but instead stream through them to remain in low space.

    \item The recursive composability requires the CIR queries and answers to have the same type.
    For the simple scheme sketched above, this cannot be true!
    $\DB_{i^*}$ lives in $\Z_q$, where there needs to be an element of multiplicative order $m$, while the queries live in $\Z_m$.
    To remedy this, we move away from $\Z_q$ to a formal polynomial ring over $\Z_m$---following the original presentations of~\cite{DGY11,E12,DG16,GKS25}---where we can adjoin formal variables of multiplicative degree dividing $m$.
    As noted in~\cref{rmk:polyring}, this leads us to work over the ring $\Z_m[X_1, \ldots, X_t]/(X_1^{p_1}-1, \ldots, X_t^{p_t} - 1)$.
    Following this approach comes with minor difficulties, but we can simplify the resulting construction to get rid of the polynomial ring, yielding the construction in~\cref{sec:tep:step}.
\end{enumerate}

\section{Application: New Time-Space-Catalytic Space Tradeoffs}
\label{sec:timespace}

\subsection{The Reduction of Williams}
We recall the reduction from $\Time[t]$ to tree evaluation.
\begin{theorem}[\cite{Wil25}]\label{thm:Wil}
    For every language $L$ in $\Time[t]$, there is a machine that on input $x\in \zo^n$ runs in space $O(\sqrt{t})$ and outputs a \tephl instance with $h=O(\sqrt{t})$ and $\ell=O(\sqrt{t})$ such that the output of the tree eval instance is $L(x)$. 
\end{theorem}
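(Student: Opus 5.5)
The plan follows Williams' construction: turn a block-respecting simulation of the machine into a tree whose nodes compute block contents, then binarize it with \cref{lem:fanintwo}.

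First I put the machine into block-respecting form. Let $M$ be a $k$-tape machine running in time $t$, and set $b=\lceil\sqrt t\rceil$ and $B=\lceil t/b\rceil$. By the standard Hopcroft--Paul--Valiant transformation, with constant-factor time overhead, $M$ can be made \emph{block-respecting}. This means time is split into $B$ blocks of $b$ steps, each tape is split into blocks of $b$ cells, and during any time block every head stays inside a single tape block. A time block $i$ is then fully described by the finite state, the $k$ head positions, and the contents of the $k$ active tape blocks at its end. This description, call it $\mathsf{val}(i)$, takes $O(kb)=O(\sqrt t)$ bits.

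Next I fix the block-movement sequence, i.e.\ for each time block and each tape whether the head moves to the left block, the right block, or stays. This is $O(kB)=O(\sqrt t)$ bits. Given this sequence $\pi$, $\mathsf{val}(i)$ is a function of $\mathsf{val}(i-1)$ and of $\mathsf{val}(j_1),\dots,\mathsf{val}(j_k)$. Here $j_r<i$ is the last time block in which tape $r$'s head visited the block it occupies during block $i$; if there is no such block, that input comes from the initial tape contents, which are blank or part of $x$. So the dependency structure is a DAG of fan-in $r=k+1$ in which every edge strictly decreases the time index.

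I then unfold this DAG into a tree of height at most $B$. A tree node is named by its path from the root, a string in $[k+1]^{\le B}$, which takes $O(\sqrt t\log k)=O(\sqrt t)$ bits. From the path and $\pi$, the node's time index is found by following the pointers $i\mapsto i-1$ and $i\mapsto j_r$. Each $j_r$ is computed by scanning $\pi$ and tracking head-block positions with $O(\log t)$ counters. Each internal function $f_u$ simulates $b$ steps of $M$ starting from its children's descriptions. It checks consistency with $\pi$: the heads must not leave their blocks, and must end in the predicted next block. On a mismatch, or if any child carries it, it outputs a reserved symbol $\bot$. Leaves are computed directly from $x$. Truth tables of the $f_u$ can be printed in space $O(\ell)=O(\sqrt t)$ by enumerating inputs and simulating. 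Finally I apply \cref{lem:fanintwo} with constant $r$, which gives a fan-in-$2$ instance with $h=O(\sqrt t)$ and $\ell=O(\sqrt t)$, produced in space $O(h\log r+\ell r)=O(\sqrt t)$.

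I expect the main obstacle to be the movement sequence $\pi$, which is not known in advance. My first attempt is to enumerate all $2^{O(\sqrt t)}$ candidate sequences in $O(\sqrt t)$ space and emit one instance per candidate. Exactly one candidate yields a non-$\bot$ root, and the root's accepting bit gives $L(x)$. If a single instance is strictly required, I would instead fold the guess into the labels: each node carries the relevant prefix of $\pi$ inside its $O(\sqrt t)$-bit value, with the root taking an OR over the branches. That second step needs more care, and I would check it against Williams' paper.
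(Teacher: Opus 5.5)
Your reconstruction of Williams' reduction is sound and follows the same route as the paper. The paper offers no argument beyond citing [Wil25] for a constant-fan-in instance and binarizing with \cref{lem:fanintwo}. That is exactly your block-respecting simulation, guessed movement string $\pi$, $\bot$-propagating consistency checks, and final use of \cref{lem:fanintwo}. Your claim that a wrong $\pi$ forces $\bot$ at the root also holds. Let $i'$ be the first block where $\pi$ disagrees with the true run. Consider the copy of $i'$ on the chain of $(i\mapsto i-1)$-edges from the root: its subtree is wired correctly and computes correct values, so it detects the mismatch, and $\bot$ propagates up the chain. One nitpick: if $\pi$ records a move for the last block, that entry is never checked. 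So the right claim is ``every non-$\bot$ candidate gives the correct value'', not ``exactly one''.

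The one real gap is relative to the literal statement. Your main plan outputs $2^{O(\sqrt{t})}$ instances, one per $\pi$ (essentially how Williams' own simulation proceeds). The theorem, however, asks for a single instance whose value is $L(x)$.

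Your fallback of carrying prefixes of $\pi$ inside node values does not work as described. Which time block a given tree position stands for (via the pointers $j_r$) is a function of $\pi$. The wiring of a $\mathsf{TreeEval}$ instance is fixed when it is printed, so $\pi$ cannot be something computed during evaluation. The guess has to live in the node names.

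Here is a construction that does this:
\begin{enumerate}
\item Pad every $T_\pi$ to a common height $H=O(\sqrt{t})$ with all leaves at depth exactly $H$, replacing early leaves by constant-function nodes over dummy subtrees. The definition of $\mathsf{TreeEval}_{h,\ell}$ and the input to \cref{lem:fanintwo} both require complete trees, so you need this step anyway.
\item Place a complete binary selector tree of depth $|\pi|=O(\sqrt{t})$ on top, whose leaf $\pi$ is the root of $T_\pi$.
\item Give each selector node the function $f(v_0,v_1)=v_0$ if $v_0\neq\bot$, and $v_1$ otherwise.
\end{enumerate}
Height and label length stay $O(\sqrt{t})$. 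The generator still uses $O(\sqrt{t})$ space, because it reads $\pi$ off the first $|\pi|$ bits of a node's path. For the paper's downstream use in \cref{cor:TSintro}, your multi-instance version would already suffice. You can loop over $\pi$ in $O(\sqrt{t})$ free space and rerun the catalytic algorithm on the same catalytic tape.
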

\noindent We remark that the result of Williams outputs a \tep instance with fanin $r\ge 2$ for some constant $r$ that depends on the language $L$, but the result as stated above is immediate from~\Cref{lem:fanintwo}. 

Subsequently, Shalunov gave a direct reduction from size-$S$ circuit evaluation to \tephl:

\begin{theorem}[\cite{Sha25}]\label{thm:Sha}
    There is a $O(\sqrt{S})$ space algorithm that, given a circuit $C$ with $S$ gates\footnote{The description size is thus $O(S\log S)$ bits.} and input $x\in \zo^n$, outputs a \tephl instance with $h=O(\sqrt{S})$ and $\ell=O(\sqrt{S})$ such that the output of the tree eval instance is $C(x)$. 
\end{theorem}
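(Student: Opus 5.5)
The plan is to partition the gates into $\sqrt{S}$ blocks and to design a binary tree whose nodes are indexed by a pair of block indices, so that each node's value is a set of at most $2\sqrt{S}$ wire values, every node has exactly two children, and the depth equals the number of blocks.

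\textbf{Setup.} Fix a topological order of the $S$ gates. Split them into consecutive blocks $B_1,\dots,B_K$ with $K=\lceil\sqrt{S}\rceil$ and $|B_k|\le\sqrt{S}$. Let $B_0$ be the block of input variables $x$. Every gate has fan-in $2$, so the set $I_j$ of wires read by gates of $B_j$ from blocks $B_{<j}$ has size $|I_j|\le 2\sqrt{S}$. Add a dummy index $j=K+1$ with $I_{K+1}=\{\text{output gate}\}$.

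\textbf{The tree.} Nodes are pairs $(k,j)$ with $0\le k<j\le K+1$. The value of node $(k,j)$ is the vector of values of the wires in $I_j\cap(B_0\cup\dots\cup B_k)$, listed in a canonical order and zero-padded to $\ell:=2\sqrt{S}+1$ bits. The root is $(K,K+1)$. For $k\ge1$, node $(k,j)$ has left child $(k-1,j)$ and right child $(k-1,k)$. Nodes with $k=0$ are leaves, and their value is read directly from $x$.

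The function at $(k,j)$ is as follows:
\begin{enumerate}
\item From the right child's value, i.e.\ the values of all of $I_k$ (all of which lie in $B_{<k}$), evaluate every gate of $B_k$ in topological order.
\item Output the values of $I_j\cap B_{\le k}$. Wires in $B_k$ come from step~1; wires in $B_{<k}$ are copied from the left child, whose value covers exactly $I_j\cap B_{\le k-1}$.
\end{enumerate}
Correctness follows by induction on $k$: node $(k,j)$ holds the true values of $I_j\cap B_{\le k}$. The root then holds $C(x)$. The tree is complete binary with height $h=K=O(\sqrt S)$ and label length $\ell=O(\sqrt S)$.

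\textbf{The reduction and its space.} A node at depth $i$ is specified by a root-to-node path. Walking the path requires only updating the pair $(k,j)$: going left keeps $j$, going right sets $j\gets k$, and both decrement $k$. So storing the current path takes $O(\sqrt S)$ bits and the current pair $O(\log S)$ bits.

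To print the truth table of $f_{(k,j)}$, enumerate all $2^{2\ell}$ input pairs using $2\ell=O(\sqrt S)$ bits of counter. For each pair, simulate $B_k$, which needs $O(\sqrt S)$ bits for its gate values. To locate wires in canonical position, scan the circuit description with $O(\log S)$-bit pointers. Leaves are printed by reading the bits of $x$ indexed by $I_j\cap B_0$. The total space is $O(\sqrt S)$.

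\textbf{Main obstacle.} The main obstacle is keeping every node's required-wire set both small ($O(\sqrt S)$ wires) and succinctly describable along the path. A naive scheme where a node requests ``everything still needed'' lets the set grow, and storing arbitrary sets of gate indices would cost $\sqrt{S}\log S$ bits. The canonical split is what resolves this: the left child inherits the parent's request and the right child requests exactly the external inputs of the current block. With this split, every request is of the form $I_j\cap B_{\le k}$, named by just two indices.
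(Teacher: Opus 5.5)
The paper gives no proof of this statement. It is imported from \cite{Sha25} and used only as a black box for the circuit-evaluation corollary, so there is no internal proof to compare against.

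Your construction is a correct, self-contained proof.
\begin{itemize}
\item The recursion on pairs $(k,j)$, with left child $(k-1,j)$ and right child $(k-1,k)$, keeps every node's value equal to the wires $I_j\cap B_{\le k}$. There are at most $2|B_j|=O(\sqrt S)$ such wires, and the set is named by only $O(\log S)$ bits.
\item The right child supplies exactly $I_k$, which is all that is needed to evaluate $B_k$. The induction on $k$ therefore goes through.
\item Your space accounting is sound: an $O(\sqrt S)$-bit path, a $2\ell$-bit counter over truth-table entries, $O(\sqrt S)$ bits of gate values for $B_k$, and $O(\log S)$-bit scans to compute ranks in the canonical order.
\end{itemize}

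Two conventions you use implicitly should be stated explicitly:
\begin{itemize}
\item Fan-in is bounded. This is implied by the $O(S\log S)$ description size in the footnote.
\item The encoding lists gates in topological order. If it does not, ordering gates by longest-path depth and then by index gives a topological order. This order is computable in $O(\log^2 S)=o(\sqrt S)$ space via Savitch's theorem.
\end{itemize}

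Your argument is also parametric, which is worth noting. Blocks of size $b$ give $h=O(S/b)$ and $\ell=O(b)$, with the reduction itself using $O(S/b+b)$ space. This is the form needed to trade height against label length. For example, choosing $b=\sqrt{S\log S}$ together with the $O(\ell+h\log\ell)$ Cook--Mertz bound gives $O(\sqrt{S\log S})$ space. It is the same kind of rebalancing the paper performs with Williams' reduction when it takes $b=\sqrt{t}\log^{1/4}t$.
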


\subsection{Applications of Catalytic Tree Evaluation}

\Cref{cor:TSintro} follows immediately from~\Cref{cor:eps} and~\Cref{thm:Wil}.
We can also plug in~\Cref{cor:sqrt} to obtain a different corollary. For this, we use that the algorithm of~\Cref{thm:Wil} can produce a \tep instance of height $t/b$ and $\ell=b$ for any space constructible function $b$. We instantiate it with $b=\sqrt{t}\cdot \log^{1/4}(t)$ and obtain the following:
\begin{corollary}
    $\Time(t)\subseteq \CT{\exp\exp(O(\sqrt{\log t}))}{O(\sqrt{t}\cdot \log^{1/4}(t)\log\log\log t)}{2^{O(\sqrt{t}\cdot \log^{1/4}(t))}}$.
\end{corollary}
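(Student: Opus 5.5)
The plan is to compose the Williams reduction with the catalytic tree-evaluation algorithm of \cref{cor:sqrt}, choosing the block size $b$ to balance the two free-space terms. On input $x$, we run the machine of \cref{thm:Wil} in its parameterized form, with block size $b=\sqrt{t}\cdot\log^{1/4}(t)$, which is space constructible. It produces a fan-in-$2$ \tephl instance with $h=O(t/b)=O(\sqrt{t}/\log^{1/4} t)$ and $\ell=O(b)$, whose root value is $L(x)$; any constant fan-in is first removed via \cref{lem:fanintwo}, at a constant-factor cost in $h$ and $\ell$. We never write this instance down. Instead, the catalytic algorithm of \cref{cor:sqrt} queries it, and each requested input bit (a leaf value or a truth-table entry of some $f_u$) is recomputed by rerunning the $O(\sqrt t)$-space reduction. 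This is the standard composition of space-bounded reductions, so it adds $O(\sqrt t)$ free space and only a $\poly$ factor in time per query.

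Next, we plug these parameters into \cref{cor:sqrt}. The catalytic space is $\exp(\exp(O(\sqrt{\log \ell})))=\exp\exp(O(\sqrt{\log t}))$, since $\log\ell=\Theta(\log t)$. The runtime is $\poly(2^{\ell+h\sqrt{\log\ell}})$. Here $h\sqrt{\log \ell}=O\big((\sqrt t/\log^{1/4}t)\cdot\sqrt{\log t}\big)=O(\sqrt t\log^{1/4}t)$, and $\ell=O(b)$, so the exponent is $O(\sqrt t\log^{1/4} t)$. Multiplying in the polynomial overhead of recomputing input bits, the total time is $2^{O(\sqrt t\log^{1/4} t)}$.

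The free space is the delicate term, and it is where the iterated-logarithm factor must be tracked. It is $O(\ell+h\log m)$, where $m$ is the product of the first $t'=\Theta(\sqrt{\log\ell})$ odd primes, so $\log m=O(t'\log t')$ by the prime number theorem. The first factor $t'$ is absorbed exactly as in the time bound: $h\cdot t'=O(\sqrt t\log^{1/4}t)$. The remaining factor is $\log t'=\log\Theta(\sqrt{\log \ell})$, and this is the main obstacle.

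That factor must be matched to the statement's $\log\log\log t$. I would write it carefully as an iterated logarithm of the \emph{total} space bound, $\log t'=\Theta(\log\log \ell)$, in the same way the introduction expresses \cref{thm:intro-balance} with $\log n\approx \ell+h$. I would then check, using $\ell=\Theta(\sqrt t\log^{1/4}t)$, that this is consistent with the stated bound $O(\sqrt{t}\cdot \log^{1/4}(t)\log\log\log t)$ once the final bound is stated in terms of the space parameter.

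If that bookkeeping instead yields $\log\log t$, I would first try to shave the per-level state of \cref{thm:TEPonelevel}, namely $g_1,g_2,\alpha_i,\beta_i$ and the $2t'$ loop bits, from $O(\log m)$ bits down to $O(t'\log\log t')$ bits. The route would be to store only what the current recursion frame needs and to recompute the rest between oracle calls. I expect this accounting step, rather than the composition itself, to be where the real work lies. Finally, adding the $O(\sqrt t)$ space of the reduction and the $O(\ell)$ term gives the claimed free space.
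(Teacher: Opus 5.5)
Your route is the paper's: instantiate \cref{thm:Wil} with $b=\sqrt t\log^{1/4}t$, so that $\ell=\Theta(\sqrt t\log^{1/4}t)$ and $h=\Theta(\sqrt t/\log^{1/4}t)$, and run \cref{cor:sqrt} on the implicitly given instance. Your catalytic-space and time calculations are right. The free-space step, however, has a genuine gap, and the resolution you sketch fails. The leftover factor is $\log t'=\tfrac12\log\log\ell+O(1)$. Since $\ell=\Theta(\sqrt t\log^{1/4}t)$, this is $\Theta(\log\log t)$, so \cref{cor:sqrt} gives $h\sqrt{\log\ell}\,\log\log\ell=\Theta(\sqrt t\log^{1/4}t\cdot\log\log t)$. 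Restating this ``in terms of the space parameter'' changes nothing: the claimed bound is a function of $t$, and the space parameter is itself $\Theta(\sqrt t\log^{1/4}t)$. The $\log\log\log n$ of \cref{thm:intro-balance} comes from $\ell\le\log n$, where $n$ is the $\mathsf{TreeEval}$ input length. Here $\log n=\Theta(\sqrt t\log^{1/4}t)$, so that factor is again $\Theta(\log\log t)$. The paper's own proof is exactly this one-line instantiation and has the same mismatch. What it, and your argument, actually yields is free space $O(\sqrt t\log^{1/4}t\log\log t)$; the $\log\log\log t$ in the statement appears to be a slip carried over from \cref{thm:intro-balance}.

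Your fallback is not a bookkeeping matter either. In \cref{thm:TEPonelevel}, the $\beta_i\in\Z_{p_i}$ are determined by $g_1,g_2\in\Z_m$, which depend on the arbitrary tape contents. All of them must be available in step~3(b), after the oracle calls of step~3(a). Recomputing them instead requires the oracle calls of \cref{lemma:innerproduct}, during which $\mathsf{tmp}_1\in\Z_m$ must itself be held. So $\Theta(\log m)=\Theta(t'\log t')$ bits per level is built into this one-level algorithm, and getting down to $O(t'\log\log t')$ would need a new idea rather than more careful accounting. Re-tuning parameters inside \cref{thm:algo} does not help either. With the families of \cref{thm:mvgeneral}, catalytic space $\exp\exp(O(\sqrt{\log t}))$ forces $t'=\Omega(\sqrt{\log t})$ primes, hence $\log m=\Omega(t'\log t')$. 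Meanwhile the time bound forces $\ell=O(\sqrt t\log^{1/4}t)$, and hence $h=\Omega(\sqrt t/\log^{1/4}t)$. Together these give $h\log m=\Omega(\sqrt t\log^{1/4}t\log\log t)$. The correct conclusion of your argument is the inclusion with $\log\log t$ in the free-space bound.
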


\noindent Finally, combining~\Cref{thm:Sha} with~\Cref{thm:intro} immediately gives the following:
\begin{corollary}
    For any $\eps>0$, size-$S$ circuit evaluation can be decided in $O(\sqrt{S})$ free space, $2^{O(S^\eps)}$ catalytic space, and $2^{O(\sqrt{S})}$ time.
\end{corollary}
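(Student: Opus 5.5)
The plan is to chain \Cref{thm:Sha} with \Cref{cor:eps}, in the same way \Cref{cor:TSintro} was obtained from \Cref{thm:Wil} and \Cref{cor:eps}. Given a circuit $C$ with $S$ gates and an input $x$, \Cref{thm:Sha} yields a space-$O(\sqrt{S})$ reduction producing a \tephl instance whose root value is $C(x)$, with $h=O(\sqrt{S})$ and $\ell=O(\sqrt{S})$. Applying \Cref{cor:eps} with parameter $\eps' = 2\eps$ to this instance gives an algorithm using free space $O(\ell+h)=O(\sqrt{S})$, catalytic space $\exp(O(\ell^{2\eps})) = \exp(O(S^{\eps}))$, and time $\poly(2^{\ell+h}) = 2^{O(\sqrt{S})}$. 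These are exactly the claimed bounds.

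The one real step is composing the two algorithms without writing down the \tep instance, which has size $2^{h}\cdot \ell\cdot 2^{2\ell}=2^{O(\sqrt S)}$. I would use the standard composition of space-bounded computations. Whenever the \tep algorithm of \Cref{thm:tep} wants to read bit $k$ of its input tape (a truth-table entry of some $f_u$, or a leaf value $v_u$), we rerun the reduction of \Cref{thm:Sha} from scratch. We count its output bits with an $O(\log(2^{O(\sqrt S)}))=O(\sqrt S)$-bit counter and keep only the $k$th bit. This costs $O(\sqrt S)$ additional free space, which is reused across reads, so the free-space bound stays $O(\sqrt{S})$.

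For time, the reduction runs in space $O(\sqrt S)$. Assuming it halts, it therefore runs in time $2^{O(\sqrt S)}$. Each of the at most $2^{O(\sqrt S)}$ input reads by the \tep algorithm thus costs $2^{O(\sqrt S)}$ time, and the total is still $2^{O(\sqrt S)}$. The catalytic tape is touched only by the \tep algorithm, so it is restored exactly as guaranteed by \Cref{thm:tep}. The reduction itself uses no catalytic space.

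The step I expect to need the most care is checking the hypotheses of \Cref{thm:tep} that \Cref{cor:eps} relies on. The matching-vector family of \cref{item:grolmusz} of \Cref{cor:mvspecific} has size $2^\ell$ with $\ell=O(\sqrt S)$, and it must be $O(\ell)$-space uniform, which is given by the logspace-uniformity in \Cref{thm:mvgeneral}. We also need $d\log m\le\poly(2^{h+\ell})$. This holds because $d=\exp(O(\ell^{1/t}\log \ell))$ is subexponential in $\ell$ and $m$ is constant. With constant $t$, the term $h\log m$ in the free space is $O(\sqrt S)$, and the time $\poly(2^{\ell+ht})$ is $2^{O(\sqrt S)}$, so everything matches.
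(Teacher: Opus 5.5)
Your proposal is correct and follows the paper's route. The paper obtains this corollary by directly combining Shalunov's reduction (\Cref{thm:Sha}) with the catalytic tree-evaluation result (\Cref{thm:intro}, equivalently \Cref{cor:eps}). Your explicit composition, which recomputes input bits of the $2^{O(\sqrt{S})}$-size instance on demand, together with your check of the hypotheses of \Cref{thm:tep}, just spells out what the paper treats as immediate.
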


\ifanon
~
\else
\section*{Acknowledgements}
A.H. and S.R. thank Vinod Vaikuntanathan for encouraging this work and for helpful discussions. E.P. thanks James Cook, Ian Mertz, and Ryan Williams for useful discussions.
\fi

\bibliographystyle{alpha}
\bibliography{refs}
\appendix

\section{Verifying the Uniformity of the Matching-Vector Family}\label{app:unif}
The fact that the matching-vector family of~\cite[Lemma 11]{DGY11} is logspace uniform is not explicitly stated, but follows immediately from the construction. We verify this below, making no claims to originality. We exactly follow their notation.
\begin{definition}
    Let $p_1,\ldots,p_t$ be distinct primes and let $m=\prod_i p_i$. The canonical set $S$ in $\Z_m$ is the set of nonzero $s$ where $s \in \zo \mod p_i$ for every $i$.
\end{definition}

\begin{lemma}
    Let $m=\prod_{i=1}^tp_i$ be a product of distinct primes that are given as input. Let $w,g$ and $e_1,\ldots,e_t$ be given such that $\prod_{i = 1}^t p_i^{e_i}> w$ and $h\ge w$. Let $d=\max_i p_i^{e_i}$. There exists a space $O(\log N+d\log h+\log m)$-uniform\footnote{In both regimes we work with $\log(N)\ge d\log h$, so this meets the definition of logspace-uniformity.} matching vector family of size $N=\binom{h}{w}$ in $\Z_m^n$ where $n=\binom{h}{\le d}$. 
\end{lemma}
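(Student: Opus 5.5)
We follow the Barrington--Beigel--Rudich/Grolmusz construction as presented in \cite[Lemma 11]{DGY11} and check that every step is computable in small space. Index the family by $w$-subsets $T\subseteq[h]$; there are $N=\binom{h}{w}$ of them. Index coordinates by subsets $R\subseteq[h]$ with $|R|\le d$, so the ambient dimension is $n=\binom{h}{\le d}$.

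\textbf{The symmetric polynomial.} The first step is to fix a multilinear polynomial $P(z_1,\ldots,z_h)$ over $\Z_m$ of degree at most $d$ with two properties. First, $P(\mathbf 1_T)$ lies in the canonical set $S$ whenever $|T\cap T'|\not\equiv 0\pmod w$. Second, $P(\mathbf 1_T)=0$ when the intersection size is $\equiv 0\bmod w$. Following \cite{Barrington1994}, take $P=\mathrm{CRT}(P_1,\ldots,P_t)$ coordinatewise, where
\[
P_i(z)=1-\Bigl(1-\sum_{j}z_j\Bigr)^{p_i^{e_i}-1}\pmod{p_i},
\]
written via Lucas' theorem as a combination of elementary symmetric polynomials $\binom{\sum z_j}{k}$ with $k<p_i^{e_i}$. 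Each $P_i$ has degree at most $p_i^{e_i}-1\le d$. The polynomial $P_i$ detects whether $\sum z_j\not\equiv 0 \bmod p_i^{e_i}$ and takes values in $\{0,1\}$ mod $p_i$. Since $\prod_i p_i^{e_i}>w$ and $|T\cap T'|\le w$, intersection size $\equiv 0\bmod w$ with $T\neq T'$ must be arranged as in DGY. The standard fix is $|T|=w$ and intersections in $[0,w)$, with $T=T'$ the unique case of size $w$. Use the variant in which $P$ vanishes exactly at $\mathbf 1_{T\cap T'}$ of size $w$.

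\textbf{The vectors.} Expand $P(z)=\sum_{|R|\le d}c_R\prod_{j\in R}z_j$. Since every monomial is symmetric of a given size, $c_R$ depends only on $|R|$ and is computable from $|R|$ and the $p_i,e_i$ using Lucas/binomial coefficients mod $p_i^{e_i}$, then CRT. Set $(\uv_T)_R=c_R\cdot[R\subseteq T]$ and $(\vecv_{T'})_R=[R\subseteq T']$. Then
\[
\langle \uv_T,\vecv_{T'}\rangle=P(\mathbf 1_{T\cap T'}),
\]
which gives the matching-vector property. The paper's normalization, with value $1$ on the diagonal, is then obtained by the padding trick of the earlier Remark.

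\textbf{Uniformity.} Given $(1^N,i,j,p_1,\ldots,p_t)$, we must print $\uv_i,\vecv_j$ in the stated space. The plan is as follows.
\begin{itemize}
\item Unrank $i$ and $j$ into $w$-subsets of $[h]$. Store them as bit vectors of length $h\le O(\log N)$, or via combinatorial number-system unranking in $O(\log N)$ space.
\item Enumerate all $R$ with $|R|\le d$ in lexicographic order, storing $R$ as a list in $O(d\log h)$ space.
\item For each $R$, test $R\subseteq T$ directly.
\item Compute $c_{|R|}$: for each $i$, evaluate a binomial coefficient mod $p_i$ digit by digit via Lucas, in $O(\log m+\log h)$ space, then combine by CRT in $O(\log m)$ space.
\end{itemize}
The space is reused across coordinates, so the total is $O(\log N+d\log h+\log m)$.

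\textbf{Main obstacle.} The main obstacle is not space but getting the correct coefficient formula for $P$, so that its values lie in $\{0,1\}$ mod each $p_i$ with the right zero pattern. I would copy the polynomial exactly from \cite[Lemma 11]{DGY11} and only verify that its coefficients depend solely on $|R|$ and are Lucas-computable.
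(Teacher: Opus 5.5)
Your architecture is the paper's. You build a symmetric multilinear $P$ over $\Z_m$ of degree $<d$ that vanishes exactly on inputs of Hamming weight $w$. You use coefficient vectors $(\uv_T)_R=c_R[R\subseteq T]$ against evaluation vectors $(\vecv_{T'})_R=[R\subseteq T']$ (Sudan's observation, \Cref{lem:Sud}), so that $\langle \uv_T,\vecv_{T'}\rangle=P(\mathbf 1_{T\cap T'})$. You unrank $w$-subsets and enumerate the $\binom{h}{\le d}$ monomials in $O(d\log h)$ space. The gap is the one substantive ingredient, the polynomial itself, which is wrong as written. For $x\in\{0,1\}^h$ with $s=\sum_j x_j$, Fermat's little theorem and $p_i-1\mid p_i^{e_i}-1$ give $(1-s)^{p_i^{e_i}-1}\equiv[s\not\equiv 1\pmod{p_i}]$. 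So your $P_i(x)=[s\equiv 1\pmod{p_i}]$ sees only $s \bmod p_i$; the exponent $p_i^{e_i}-1$ buys nothing over $p_i-1$. Consequently no CRT combination can single out the weight $w$ among $[0,w]$ once $w\ge m$, which is exactly the regime the hypothesis $\prod_i p_i^{e_i}>w$ with $e_i>1$ exists for. The target residue is also off. With points $\mathbf 1_{T'}$ you need $P_i$ to vanish iff $s\equiv w\pmod{p_i^{e_i}}$. Residue $0$ would make $P$ vanish on disjoint pairs rather than on the diagonal, and so would your first formulation via ``$\equiv 0 \bmod w$''. The lemma is precisely a claim about computing this polynomial's coefficients in small space, so you need the correct formula in hand, not just a pointer to \cite{DGY11}.

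The missing idea is base-$p$ digit extraction via Lucas' theorem. Let $E_r(z)=\sum_{|R|=r}\prod_{j\in R}z_j$. On Boolean inputs, $E_{p^k}(x)=\binom{s}{p^k}\equiv s_k\pmod p$, the $k$-th base-$p$ digit of $s$. Hence, with $w_k$ the base-$p_i$ digits of $w$, the polynomial
\[ g_i(z)=1-\prod_{k=0}^{e_i-1}\Bigl(1-\bigl(E_{p_i^k}(z)-w_k\bigr)^{p_i-1}\Bigr) \]
has degree $\sum_{k<e_i}(p_i-1)p_i^k=p_i^{e_i}-1$ and equals $[s\not\equiv w\pmod{p_i^{e_i}}]$ on $\{0,1\}^h$. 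Then $P=\CRT(g_1,\ldots,g_t)$ is $0$ iff $s\equiv w\pmod{\prod_i p_i^{e_i}}$, i.e.\ iff $s=w$, because $0\le s\le w<\prod_i p_i^{e_i}$; otherwise it lies in $S$. This is the polynomial behind \Cref{cor:poly}, and with it your construction coincides with the paper's. Your uniformity plan then goes through, with one correction: $c_{|R|}$ is not a single binomial coefficient. Compute it by M\"obius inversion, $c^{(i)}_k=\sum_{j=0}^{k}(-1)^{k-j}\binom{k}{j}[j\not\equiv w\pmod{p_i^{e_i}}] \bmod p_i$, with each $\binom{k}{j}\bmod p_i$ by Lucas, in $O(\log h+\log m)$ space. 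This is slightly more self-contained than the paper's appeal to the explicitness of Gopalan's construction. Finally, $h\le O(\log N)$ is false in general: for $h=\lceil w^{1+1/t}\rceil$ one has $\log N=\Theta((w/t)\log w)\ll h$. So do not store $T$ as a bit vector. Instead, test $R\subseteq T$ by rerunning the greedy unranking of \Cref{clm:map} for each element of $R$.
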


\newcommand{\cF}{\mathcal{F}}
\newcommand{\cX}{\mathcal{X}}
We first define polynomial matching families:
\begin{definition}[Polynomial Matching Family, Definition 35]
    Let $S \subseteq \Z_m \setminus \{0\}$. We say that a set of polynomials
    $\cF=\{f_1,\ldots,f_k\}\subseteq \Z_m[z_1,\ldots,z_h]$
    and a set of points $x=\{x_1,\ldots,x_k\}\subseteq \Z_m^{h}$
    form a space-$s$ uniform polynomial $S$-matching family of size $k$ if
    \begin{itemize}
      \item for all $i\in [k]$, $f_i(x_i)=0$;
      \item for all $i,j\in [k]$ such that $i\neq j$, $f_j(\mathbf{x}_i)\in S$.
      \item There is a space $s$ algorithm that prints $\cF$.
    \end{itemize}
\end{definition}

First, such a family can be turned into matching vectors by an observation of Sudan.
Let $\cF,\cX$ be a logspace-uniform $k$-sized polynomial matching family.
For $i\in [k]$, let $\supp(f_i)$ denote the set of monomials in the support of the polynomial $f_i$.
Define $\supp(\cF)=\bigcup_{i=1}^{k}\supp(f_i)$ and
$\dim(\cF)=|\supp(\cF)|$.
\begin{lemma}[Lemma 36]\label{lem:Sud}
    A space-$s$ uniform $k$-size polynomial $S$-matching family $\cF,\cX$ over $\Z_m$ yields a space-$O(s+\log m)$ uniform $k$-sized matching vector family in $\Z_m^n$, where $n=\dim(\cF)$.
\end{lemma}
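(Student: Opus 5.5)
We will use the standard Sudan-style linearization: the $i$-th point becomes a vector of monomial evaluations, and the $j$-th polynomial becomes its coefficient vector. Fix an enumeration $M_1,\ldots,M_n$ of the monomials in $\supp(\cF)$, where $n=\dim(\cF)$. For $i\in[k]$ set
\[
\uv_i := \bigl(M_1(\xv_i),\ldots,M_n(\xv_i)\bigr)\in\Z_m^n,
\]
and for $j\in[k]$ let $\vecv_j\in\Z_m^n$ hold the coefficient of $M_r$ in $f_j$ in coordinate $r$ (zero if $M_r\notin\supp(f_j)$). Then $\langle \uv_i,\vecv_j\rangle=f_j(\xv_i)$ for all $i,j$. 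This identity is immediate because $f_j$ is a linear combination of the monomials in $\supp(\cF)$.

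\textbf{Matching conditions.} We transfer them with the convention conversion from the earlier remark. The polynomial family gives $f_i(\xv_i)=0$ and $f_j(\xv_i)\in S$ for $i\ne j$, where $S$ is the canonical set: values $\equiv 0$ or $1$ mod each $p_k$, and nonzero. This is exactly the customary matching-vector definition: inner product in $S\cup\{0\}$, and zero iff $i=j$. To reach the paper's convention, where the diagonal value is $1$, we append a coordinate and set $\uv'_i=(\uv_i\|1)$, $\vecv'_j=(-\vecv_j\|1)$, as in the remark following the matching-vector definition. This gives inner products $1-f_j(\xv_i)$. These still lie in $\{0,1\}$ mod every $p_k$, equal $1$ exactly when $i=j$, and for $i\ne j$ avoid $1$ since $f_j(\xv_i)\neq 0$. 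The dimension becomes $n+1$, which matches the $1+\ldots$ in \cref{thm:mvgeneral}. If the lemma is intended with the customary convention, this step is simply omitted.

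\textbf{Uniformity.} This is the only step needing care, and the main obstacle is indexing monomials without storing $\supp(\cF)$. Given $(i,j)$, we print $\uv_i$ and $\vecv_j$ coordinate by coordinate. I would avoid the general difficulty by indexing coordinates by a canonical enumeration of candidate monomials rather than by $\supp(\cF)$ itself. In the intended application these are the multilinear monomials of degree $\le d$ in $h$ variables, so $n=\binom{h}{\le d}$ as in the lemma, and each is enumerable with $O(d\log h)$ bits. Taking a superset of the support only adds coordinates where $\vecv_j$ is zero, which is harmless.

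For a general space-$s$ uniform family, the $r$-th monomial in a canonical order of $\supp(\cF)$ can be located by rerunning the printer for $\cF$ and comparing candidates. This costs $O(s)$ space, since a monomial description produced by a space-$s$ algorithm has size at most $\poly$ in its configuration. Computing $M_r(\xv_i)$ is then a streamed product mod $m$. For this we obtain $\xv_i$ on the fly by rerunning the uniform generator, keeping one running product of $O(\log m)$ bits. Reading the coefficient of $M_r$ in $f_j$ is done by rerunning the printer for $\cF$ until $f_j$'s term for $M_r$ appears, summing mod $m$ if it appears multiple times. All counters are $O(s+\log m)$ bits, giving the claimed space bound.
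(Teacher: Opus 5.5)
Your proposal is correct and is essentially the paper's proof. Both pair the coefficient vectors of the $f_j$ with monomial-evaluation vectors of the points so that inner products become polynomial evaluations, and both get uniformity by re-running the space-$s$ printer to enumerate monomials, read off coefficients, and evaluate at the points (you swap which side carries the coefficients, which is immaterial). Your extra steps are consistent with what the paper does implicitly elsewhere: the conversion to the paper's ``diagonal value $1$'' convention via $(\uv_i\|1)$ and $(-\vecv_j\|1)$ matches the remark after the matching-vector definition and the leading $1+$ in the dimension of \cref{thm:mvgeneral}, and replacing $\supp(\mathcal{F})$ by all monomials of degree at most $d$ is exactly the move made in the appendix's proof of Lemma 11.
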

\begin{proof}
    \newcommand{\mon}{\textsc{mon}}
    We have by assumption that we can enumerate over the monomials in $\cF$ in space $O(s)$ (and hence print the coefficient of each monomial). 
    
    Let $\mon_1,\ldots,\mon_n$ be these monomials, and let
    \[
    f_j =\sum_{l=1}^nc_{jl}\cdot \mon_l
    \]
    where $c_{jl}\in \Z_m$.

    Finally, we let $\vec{u}_i\in \Z_m^n$ be the $n$-dimensional vector of the coefficients of $f_i$. It is straightforward that we can enumerate over monomials in $O(s)$ and determine if $f_i$ contains this monomial, and if so read off the coefficient. Next, let $\vec{v}_j\in \Z_m^n$ be the vector of evaluations of the monomials at $x_j$. Here we can again enumerate over the vectors $x_j$ in space $O(s)$ and perform this evaluation in space $O(s+\log m)$.
\end{proof}

We then construct such a family. We require low-degree polynomials which compute the weight of $x$ mod $p_i^{e_i}$:
\begin{lemma}[Theorem 2.16~\cite{Gop06}]
    There is a space $O(d\log h)$ algorithm\footnote{The explicitness is not stated but is immediate from the construction.} that given $i\in [t]$ and $w$ prints an explicit multilinear polynomial $f_i(z_1,\ldots,z_h)\in \Z_{p_i}[z_1,\ldots,z_h]$ where $\deg(f_i)\le p_i^{e_i}-1$ and for $x\in \zo^h$:
    \[
    f_i(x) = \begin{cases}
        0 \mod p_i & \sum_i x_i \equiv w\mod p_i^{e_i}\\
        1 & \text{o.w.}
    \end{cases}
    \]
\end{lemma}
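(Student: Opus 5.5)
The plan is to build $f_i$ as the unique multilinear polynomial over $\Z_{p_i}$ that agrees with the desired Boolean function on $\zo^h$. Then the work is to bound its degree and to print its coefficients in small space. Fix $p = p_i$ and $e = e_i$, and set $F(x) = 0$ if $\sum_k x_k \equiv w \pmod{p^e}$ and $F(x) = 1$ otherwise. Since $F$ is symmetric, its multilinear representation $\sum_S c_S \prod_{k \in S} z_k$ has $c_S = c_{|S|}$. Möbius inversion on the Boolean lattice gives the explicit formula
\[ c_s = \sum_{k=0}^{s} (-1)^{s-k} \binom{s}{k} F(k) \pmod p, \]
where $F(k)$ denotes the value of $F$ on any input of weight $k$. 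It depends only on $k \bmod p^e$.

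\textbf{Degree bound (the main step).} I would show $c_s \equiv 0 \pmod p$ for all $s \ge p^e$, so that $\deg f_i \le p^e - 1$. The route is through Lucas' theorem. On Boolean inputs, the elementary symmetric polynomial $e_k(x)$ equals $\binom{|x|}{k}$. By Lucas, $\binom{|x|}{p^j} \bmod p$ is exactly the $j$-th base-$p$ digit of $|x|$. Hence $|x| \bmod p^e$ is determined by the values $e_{p^0}(x), \dots, e_{p^{e-1}}(x) \bmod p$. Every function of these $e$ digits in $\Z_p$ can be written as a polynomial in them with individual degree at most $p-1$, for instance via the indicators $1-(y-c)^{p-1}$. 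In particular
\[ F = 1 - \prod_{j<e}\left(1 - \left(e_{p^j}(x) - w_j\right)^{p-1}\right), \]
where $w_j$ is the $j$-th base-$p$ digit of $w$. This expression has degree at most $\sum_{j<e}(p-1)p^j = p^e - 1$. Multilinearizing via $z_k^2 = z_k$ does not increase degree, and the multilinear representation is unique. So the representation computed by the Möbius formula also has degree at most $p^e - 1$. I expect this to be the only nontrivial point, and Lucas' theorem handles it cleanly.

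\textbf{Explicitness / space.} The algorithm enumerates all subsets $S \subseteq [h]$ with $|S| \le p^e - 1 \le d$, storing each as a list of at most $d$ indices, which takes $O(d \log h)$ bits. For each $S$ it computes $c_{|S|}$ from the Möbius formula. That computation is a running sum over $k \le d$ with a counter, the binomial coefficients and $F(k)$ taken mod $p$, in $O(\log d + \log p + \log w)$ additional space. It then prints the monomial together with its coefficient. The total space is $O(d \log h)$, plus lower-order terms that are absorbed in the regimes of interest, as noted in the surrounding footnote.
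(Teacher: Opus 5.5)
Your proof is correct. The paper itself gives no argument for this lemma: it cites \cite[Theorem 2.16]{Gop06} and says in a footnote only that explicitness ``is immediate from the construction.''

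Your degree bound is, as far as I can tell, the standard construction behind that citation. Lucas' theorem identifies $e_{p^j}(x) \bmod p$ with the $j$-th base-$p$ digit of $|x|$, and the digit-indicator product $1-\prod_{j<e}\bigl(1-(e_{p^j}(x)-w_j)^{p-1}\bigr)$ has degree $\sum_{j<e}(p-1)p^j=p^e-1$.

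Your genuine addition is a clean proof of the explicitness claim that the paper leaves unargued. You do not expand that product into monomials. Instead, uniqueness of the multilinear representation transfers the degree bound to the M\"obius-inversion formula $c_s=\sum_{k\le s}(-1)^{s-k}\binom{s}{k}F(k)$, whose coefficients are trivially computable in small space. This separates the degree bound, done via the product, from the coefficient computation, done via the symmetric M\"obius formula.

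The M\"obius formula in fact gives the degree bound by itself. We have $c_s=(\Delta^sF)(0)$ for the forward difference $\Delta=E-I$. Over $\Z_p$ one has $\Delta^{p^e}=E^{p^e}-I$, which annihilates the $p^e$-periodic function $F$. So the detour through elementary symmetric polynomials is optional.

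Your space accounting also holds up:
\begin{itemize}
\item $\log w\le\log h$, since $w\le h$ in the application.
\item $p_i\le p_i^{e_i}\le d$ whenever $e_i\ge 1$, and $e_i=0$ gives $f_i\equiv 0$.
\item $\binom{s}{k}$ with $s<d$ can even be computed exactly in $O(d)$ bits.
\end{itemize}
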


From this we immediately obtain that in space $O(d\log h+\log m)$ we can obtain the following:
\begin{corollary}[Corollary 38]\label{cor:poly}
    There is a space $O(d\log h+\log m)$ algorithm that 
    prints an explicit degree $d$ multilinear polynomial $f_i(z_1,\ldots,z_h)\in \Z_{m}[z_1,\ldots,z_h]$ for $x\in \zo^h$:
    \[
    f_i(x) = \begin{cases}
        0 \mod m & \sum_i x_i = w\\
        s \mod m & \sum_i x_i < w
    \end{cases}
    \]
In the above, $\sum_i x_i$ is being computed over $\Z$.
\end{corollary}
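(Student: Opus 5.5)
The plan is to build the polynomial by the Chinese Remainder Theorem from the per-prime polynomials of the preceding lemma (Theorem 2.16 of Gop06), and then to read off the claimed values on Boolean inputs with small weight. Fix $e_1,\ldots,e_t$ with $\prod_i p_i^{e_i} > w$ and $d=\max_i p_i^{e_i}$. For each $i\in[t]$, invoke the lemma with modulus $p_i^{e_i}$ and target $w$. This yields a multilinear $g_i\in\Z_{p_i}[z_1,\ldots,z_h]$ of degree at most $p_i^{e_i}-1\le d$ such that, for $x\in\zo^h$, $g_i(x)\equiv 0 \pmod{p_i}$ iff $\sum_j x_j\equiv w \pmod{p_i^{e_i}}$, and $g_i(x)\equiv 1$ otherwise.

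Next I lift the coefficients and combine them. I would define $f\in\Z_m[z_1,\ldots,z_h]$ coefficientwise by $\mathrm{coef}_M(f)=\CRT(\mathrm{coef}_M(g_1),\ldots,\mathrm{coef}_M(g_t))$ for each multilinear monomial $M$ of degree at most $d$. Since CRT commutes with evaluation, $f(x)\bmod p_i = g_i(x)$ for every $x\in\zo^h$. Hence $f$ is multilinear of degree at most $d$, and each of its values on Boolean points lies in $\{0,1\}$ modulo every $p_i$.

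The key step is the case analysis, which I expect to be the only substantive point. If $\sum_j x_j = w$, then each congruence holds, so $f(x)\equiv 0 \pmod{p_i}$ for all $i$, giving $f(x)=0 \bmod m$. If $\sum_j x_j<w$, set $k=\sum_j x_j$ and suppose for contradiction that $k\equiv w\pmod{p_i^{e_i}}$ for every $i$. Then $\prod_i p_i^{e_i}$ divides $w-k$, and $0<w-k\le w<\prod_i p_i^{e_i}$, which is impossible. So some $g_i(x)\equiv 1$, and therefore $f(x)$ is a nonzero element of $\Z_m$ that is $0/1$ modulo each prime. That is exactly a member $s$ of the canonical set $S$, which is the meaning of ``$s \bmod m$'' in the statement.

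Finally, uniformity. I would run the algorithm of the lemma for each $i$ in turn. The plan is to enumerate monomials of degree at most $d$ as sorted subsets of $[h]$, stored in $O(d\log h)$ bits. For each monomial we query the coefficient from every $g_i$; if the lemma's algorithm only prints its output rather than answering queries, we rerun it and scan for that coefficient. We then output the CRT combination, which needs $O(\log m)$ extra space for the arithmetic. The total is $O(d\log h+\log m)$ space, as claimed. The only delicate part is this recomputation bookkeeping, and it is routine.
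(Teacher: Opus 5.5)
Your proposal is correct and follows the argument the paper treats as immediate (the CRT step of DGY11's Corollary 38). You combine the per-prime polynomials from the Gopalan lemma coefficientwise via $\CRT$, and use $\prod_i p_i^{e_i} > w$ to show that a weight below $w$ cannot be congruent to $w$ modulo every $p_i^{e_i}$. You then obtain the $O(d\log h+\log m)$ uniformity by re-running the per-prime printers for each monomial.
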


\begin{claim}\label{clm:map}
    Let $N=\binom{h}{w}$. There is a space $O(\log N)$-computable bijection from $[N]$ to sets $T\subseteq [h]$ of size $w$.
\end{claim}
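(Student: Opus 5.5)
The plan is to use the combinatorial number system: a bijection between $[N]$ (shifted to $\{0,\ldots,N-1\}$) and $w$-subsets of $[h]$, ordered colexicographically. A $w$-subset $T=\{c_1<c_2<\cdots<c_w\}\subseteq\{0,\ldots,h-1\}$ is sent to
\[
\mathrm{rank}(T)=\sum_{k=1}^{w}\binom{c_k}{k}.
\]
The classical fact I will use is that this map is a bijection onto $\{0,\ldots,N-1\}$. The proof is by induction on $w$, using the identity $\sum_{k=1}^{w}\binom{c-w+k-1}{k}<\binom{c}{w}$, which follows from Pascal's rule $\binom{c}{w}=\sum_{j<c}\binom{j}{w-1}$. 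Given this, the claimed bijection is $i\mapsto \mathrm{rank}^{-1}(i-1)$, and it remains to compute the inverse in space $O(\log N)$.

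The unranking algorithm is greedy, processing $k=w,w-1,\ldots,1$ while maintaining a residual $r$ initialized to $i-1$. At step $k$ it finds the largest $c$ (below the previously chosen element) with $\binom{c}{k}\le r$, outputs $c_k=c$, and sets $r\gets r-\binom{c}{k}$. Correctness is exactly the uniqueness part of the combinatorial number system: the inequality above forces the top element to be the greedy choice. In the application one needs membership queries ``is $j\in T$?'' rather than printing $T$. These are answered by rerunning the unranking and checking whether $j$ is ever output, which costs only time.

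For the space bound, the workspace holds $r<N$, the counters $k\le w$ and $c\le h$, and a candidate binomial $\binom{c}{k}$. When $c<h$, every binomial compared satisfies $\binom{c}{k}\le\binom{h}{w}$ or is at least large enough to be compared against $r<N$. So each value either fits in $O(\log N)$ bits or can be truncated once it exceeds $N$. Moreover $\log h$ and $\log w$ are $O(\log N)$ whenever $1\le w<h$; the cases $w\in\{0,h\}$ are trivial.

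The one genuinely delicate step is computing $\binom{c}{k}$ in space $O(\log N)$. I would compute it iteratively via $\binom{c}{j}=\binom{c}{j-1}\cdot(c-j+1)/j$ for $j=1,\ldots,k$. Every intermediate value is an exact integer, and the iteration stops (reporting ``$>N$'') as soon as a value exceeds $N$. Because the sequence is nondecreasing up to $j\le c/2$, truncation there is sound. For $k>c/2$, I first replace $k$ by $c-k$ so that the sequence is monotone throughout. Each multiplication and division acts on $O(\log N+\log h)$-bit numbers, which can be done in linear space. This gives an $O(\log N)$-space computable bijection, as claimed.
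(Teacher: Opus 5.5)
Your proposal is correct and follows the paper's proof: both use the combinatorial number system $K=\sum_{k}\binom{c_k}{k}$ with greedy unranking (at step $k$, take the largest $c_k$ with $\binom{c_k}{k}$ at most the residual), keeping all stored quantities in $O(\log N)$ bits. Your hockey-stick argument for uniqueness and your truncated computation of $\binom{c}{k}$ (stop once it exceeds $N$) make explicit details the paper treats as obvious; the truncation also covers candidate binomials exceeding $N$, a case the paper's remark that ``all values are bounded by $N$'' passes over.
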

\begin{proof}
    For a given combination $T$, denote the elements of $T$ in decreasing order as $c_w>\ldots>c_1\ge 0$. We express this combination as the number
    \[
    K = \binom{c_w}{w}+\binom{c_{w-1}}{w-1}+\ldots \binom{c_1}{1}.
    \]
    There is a greedy algorithm that prints $c_w,\ldots,c_1$ given $K$ that runs in space $O(\log N)$ as follows. We let $S=0$ and $i=w$ and choose $c_i$ maximal such that 
    \[
    \binom{c_i}{i}\le K-S \text{ and set } S\la S+\binom{c_i}{i}.
    \]
    Since we can store $S$ using $O(\log N)$ bits since all values are bounded by $N$ (so we can obviously compute coefficients in space $O(\log N)$) we are done.
\end{proof}

\begin{proof}[Proof of Lemma 11]
    We construct a polynomial $S$-matching family and apply~\Cref{lem:Sud}.

    We work with subsets $T\subseteq [h]$ of size $w$. We use the $O(\log N)$-space function $[N]\ra \zo^T$ of~\Cref{clm:map} and hence index these sets as numbers in $[N]$ without loss of generality.
    
    For each such set $T$, letting $f$ be the polynomial of~\Cref{cor:poly}, we define $f_T$ as the polynomial where we set $z_j=0$ for $j\notin T$. We can clearly construct this polynomial in space $O(\log k+d\log h+\log m)$.
    Finally, let $x_T\in \zo^h$ be the indicator of $T$. 
    We WLOG extend $\supp(\cF)$ to be all monomials of degree at most $d$, which we can enumerate over in space $O(d\log h)$. Thus, the total space required to print the polynomial family (and hence the matching vector family) is $O(\log k+(d\log h)+\log m)$ as desired.
\end{proof}

\section{Proof of~\ts{\cref{cor:mvspecific}}{Matching Vector Parameters}}\label{sec:mvspecificproof}

    \cref{item:grolmusz} is immediate and exactly the result proven by~\cite{DBLP:journals/combinatorica/Grolmusz00} (by taking $w = \Theta(\ell/\log \ell)$).
    \cref{item:manyprimes} also follows directly from~\cref{thm:mvgeneral} by taking $w = \Theta(\ell/\sqrt{\log \ell})$ and $t = \sqrt{\log w}$.
    We know by the prime number theorem that $m = t^{t+o(t)}$.
    The dimension can be bounded above by $(mw)^{1/t} + 1$ times $\binom{\lceil w^{1+1/t} \rceil}{\lfloor (mw)^{1/t} \rfloor}$,
    noting that the last binomial coefficient must be the largest since $t^{1+o(1)} < w/2 \Rightarrow m^{1/t} < w/2 \Rightarrow (mw)^{1/t} < w^{1+1/t}/2$.
    To bound the first factor (the number of binomial coefficients), note that:
    \begin{align*}
        (mw)^{1/t} &\leq t^{1+o(1)} w^{1/t} \\
        &= (\log w)^{1/2 + o(1)} 2^{\sqrt{\log w}} \\
        &\leq \exp(O(\sqrt{\log w})) \\
        &= \exp(O(\sqrt{\log \ell})).
    \end{align*}
    We can bound the largest binomial coefficient by:
    \begin{align*}
        \left(\frac{e\lceil w^{1+1/t} \rceil}{\lfloor (mw)^{1/t} \rfloor}\right)^{(mw)^{1/t}} &\leq \left(\frac{3w^{1+1/t}}{(mw)^{1/t}}\right)^{(mw)^{1/t}} \\
        &\leq (w/t^{1-o(1)})^{t^{1+o(1)} w^{1/t}} \\
        &= \exp\left(t^{1+o(1)} \cdot w^{1/t} \cdot (\log w - \log t + o(\log t))\right) \\
        &= \exp\left((\log w)^{1/2+o(1)} \cdot 2^{\sqrt{\log w}} \cdot \log w\right) \\
        &= \exp(\exp(O(\sqrt{\log w}))) \\
        &= \exp(\exp(O(\sqrt{\log \ell}))).
    \end{align*}
    Multiplying these two bounds implies the conclusion.
\qed

\end{document}